\documentclass{article}

\PassOptionsToPackage{numbers, compress}{natbib}

\usepackage[final]{neurips_2026}

\usepackage[utf8]{inputenc} 
\usepackage[T1]{fontenc}    
\usepackage{hyperref}       
\usepackage{url}            
\usepackage{booktabs}       
\usepackage{amsfonts}       
\usepackage{nicefrac}       
\usepackage{microtype}      
\usepackage{xcolor}         

\usepackage{amsthm}
\usepackage{amsmath}
\usepackage{amssymb}
\usepackage{algorithm}
\usepackage{algpseudocode}
\usepackage{bm}
\usepackage{graphicx} 
\usepackage{mathtools}
\usepackage{mathrsfs}
\usepackage{multirow}
\usepackage{caption}

\makeatletter
\providecommand{\@trackname}{}
\makeatother

\title{Laplace Variational Inference for Dirichlet Process Mixtures of Marked Poisson Point Processes
}

\author{Minsung Choi\textsuperscript{1} \quad Seonghyun Jeong\textsuperscript{1}\thanks{Corresponding author} \\
  \textsuperscript{1}Department of Statistics and Data Science, Yonsei University\\
  \texttt{\{choims117,sjeong\}@yonsei.ac.kr}
}

\begin{document}

\maketitle

\begin{abstract}
Marked point process data arise when events occur in a space with event-level marks. We study clustering of replicated marked Poisson point processes and introduce Dirichlet process mixtures of marked Poisson point processes, a Bayesian nonparametric model that jointly infers latent cluster structure, the number of clusters, and continuous mark-specific intensity surfaces. We use a squared link intensity representation to obtain tractable continuous domain likelihood terms without gridding or thinning. For posterior inference, we develop an efficient variational Bayes algorithm with a constrained Laplace approximation for the nonconjugate basis-coefficient block. The resulting coefficient update is formulated as a constrained optimization problem, which avoids the sign ambiguity and nodal-line issue of squared-link models. We further establish theoretical guarantees for mode finding optimization. We demonstrate the performance of the proposed model and algorithm through synthetic experiments and real-data analysis.

\end{abstract}


\section{Introduction}
\label{sec:intro}

Marked point-process data arise whenever each observational unit generates a finite random collection of events on a continuous domain together with event-level attributes. Marked Poisson point processes provide a natural probabilistic formalism for such data \cite{kingman1993poisson, daley2003introduction}.
Representative examples arise in biology \cite{Li2019TumorPathology}, in cosmology \cite{Martinez2010MarkConnection}, in the social sciences \cite{Mohler2014MarkedCrime}, and in sports analytics \cite{Jiao2021Bayesian}. 
In many modern applications, one observes a collection of such processes, one for each subject, and the central inferential goal is to identify latent groups that share similar intensity of events and mark distribution.

Despite the prevalence of marked point process data, clustering replicated marked Poisson point processes remains underdeveloped. The statistical challenge is to cluster process realizations without discarding the continuous nature of the domain, while simultaneously modeling mark-specific intensity structure, 
and quantifying uncertainty in both cluster assignments and intensity functions. Existing approaches typically relax at least one of these requirements. 

In this paper, we propose a Dirichlet process mixtures of marked Poisson point processes model for clustering replicated marked point data. 
Our framework jointly clusters marked Poisson point processes, learns the number of clusters nonparametrically, estimates continuous mark-specific intensity surfaces, and quantifies posterior uncertainty through scalable variational inference.
Moreover, by introducing an offset for each subject, the model separates overall volume from cluster-specific spatial and mark structure, so clustering is driven by spatial preference and mark composition.
Because mark-specific intensities are estimated jointly within each cluster, the model induces a spatially varying mark probability surface, directly captures dependencies between event occurrence and marks, and borrows statistical strength across subjects, thereby stabilizing inference in sparse regimes. 

We further develop an efficient variational Bayes algorithm for posterior inference. 
To the best of our knowledge, our proposed method is the first variational Bayes method developed for clustering marked Poisson point processes.
To maximize both computational efficiency and inferential fidelity while exploiting the computational advantages of the squared link, we combine variational inference with a constrained Laplace approximation for the nonconjugate basis coefficient parameters. 
We further provide rigorous theoretical support for the construction by showing that the relevant global mode lies in a linearly constrained feasible set and that the exponential weight is asymptotically concentrated on this set. Under the positivity restriction, this construction also resolves the global sign ambiguity and the nodal-line pathology inherent to squared-link models. The resulting algorithm is grid-free, computationally stable and efficient, and well suited to high-dimensional problems.


\subsection{Related works}

\paragraph{Inference for Poisson and Cox processes.}
Under the canonical log link, the resulting log-Gaussian Cox process (LGCP) provides a standard and flexible model for point-process intensity, but exact inference is hindered by the intractable likelihood integral, motivating discretization, finite-element representations, and integrated nested Laplace approximation (INLA)-type methods \cite{moller1998log,moller2004statistical,Rue2009INLA,Simpson2016GoingOffGrid,Teng2017Bayesian}. Alternative links and variational approximations were developed to mitigate these costs. \citet{Adams2009Tractable} introduced the sigmoidal Gaussian Cox process with thinning-based augmentation, while \citet{lloyd2015variational} proposed the first fully variational approach for continuous Gaussian-process-modulated Poisson processes under a squared link. \citet{John2018LargeScaleCP} extended this line to large-scale spatiotemporal settings using variational Fourier features and emphasized both scalability and the nodal-line issues. \citet{Aglietti2019Structured} further developed structured variational inference for continuous sigmoidal Cox processes. Related multi-process models such as \citet{Lloyd2016LPPA} and \citet{Ding2018BaNPPA} also use variational inference, but they decompose each observed process into shared latent rate functions rather than clustering whole replicated marked Poisson processes.

\paragraph{Clustering of point processes.}
Existing point-process clustering methods also leave important gaps.  \citet{Barrack2017NHPP} considered finite mixtures of non-homogeneous Poisson processes for event-time clustering on a common interval, but their method is restricted to one-dimensional unmarked data, requires the number of components and spline complexity to be specified a priori, and cannot provide posterior uncertainty of intensity surfaces on general domains. \citet{Phung2017RFS} proposed a Dirichlet process mixture of Poisson random finite sets, but the component model is tied to conjugate parametric families and includes an explicit within-component cardinality model.
As a result, clustering can be driven jointly by set size and simple feature distributions, which is not aligned with our goal of normalizing out subject-specific event volume and learning smooth continuous mark-dependent intensity surfaces. The closest structural analogue is \citet{Yin2021RowClustering}, who clustered repeated marked event sequences through a mixture of multi-level marked LGCPs, but their method is tailored to matrix-valued repeated observations, point-estimate centered, and requires selecting the number of clusters and a smoothing bandwidth. Hawkes- and neural-temporal-point-process clustering methods target excitation or neural sequence dynamics instead \cite{Xu2017DMHP,Zhang2022NTPPMix,Ding2023CNTPP}.

The rest of the paper is organized as follows. Section~\ref{sec:prelim} reviews preliminaries. Section~\ref{sec:model} introduces the proposed model. Section~\ref{sec:inference} presents the variational inference algorithm and its theoretical properties. Section~\ref{sec:experiments} reports synthetic experiments and real-data analysis. 
Section~\ref{sec:conclusion} concludes.

\section{Preliminaries}
\label{sec:prelim}

\subsection{Marked Poisson point process}


Let marked point data be denoted by $(\bm{Y},\bm{m}) = \{(\bm{y}_1,m_1), \ldots,(\bm{y}_N,m_N)\}$, where the set of points $\bm{Y} = \{\bm{y}_1, \ldots, \bm{y}_N\}$ represents a realization of a point process within a domain $\mathcal{B} \subseteq \mathbb{R}^H, H\geq 1$, and the mark $\bm{m} = \{m_1, \ldots ,m_N\}$ taking values in some space $\mathcal{M}$ denotes some characteristics associated with each point. 
Thus, $(\bm{Y}_i, m_i)$ is a random point in the product space $\mathcal{B} \times \mathcal{M}$.

We model the marked point data $(\bm{Y}, \bm{m})$ as a marked Poisson point process (MPPP), where the points $\bm{Y}$ are modeled as a realization of a Poisson point process (PPP) \cite{prekopa1958secondary, kingman1993poisson, daley2003introduction}.
PPP is defined via a non-atomic finite measure on the Borel $\sigma$-algebra of $\mathcal{B}$, $\Lambda(\cdot) : \mathscr{B}(\mathcal{B}) \rightarrow \mathbb{R}^+$, called the mean measure. 
For any Borel set $A \subseteq \mathcal{B}$, the number of points falling in $A$, denoted by $N(A) \in \mathbb{N}$, is a random variable and follows a Poisson distribution with the rate parameter $\Lambda(A)$. If $\Lambda$ is absolutely continuous with respect to the Lebesgue measure, then $\Lambda(A)=\int_A \lambda(\bm y)\,d\bm{y}$, where $\lambda=d\Lambda/d\bm{y}:\mathcal{B}\to \mathbb{R}^+$ is the Radon--Nikodym derivative of $\Lambda$, called the intensity function.
Furthermore, for any disjoint Borel sets $A_1, A_2 \subseteq \mathcal{B}$, the counts $N(A_1)$ and $N(A_2)$ are independent.
We write
$\bm Y \sim \mathcal{PP}(\Lambda)$,
to denote that \(\bm Y\) is modeled as a realization from a PPP with mean measure $\Lambda$.

Assuming that the marks $\bm{m}$ are conditionally independent given $\bm{Y}$, the joint likelihood of the marked Poisson process model is formulated as
\begin{align*}
\bm{L}(\bm \Theta|\bm{Y}, \bm{m}) = \left[\exp(-\Lambda(\mathcal{B})) \prod_{i=1}^N \lambda(\bm{y}_i) \right] \left[\prod_{i=1}^N \mathrm p(m_i \mid \bm{y}_i)\right],
\end{align*}
where the first term is the likelihood function of the PPP for the observed point data $\bm{Y}$, and $\mathrm p(m_i \mid \bm{y}_i)$ denotes the conditional density of the mark given point with respect to a dominating measure on $\mathcal{M}$, and $\bm \Theta = (\lambda, \mathrm p)$. 
By the marking theorem \cite{kingman1993poisson}, this factorization yields an MPPP. In Section~\ref{sec:model}, mark probabilities are induced by relative mark-specific intensities.

\subsection{Dirichlet process and Dirichlet process mixture}

A Dirichlet process \cite{Ferguson1973} is a distribution over a random probability measure. A random probability measure $G$ on a measurable space $(\mathcal{S}, \mathcal{F})$ 
follows a Dirichlet process if, for any finite measurable partition $\{A_1, \ldots, A_k\}$ of $\mathcal{S}$, the random vector $(G(A_1), \ldots, G(A_k))$ follows a Dirichlet distribution:
$(G(A_1), \ldots, G(A_k)) \sim \mathrm{Dir}\bigl(\alpha G_0(A_1), \ldots, \alpha G_0(A_k)\bigr)$.
We denote $G \sim \mathcal{DP}(\alpha, G_0)$, where $G_0$ is the base measure satisfying $\mathbb{E}[G(A)] = G_0(A)$ for all measurable $A$, and $\alpha > 0$ is the concentration parameter governing the degree of shrinkage of $G$ toward $G_0$. 

The Dirichlet process mixture (DPM) \cite{Antoniak1974} is an infinite mixture model that automatically determines the number of mixture components based on the data. The DPM provides a principled nonparametric Bayesian framework for density estimation, latent model, and clustering \cite{Ferguson1973, Neal2000}. 
The DPM can be formally defined by the following hierarchical structure:
\begin{align*}
G \sim \mathcal{DP}(\alpha, G_0), \quad
\Theta_i \mid G \sim G, \quad
y_i \mid \Theta_i \sim F(\Theta_i) \quad i = 1, \ldots, n,
\end{align*}
where $F(\Theta_i)$ denotes the distribution of observation $y_i$ given $\Theta_i$.

A constructive definition of $G$ is given by the stick-breaking representation \cite{Sethuraman1994}. This formulates the discrete random measure $G$ as an infinite weighted sum of point masses:
\begin{align}
G = \sum_{k=1}^{\infty} \pi_k \,\delta_{\theta_k}, \quad \Theta_k \overset{\mathrm{iid}}{\sim} G_0, \quad
\pi_k = \phi_k \prod_{j=1}^{k-1}(1 - \phi_j), \quad \phi_k \overset{\mathrm{iid}}{\sim} \mathrm{Beta}(1,\, \alpha),
\label{eq:dp-stick-breaking}
\end{align}
where $\delta_{\Theta_k}$ denotes the Dirac delta measure concentrated at $\Theta_k$. The sequence of weights $\{\pi_k\}$ satisfies $\sum_{k=1}^{\infty} \pi_k = 1$ almost surely. 
Under the stick-breaking representation, the DPM admits the following equivalent form \cite{Ishwaran2001}. Let $z_i \in \mathbb{N}$ denote the latent mixture membership of the $i$-th observation. Then
\begin{align*}
z_i \mid \{\pi_k\}_{k \geq 1} \overset{\mathrm{iid}}{\sim} \sum_{k=1}^{\infty} \pi_k \, \delta_k, \quad 
y_i \mid z_i, \{\Theta_k\}_{k \geq 1} \overset{\mathrm{ind.}}{\sim} F(\cdot \mid \Theta_{z_i}) \quad i = 1, \ldots, n.
\end{align*}

\subsection{Variational inference}

Variational inference (VI) reformulates the estimation of intractable posterior distributions into an optimization problem \cite{Jordan1999, Blei2017}. VI approximates the exact posterior $p(\Theta \mid y)$ by introducing a family of distributions, $\mathcal{Q}$. The objective is to find the optimal distribution $q^*(\Theta) \in \mathcal{Q}$ that minimizes the Kullback-Leibler (KL) divergence to the true posterior:
\begin{align*}
q^*(\Theta) = \arg\min_{q \in \mathcal{Q}} \text{KL}(q(\Theta) \parallel p(\Theta \mid y)).
\end{align*}

Since computing the KL divergence inherently requires intractable integration, this optimization is equivalently formulated as maximizing the evidence lower bound (ELBO)
\begin{align*}
\text{ELBO}(q) = \mathbb{E}_{q(\Theta)}[\log p(y, \Theta)] - \mathbb{E}_{q(\Theta)}[\log q(\Theta)].
\end{align*}

The mean-field approximation restricts the variational family $\mathcal{Q}$ to mutually independent distributions, such that $q(\Theta) = \prod_{j=1}^J q_j(\Theta_j)$. Under the mean-field assumption, the optimal variational distribution for each latent component $\Theta_j$ takes the explicit form \cite{Wainwright2008, Blei2017}
\begin{align*}
q_j^*(\Theta_j) \propto \exp(\mathbb{E}_{-j}[\log p(y, \Theta)]),
\end{align*}
where $\mathbb{E}_{-j}[\cdot] \equiv \mathbb{E}_{q_{-j}(\Theta_{-j})}[\cdot]$ denotes the expectation with respect to $q_{-j}(\Theta_{-j}) = \prod_{\ell \neq j} q_\ell(\Theta_\ell)$.

\section{Model}
\label{sec:model}

In this section, we formulate the 
Dirichlet process mixtures of marked Poisson point processes (DPM-MPPP) 
model, which provides a principled Bayesian approach for the joint inference of latent cluster assignments and marked Poisson point process.
For each subject $i \in \{1, 2, \ldots, n\}$, we observe $N_i$ pairs of marked point data, denoted as $(\bm{Y}_i, \bm{m}_i) = \{(\bm y_{ij},m_{ij})\}_{j=1}^{N_i}$, and a scalar variable $T_i$. Here, $T_i$ serves as an offset in the Poisson intensity formulation, and thus represents the level of exposure-such as time, area, or size-for subject $i$.
The point data $\bm{Y}_i$, observed within the domain $\mathcal{B} \subseteq \mathbb{R}^H, H\geq 1$, is modeled as a realization of a PPP
, and the mark $\bm{m}_i$ is assumed to take values from a finite discrete set. In this work, we assume a binary mark $m_{ij}\in\{0,1\}$ for simplicity. $z_i \in \{1,2,3,\ldots\}$ indicates the cluster assignment of subject $i$.
For notational convenience, we define the index set of events for subject $i$ associated with mark $m \in \{0,1\}$ as $\mathcal{I}_{im} = \{j \in \{1, \ldots, N_i\} : m_{ij} = m\}$ and the corresponding subset of points as $\bm{Y}_{im} = \{\bm{y}_{ij} : j \in \mathcal{I}_{im}\}$. We denote the mean measure of our model as $\Lambda_{km}, \ k \in \mathbb N, \ m\in\{0,1\}$, where $\Lambda_{km}(A) = \int_A \lambda_{km}(\bm y) d\bm{y}$ and $\lambda_{km}(\bm y)$ is the corresponding intensity function.
Crucially, we assume that subjects assigned to the same cluster $k$ share an identical baseline mean measure for each mark $m$, $\Lambda_{k m}(\cdot)$. 
By jointly estimating each cluster’s baseline intensity surface from the aggregated data of its affiliated subjects, the model mitigates data sparsity, borrows statistical strength across subjects. 
The model is formulated as follows. 
Conditional on the cluster assignment $z_i$, we assume that $\bm{Y}_{i0}$ and $\bm{Y}_{i1}$ are modeled as realizations from independent Poisson point processes with mean measures $T_i\Lambda_{z_i0}$ and $T_i\Lambda_{z_i1}$, respectively:
\begin{align*}
\bm{Y}_{im} \mid z_i, \Lambda_{z_im} &\sim \mathcal{PP}(T_i\Lambda_{z_im}),
\end{align*}
Note that the intensity function is explicitly scaled by the offset variable $T_i$, which disentangles the intensity into the overall scale and unit intensities. This allows the DPM to cluster subjects based purely on their spatial tendencies, marginalizing the effect of varying total volume of each subject.

Assuming an independent Poisson process for each mark is likelihood-equivalent to a hierarchical formulation where the locations follow a Poisson process, and the marks are conditionally distributed as a spatially varying Bernoulli process:
\begin{align*}
\bm{Y}_i \mid z_i, \Lambda_{z_i} &\sim \mathcal{PP}(T_i\Lambda_{z_i}), \quad \text{where} \quad \Lambda_{z_i}(\cdot) = \Lambda_{z_i0}(\cdot) + \Lambda_{z_i1}(\cdot), \\
m_{ij} \mid \bm{y}_{ij}, z_i &\sim \text{Bernoulli}(p_{z_i}(\bm y_{ij})), \quad \text{where} \quad p_{z_i}(\bm y_{ij}) = \frac{\lambda_{z_i,1}(\bm{y}_{ij})}{\lambda_{z_i,0}(\bm{y}_{ij}) + \lambda_{z_i,1}(\bm{y}_{ij})},
\end{align*}
where $\bm Y_i$ denotes the observed finite set of points for subject $i$.
This is a direct result of the superposition theorem and the marking theorem \cite{kingman1993poisson}. 
This equivalence reveals a modeling advantage of our approach: the model naturally induces spatially varying mark probabilities, with event--mark dependence governed by the relative magnitudes of the mark-specific intensities.

The baseline intensity function $\lambda_{km}(\bm y)$ is modeled as follows:
\begin{align*}
\lambda_{km}(\bm y) = h\left(W_{km}(\bm y)\right) 
, \quad 
W_{km}(\bm y) = \bm{B}(\bm y)^\top\bm{\theta}_{km},
\end{align*}
where $h(x) = x^2$ is the squared link function that maps the latent process \(W_{k m}\) to a nonnegative intensity function, $\bm{B}(\bm y) \in \mathbb R^d$ is a vector of nonnegative basis functions evaluated at location $\bm y$, and $\bm{\theta}_{km} \in \mathbb R^d$ is the corresponding coefficient vector. 
It is required that the basis functions $\bm{B}(\cdot)$ are strictly nonnegative, in order to guarantee the existence of a feasible set defined by linear constraints. 
The rationale for the squared-link and the related theoretical guarantees are discussed in Sections~\ref{subsec:squared-link} and~\ref{subsec:theory}.

We assign the following general form of prior on $\bm{\theta}_{k,m}$ and an Inverse Gamma prior on the variance parameters $\tau_{km}^2$
\begin{align*}
P(\bm{\theta}_{km}|\tau^2_{km}) \propto \exp\left(-\frac{1}{2\tau^2_{km}} \bm{\theta}_{km}^\top\bm{\Omega}\bm{\theta}_{km}\right),
\quad
\tau^2_{km} \sim \mathrm{IG}(a_0, b_0).
\end{align*}
This formulation provides a highly flexible prior structure depending on the specification of the positive semidefinite precision matrix $\bm \Omega$. If $\bm \Omega$ is chosen to be an invertible, positive-definite matrix, this prior corresponds to a proper Gaussian distribution with mean 0 and precision matrix $\tau_{km}^{-2}\bm \Omega$, consequently inducing a Gaussian process prior on $W_{km}(\cdot) = \bm{B}(\cdot)^\top\bm{\theta}_{km}$ \cite{rasmussen2006gaussian, bishop2006pattern}. 
Alternatively, the matrix $\bm \Omega$ can be explicitly tailored to the chosen basis functions $\bm{B}(\cdot)$. For instance, when utilizing a B-spline basis expansion \cite{deboor2001practical}, $\bm \Omega$ can be constructed as an appropriate p-spline penalty matrix \cite{lang2004bayesian}. Note that since p-spline penalty matrix $\bm \Omega$ is singular, this leads to an improper Gaussian prior. 

Following the stick-breaking representation of the DP in \eqref{eq:dp-stick-breaking}, let
$\{\pi_k\}_{k \geq 1}$ denote the random mixture weights. 
We draw cluster indicators $z_i,\ i \in \{1,\ldots,n\}$ as : $z_i \mid \{\pi_k\}_{k \geq 1} \overset{\mathrm{iid}}{\sim} \sum_{k=1}^{\infty} \pi_k \, \delta_k.$

\subsection{Rationale for the squared link}
\label{subsec:squared-link}


The canonical form of link function relating the intensity function to the latent process is the log link, which yields the LGCP 
\cite{moller1998log, moller2004statistical}. Although common and flexible, it inherently poses significant computational challenges due to the doubly-intractable integral in its likelihood function. 
Existing remedies rely on additional approximation layers, including gridding, finite-element representations, and INLA-type approximations \cite{Simpson2016GoingOffGrid,Rue2009INLA,Teng2017Bayesian}. These strategies suffer from inherent challenges regarding the choice of grid discretization or mesh, becoming increasingly unmanageable as the geometry of the domain becomes irregular or the dimension of the domain increases \cite{Simpson2016GoingOffGrid,Teng2017Bayesian,John2018LargeScaleCP}.

Several alternative links, such as sigmoid, linear, and squared link, have been proposed to avoid the doubly-intractable integral. 
A scaled sigmoid link, $\lambda(\bm y)=\lambda^\star \sigma\{W(\bm y)\}$, where $\sigma(u)=\{1+\exp(-u)\}^{-1}$ and $\lambda^\star>0$, avoids direct evaluation of the integral through thinning-based augmentation 
\cite{Adams2009Tractable}.
Subsequent works improved this construction through adaptive thinning, P\'olya--Gamma augmentation, and structured variational approximations \cite{Gunter2014Efficient,Donner2018Sigmoidal,Aglietti2019Structured}. Nevertheless, these methods require posterior inference over augmented points and auxiliary variables, and their computational cost depends critically on the number of latent thinned points, whose number can grow substantially in large domains or high-dimensional problems 
\cite{Adams2009Tractable,Gunter2014Efficient,Aglietti2019Structured}. Moreover, the sigmoid link does not yield log-concave likelihood, so the theory developed in Section~\ref{subsec:theory} is not available.

A linear link makes the integration of intensity over the domain analytically trivial.
However, positivity must then be enforced, i.e., \(\inf_{\bm y \in \mathcal{B}}\bm B(\bm y)^\top \bm{\theta}_{km}> 0\), which severely complicates variational inference. If one instead adopts positive variational families, the data term $\sum_{j\in I_{im}} \log(\bm{B}(\bm{y}_{ij})^\top \bm{\theta}_{km})$ remains nonconjugate.  In particular, under multivariate truncated-Gaussian variational distributions, the required expectations do not admit closed forms and must be evaluated numerically
. Alternatively, positive factorized families such as Gamma variational distributions preserve nonnegativity but do not retain posterior dependence among basis coefficients, which is undesirable. 

These considerations motivate our adoption of the squared link function \cite{lloyd2015variational}. 
Its key computational advantage is that the integrated intensity and the expectation of the data term in likelihood become analytically tractable, enabling closed-form evaluation of the ELBO. 
Thus no additional approximation layer
is required. 
However, closed-form coordinate-ascent updates for the basis coefficients are still unavailable, and their approach therefore relies heavily on direct numerical optimization of the ELBO. In high-dimensional settings, this can become computationally prohibitive and unstable. Moreover, the squared link introduces both a global sign ambiguity and the nodal-line problem, thus optimization can be sensitive to initialization and very susceptible to undesirable local modes \cite{John2018LargeScaleCP}.

Our strategy is to retain the tractable-integral advantage of the squared link while resolving these optimization difficulties. 
To maximize inference efficiency, we apply a Laplace approximation to the variational distribution of the basis coefficients, which do not admit closed-form updates. Moreover, Section~\ref{subsec:theory} establishes rigorous theoretical guarantees for our approach. 

\section{Inference}
\label{sec:inference}

In this section, we propose an efficient variational inference algorithm designed to jointly infer the latent cluster assignments and the intensities of the marked Poisson processes on the fully continuous domain. 
We incorporate a Laplace approximation into the variational inference framework to facilitate highly efficient inference while maintaining the fidelity of the approximate posterior. 
In support of this approach, we provide rigorous theoretical guarantees for the constrained optimization procedure.

\subsection{Variational inference}
\label{subsec:vi}

To achieve tractable inference for the infinite-dimensional Dirichlet process, we adopt the standard truncated stick-breaking representation \cite{blei2006variational}, setting a maximum truncation level $K$. 
Let 
$\bm\phi := (\phi_1,\ldots,\phi_{K-1}), \ \bm z := (z_1,\ldots,z_n), \ \bm\theta := \{\bm\theta_{km}: k=1,\ldots,K,\ m\in\{0,1\}\}, \ \bm\tau^{2}:=\{\tau_{km}^2: k=1,\ldots,K,\ m\in\{0,1\}\}$.
We adopt the mean-field assumption, which factorizes the variational distribution of parameters $(\bm{\phi}, \bm{z}, \bm{\theta}, \bm{\tau^2})$ as follows.
\begin{align*}
 q(\bm{\phi}, \bm{z}, \bm{\theta}, \bm{\tau^2})
 =
 \left[\prod_{k=1}^{K-1} q(\phi_k)\right]
 \left[\prod_{i=1}^n q(z_i)\right]
 \left[\prod_{k=1}^K q(\bm{\theta}_{k0})q(\bm{\theta}_{k1})\right]
 \left[\prod_{k=1}^K q(\tau_{k0}^2)q(\tau_{k1}^2)\right],
\end{align*}
where we choose the following forms for the variational distributions of each parameter group $q(z_i) = \mathrm{Discrete}(\nu_{i 1}, \dots, \nu_{i K})$, $q(\bm{\theta}_{k m}) = \mathrm{Normal}(\bm{\mu}_{\theta_{k m}}, \bm{\Sigma}_{\theta_{km}})$, $q(\tau_{km}^2) = \mathrm{IG}(\alpha_{q_{km}}, \beta_{q_{km}})$,  $q(\phi_k)=\mathrm{Beta}(\gamma_{k1},\gamma_{k2}), \ k=1,\dots,K-1$.

A derivation of the full ELBO is provided in Appendix~\ref{app:elbo}, and the closed-form coordinate updates for the conjugate blocks are collected in Appendix~\ref{app:cavi}.

\subsection{Laplace approximation and optimization}
\label{subsec:laplace}

In our variational inference framework, the basis coefficients $\bm{\theta}_{km}$ do not admit exact closed-form updates, while the other latent variables and parameters do. The method of \cite{lloyd2015variational} relies heavily on the numerical optimization of the full ELBO to update the variational posterior, as the squared link function provides a closed-form ELBO but does not afford closed-form updates for the parameters. However, optimizing the full variational mean vector and covariance matrix becomes computationally prohibitive in high-dimensional settings. 
To circumvent this computational bottleneck, we propose a Laplace approximation to the variational update of $\bm{\theta}_{km}$ \citep{wang2013variational}. 

Let $J(\bm{\theta}_{km})$ denote the variational objective function, defined as the kernel of the expected log-joint probability with respect to the variational distributions of all other parameters, $\mathbb{E}_{q_{-\bm{\theta}_{km}}}[\log p(\bm{Y}, \bm{\phi}, \bm{z}, \bm{\theta}, \bm{\tau^2})]$. That is,
\begin{align*}
J(\bm{\theta}_{km}) = -\bm{\theta}_{km}^\top \bm{A}_{km} \bm{\theta}_{km} + 2\sum_{i=1}^n \nu_{ik} \sum_{j \in \mathcal{I}_{im}} \log\left| \bm{\theta}_{km}^\top \bm{B}(\bm{y}_{ij}) \right| ,
\end{align*}
where $\bm{A}_{km} = (\sum_{i=1}^n \nu_{ik} T_i) \bm M + (\eta_{km}/2)\bm \Omega$, with $\bm M = \int_{\mathcal{B}} \bm B(\bm y) \bm B(\bm y)^\top d\bm{y}$, and the expected precision $\eta_{km} = \mathbb{E}_q[{\tau_{km}^{-2}}]$. Detailed derivations are provided in Appendix~\ref{app:theta-objective}.

A Laplace approximation requires finding a regular mode of
\(J(\bm{\theta}_{km})\) and evaluating the Hessian at that
mode. However, unconstrained mode finding over \(\mathbb R^d\) is
problematic. The squared link induces a global sign symmetry, because
\(\bm{\theta}_{km}\) and \(-\bm{\theta}_{km}\) generate the same intensity
surface. Thus any mode has a sign-symmetric counterpart. Moreover, the
zero-crossing hyperplanes
$
\bm{\theta}_{km}^\top \bm B(\bm y_{ij})=0
$
partition the parameter space into multiple sign regions. Additional
local modes may arise in sign-changing regions, leading to nodal-line
solutions and sensitivity to initialization
\cite{John2018LargeScaleCP}. Hence the unrestricted objective is not a
suitable target for a stable Laplace update.

To obtain an identifiable Laplace center, we restrict the
mode-finding problem to the strictly positive chamber. For fixed constants
\(\delta>0\) and \(R>0\), define
$
D_{\delta}
\coloneq
\left\{
\bm{\theta}\in\mathbb{R}^d:
\|\bm{\theta}\|_2\le R,\ 
\inf_{\bm y\in\mathcal B} \bm B(\bm y)^\top \bm\theta \ge \delta
\right\}$.
It is obvious that $ J(\bm\theta)$ is concave on \(D_\delta\) whenever
\(\bm A_{km}\) is positive semidefinite, whereas it is not globally concave on $\mathbb R^d$. Hence the positive chamber
mode-finding problem is a constrained convex minimization problem. This structure rules out spurious local modes
within the positive chamber.
The constraint also keeps \(s_{\bm\theta}\) uniformly away from zero.
Thus the linear term never crosses zero on \(\mathcal B\), eliminating
nodal-line configurations within the feasible set. Moreover, the
restriction to the positive chamber selects one representative from the
sign-symmetric pair \(\bm\theta\) and \(-\bm\theta\). It remains to
justify that this constrained optimization does not discard the relevant
mode or asymptotically relevant mass. This is established in
Section~\ref{subsec:theory}.

Once the optimal positive mode $\bm \mu_{\bm{\theta}_{km}}^* = \arg\max_{\bm{\theta}_{km} \in D_{\delta}} J(\bm{\theta}_{km})$ is obtained, the Laplace approximation is completed by computing the Hessian matrix $\bm H(\bm{\mu}_{\bm{\theta}_{km}}^*)$ evaluated at this mode:
\begin{align*}
\bm H(\bm{\mu}_{\bm{\theta}_{km}}^*) = -2\bm{A}_{km} - 2 \sum_{i=1}^n \nu_{ik} \sum_{j \in \mathcal{I}_{im}} \frac{\bm{B}(\bm{y}_{ij}) \bm{B}(\bm{y}_{ij})^\top}{(\bm{\mu}_{\bm{\theta}_{km}}^{*\top} \bm{B}(\bm{y}_{ij}))^2}.
\end{align*}
Finally, the variational posterior covariance matrix for $\bm \theta_{km}$ is analytically updated as the inverse of the negative Hessian: $\bm{\Sigma}_{\bm{\theta}_{km}}^* = [-\bm H(\bm{\mu}_{\bm{\theta}_{km}}^*)]^{-1}$.

\subsection{Theoretical supports}
\label{subsec:theory}


We now provide theoretical support for the constrained Laplace
approximation in Section~\ref{subsec:laplace}. This requires addressing
two questions. The first concerns the regularity of the strictly positive chamber
mode. On \(D_\delta\), the coefficient objective is concave in
\(\bm\theta_{km}\), giving a tractable mode-finding problem. However,
the positive chamber, where \(s_{\bm\theta}\) is strictly positive on
\(\mathcal B\), contains sequences approaching the zero boundary. Near
this boundary, the logarithmic criterion becomes singular and a standard
Laplace expansion may become nonregular.  We show that, under mild
regularity conditions, the population
objective has a unique maximizer over this positive chamber, that
this maximizer lies in \(D_\delta\), and that the empirical constrained
mode over \(D_\delta\) converges to it.

The second question concerns whether restricting attention to the
positive chamber, rather than allowing sign-changing configurations, is
justified. 
Because the squared link is invariant under \(\bm\theta\mapsto-\bm\theta\), 
the positive and negative chambers represent the same intensity.
Sign-changing
coefficients, however, may create nodal-line pathologies. We show that strongly sign-changing competitors have smaller empirical
modes than the same-sign chambers and, after exponentiating
the empirical objective, their total unnormalized contribution 
is asymptotically negligible relative to
the contribution from the positive chamber. These results justify using
the constrained strictly positive chamber mode and the associated local mass for
the Laplace variational update.

Throughout this subsection, we fix cluster--mark block $(k,m)$ and suppress the block indices. Define
$a := \sum_{i=1}^n \nu_{ik} T_i$, $\eta := \mathbb{E}_q(\tau_{km}^{-2})$, $\lambda_P := {\eta}/({2a})$, $s_{\bm{\theta}}(\bm y) := \bm B(\bm y)^\top \bm\theta, 
\ \bm M := \int_{\mathcal B} \bm B(\bm y)\bm B(\bm y)^\top d\bm{y}$.
The number of subjects $n$ is fixed, and the asymptotic statements are understood along an implicit sequence of datasets for which the offset variables $T_i$ increase. 
All regularity conditions, lemmas, and proofs are deferred to Appendix~\ref{app:local-theory}. Our regularity conditions are typically mild and can be easily satisfied by a suitable choice of basis functions, such as B-splines, as illustrated in Examples~\ref{exm:bs1}--\ref{exm:bs4}. 


\makeatletter
\@ifundefined{c@maintheorem}{\newcounter{maintheorem}}{}
\makeatother
\setcounter{maintheorem}{0}
\renewcommand{\themaintheorem}{\arabic{maintheorem}}

\subsubsection{Consistency of the constrained empirical mode}

The first result supports the Laplace approximation centered at the
constrained  positive chamber mode. The positive chamber $D \coloneqq \left\{ \bm\theta\in\mathbb R^d : \|\bm\theta\|_2\le R,\ \inf_{\bm y\in\mathcal B} s_{\bm{\theta}}(\bm y)>0 \right\}$
contains sequences approaching the zero boundary, where a standard quadratic Laplace expansion
may become nonregular. The theorem first identifies the population
maximizer over \(D\) and shows that it is separated from this zero
boundary, lying in \(D_\delta\). It then studies the empirical mode
computed over the compact core \(D_\delta\).

We define the normalized version of $J$ on $D$:
\begin{align*}
\widetilde J(\bm \theta) = -\bm \theta^\top \bm M\bm \theta + \frac{2}{a}\sum_{i=1}^n \nu_{ik}\sum_{j\in\mathcal I_{im}}\log s_{\bm{\theta}}(\bm{y}_{ij}) - \lambda_P\,\bm \theta^\top \bm\Omega\,\bm\theta,
\qquad
\bm\theta\in D.
\end{align*}
The deterministic population criterion associated with $\widetilde J$ is
\begin{align*}
Q(\bm \theta) = -\bm \theta^\top \bm M \bm\theta + 2\int_{\mathcal B}\lambda_0(\bm y)\log s_{\bm{\theta}}(\bm y)\,d\bm{y} - \bar\lambda\,\bm\theta^\top\bm\Omega\,\bm\theta,
\qquad
\bm\theta\in D,
\end{align*}
where $\lambda_0$ denotes the data-generating baseline intensity for the cluster--mark block, and $\bar\lambda\ge 0$ is the deterministic population benchmark of
the empirical penalty level $\lambda_P$. 
We only require that
\(\lambda_P\) be close to \(\bar\lambda\) 
under the penalty discrepancy condition in Assumption~\ref{ass:app-penalty}. In many standard settings,
such as when the normalized prior penalty vanishes with the exposure
scale, one may take \(\bar\lambda=0\).

\refstepcounter{maintheorem}\noindent\textbf{Theorem~\themaintheorem\ (Constrained empirical mode consistency).}\label{thm:main-interiority}
\textit{Assume the conditions of Theorems~\ref{thm:app-positive-mode} and~\ref{thm:app-uniform-core} hold. 
Then the maximization of $Q$ over $D$ has a unique maximizer
$\bm\theta^\star\in D_\delta$. Let}
\begin{align*}
\widehat{\bm\theta}_\delta
\in
\arg\max_{\bm \theta\in D_\delta}\widetilde J(\bm\theta).
\end{align*}
\textit{Then,}
$\widehat{\bm\theta}_\delta \xrightarrow{p} \bm\theta^\star$.

Theorem~\ref{thm:main-interiority} shows that the population maximizer
over 
\(D\) is separated from the zero boundary and
lies in \(D_\delta\). Also, the empirical constrained maximizer over
\(D_\delta\) converges to this target. Thus, optimizing over \(D_\delta\)
preserves the asymptotic positive chamber mode and provides a regular
Laplace center.

\subsubsection{Separation from sign-changing competitors}

Theorem~\ref{thm:main-interiority} identifies the regular
positive chamber target and justifies computing the empirical mode over
the compact core \(D_\delta\). The remaining issue is sign-changing
coefficients. If \(s_{\bm\theta}\) changes sign, then it crosses zero
and creates nodal lines, which are known to cause instability in the
squared-link formulation \citep{John2018LargeScaleCP}. It is therefore
important to show that such coefficients can be excluded from the
mode-finding step.

To compare different sign configurations, we introduce the absolute-value
criteria
\begin{align*}
\widetilde J^{\mathrm{abs}}(\bm\theta) &\coloneqq
\begin{cases}
-\bm \theta^\top\bm M\bm\theta
+\dfrac{2}{a}\displaystyle\sum_{i=1}^n \nu_{ik}\sum_{j\in\mathcal I_{im}}\log|s_{\bm{\theta}}(\bm{y}_{ij})|
-\lambda_P\,\bm\theta^\top\bm\Omega\,\bm\theta,
& \text{if } s_{\bm \theta}(\bm{y}_{ij})\neq 0\ \forall(i,j),\\
-\infty,
& \text{otherwise,}
\end{cases}\\
Q^{\mathrm{abs}}(\bm\theta) & \coloneqq
\begin{cases}
-\bm\theta^\top \bm M\bm\theta
+2\displaystyle\int_{\mathcal B}\lambda_0(\bm y)\log|s_{\bm\theta}(\bm y)|\,d\bm{y}
-\bar\lambda\,\bm\theta^\top\bm\Omega\,\bm\theta,
& \text{if } \lambda_0(\cdot)|\log|s_{\bm \theta}(\cdot)||\in L^1(\mathcal B),\\
-\infty,
& \text{otherwise.}
\end{cases}
\end{align*}
The absolute-value
criteria allow the positive chamber, the negative chamber, and
sign-changing competitors to be compared under a common objective.
We also introduce a \(\delta\)-separated sign-changing region
$S_\delta^{\mathrm{sc}} \coloneqq
\left\{\bm \theta\in\mathbb R^d : \|\bm\theta\|_2\le R,\  \sup_{\bm{y}\in\mathcal B} s_{\bm\theta}(\bm y)\ge \delta,\  \inf_{\bm{y}\in\mathcal B} s_{\bm\theta}(\bm y)\le -\delta \right\}$,
and $T_\delta := D_\delta\cup(-D_\delta)\cup S_\delta^{\mathrm{sc}}$.

The following theorem compares \(S_\delta^{\mathrm{sc}}\) with the
positive and negative same-sign chambers. Since
$J^{\mathrm{abs}}(\bm\theta)=J^{\mathrm{abs}}(-\bm\theta)$,
the positive and negative chambers are sign-symmetric representatives of
the same squared-link intensity. Thus, ruling out the strongly
sign-changing region leaves the strictly positive chamber constraint to select
one identifiable representative.

\refstepcounter{maintheorem}\noindent\textbf{Theorem~\themaintheorem\ (Empirical separation of the sign-changing chamber).}\label{thm:main-chamber-gap}
\textit{Assume the conditions of Theorem~\ref{thm:app-global-modes} and Proposition~\ref{lem:app-truncated-full-sign-ulln} hold. Then}
\begin{align}
\sup_{\bm\theta\in S_\delta^{\mathrm{sc}}}\widetilde J^{\mathrm{abs}}(\bm\theta)
<
\sup_{\bm\theta\in D_\delta\cup(-D_\delta)}\widetilde J^{\mathrm{abs}}(\bm\theta)
\label{eq:main-sample-gap}
\end{align}
\textit{with probability tending to one. Consequently every maximizer of $\widetilde J^{\mathrm{abs}}$ over $T_\delta$ belongs to $D_\delta\cup(-D_\delta)$ with probability tending to one. Moreover,}
\begin{align}
\operatorname{dist}
\!\left(
\arg\max_{\bm \theta\in T_\delta}\widetilde J^{\mathrm{abs}}(\bm\theta),
\{\pm \bm\theta^\star\}
\right)
\xrightarrow{p} 0,
\label{eq:main-chamber-gap-argmax}
\end{align}
\textit{where $\operatorname{dist}(A,B)=\sup_{\bm a\in A}\inf_{\bm b\in B}\|\bm a-\bm b\|_2$ denotes the one-sided Hausdorff distance from the set $A$ to the set $B$}.

Theorem~\ref{thm:main-chamber-gap} shows that, on the comparison class
\(T_\delta\), the strongly sign-changing region cannot contain an
empirical maximizer with probability tending to one. Moreover, every
maximizer over \(T_\delta\) is asymptotically close to either
\(\bm\theta^\star\) or \(-\bm\theta^\star\). Since these two points are
sign-symmetric representatives of the same squared-link intensity, the
positive chamber constraint selects the intended representative.

\subsubsection{Empirical dominance of the admissible chambers}

Theorem~\ref{thm:main-chamber-gap} rules out sign-changing coefficients
as empirical mode candidates. However, mode separation alone does not
show that the sign-changing chamber is negligible for the distributional
approximation. A region may have a lower supremum of the objective but
still contribute nonnegligible weight if its volume is large
enough. In that case, approximating only the local mass around the
positive chamber mode could miss a substantial contribution from
sign-changing configurations.

The next result strengthens the mode-separation statement to an
exponential-weight comparison. It shows that the unnormalized weight of
the sign-changing chamber is asymptotically negligible relative to the
positive chamber. Therefore, excluding sign-changing configurations does
not discard asymptotically relevant mass, and the Laplace approximation
centered at the constrained positive chamber mode captures the relevant
local contribution under the positive representative.

To state the result, let \(\mathbb B(\bm a,r) \coloneqq \{\bm u\in\mathbb R^d:\|\bm u-\bm a\|_2<r\}\) denote the open ball, and define
\begin{equation}
v^\star(r) \coloneqq \left|D_\delta\cap \mathbb B(\bm\theta^\star,r)\right|, \quad \omega(r) \coloneqq \sup_{\bm\theta\in D_\delta\cap \mathbb B(\bm\theta^\star,r)} \left|Q(\bm\theta)-Q(\bm\theta^\star)\right|.
\label{eq:app-omega}
\end{equation}
For $L_{\mathrm{tr}}>\max\{0,-\log \delta\}$, 
let \(\widetilde J^{\mathrm{abs}}_{L_{\mathrm{tr}}}\) and
\(Q^{\mathrm{abs}}_{L_{\mathrm{tr}}}\) denote the truncated versions of
\(\widetilde J^{\mathrm{abs}}\) and \(Q^{\mathrm{abs}}\), obtained by
replacing every occurrence of \(\log |s_{\bm\theta}(\cdot)|\) with
$\max\{\log |s_{\bm\theta}(\cdot)| ,-L_{\mathrm{tr}}\}$. 
Now, define
$$
\Xi_{L_{\mathrm{tr}},a} \coloneqq \sup_{\bm\theta\in T_\delta}| \widetilde J^{\mathrm{abs}}_{L_{\mathrm{tr}}}(\bm\theta) - Q^{\mathrm{abs}}_{L_{\mathrm{tr}}}(\bm\theta) |,\quad
G^{\mathrm{sc}}_{L_{\mathrm{tr}}}
:=
Q(\bm{\theta}^\star) -
\sup_{\bm{\theta}\in S_\delta^{\mathrm{sc}}}Q^{\mathrm{abs}}_{L_{\mathrm{tr}}}(\bm{\theta}).$$
Finally, for measurable $C\subseteq T_\delta$ define the empirical exponential weight
\begin{align*}
\widehat Z_a(C) \coloneqq
\int_C \exp\!\left\{a\,\widetilde J^{\mathrm{abs}}(\bm\theta)\right\}d\bm\theta.
\end{align*}

\refstepcounter{maintheorem}\noindent\textbf{Theorem~\themaintheorem\ (Empirical dominance of the admissible chambers).}\label{thm:main-dominance}
\textit{Assume the conditions of Theorem~\ref{thm:main-chamber-gap} and Proposition~\ref{prop:app-pop-dominance-truncated} hold. Then, for every fixed $r>0$ with $v^\star(r)>0$ and every measurable $C\subseteq S_\delta^{\mathrm{sc}}$,}
\begin{align}
\frac{\widehat Z_a(C)}{\widehat Z_a(D_\delta)}
\le
\frac{|C|}{v^\star(r)}
\exp\Bigl\{
-a\bigl(G^{\mathrm{sc}}_{L_{\mathrm{tr}}}-\omega(r)-2\Xi_{L_{\mathrm{tr}},a}\bigr)
\Bigr\}
\label{eq:main-sample-ratio-bound}
\end{align}
\textit{on the event $\{2\Xi_{L_{\mathrm{tr}},a}<G^{\mathrm{sc}}_{L_{\mathrm{tr}}}-\omega(r)\}$. In particular, if $\omega(r)<G^{\mathrm{sc}}_{L_{\mathrm{tr}}}$, then}
\begin{align}
\frac{\widehat Z_a(S_\delta^{\mathrm{sc}})}{\widehat Z_a(D_\delta)}
\xrightarrow{p} 0.
\label{eq:main-sample-ratio-goes-zero}
\end{align}
\textit{The same conclusion holds with $D_\delta$ replaced by $-D_\delta$.}


Theorem~\ref{thm:main-dominance} provides the mass-level counterpart to Theorem~\ref{thm:main-chamber-gap}. It shows that the exponential weight of the sign-changing region is asymptotically negligible relative to the strictly positive chamber. Thus, after selecting the positive representative, the constrained Laplace approximation does not omit asymptotically relevant sign-changing contribution.

\subsection{Inference algorithm}
\label{subsec:algorithm}

Our inference algorithm is summarized in Appendix~\ref{app:vi-algorithm}.
While $(\bm{\phi}, \bm{z}, \bm{\tau^2})$ are updated via standard closed-form coordinate ascent steps, 
$\bm\theta$ are updated using the constrained Laplace approximation. 


\section{Experiments and real data analysis}
\label{sec:experiments}


\paragraph{Synthetic data}
We conducted synthetic experiments to evaluate whether DPM-MPPP model can recover three inferential targets simultaneously: cluster-specific mark-dependent intensity surfaces, subject-level latent cluster assignments, and the number of clusters. 
The simulations cover diverse mark-specific intensity geometries, cluster numbers, cluster-size heterogeneity, and event-count regimes. 
Comparative experiments against baseline models further demonstrate the superior clustering performance of the proposed method.
Detailed data-generating configurations, inference settings, visualizations, confusion matrices, and reduced-sample experiments are reported in Appendix~\ref{app:syn_data}.

\paragraph{Real-world data}

We also apply the proposed model to National Basketball Association (NBA) shot-chart data as a real-world example of replicated marked point-process clustering, where each player is represented by shot locations together with make--miss marks. 
The fitted mark-specific intensity and success-probability surfaces are used to analyze latent player archetypes and heterogeneity in shot selection and location-specific success.
The complete data description, fitting details, visualization, representative-player summaries, and interpretation are deferred to Appendix~\ref{app:nba_real_data}.


\section{Conclusions}
\label{sec:conclusion}

We proposed DPM-MPPP, a Dirichlet process mixture of marked Poisson point processes for clustering replicated marked point data on continuous domains. The model jointly infers latent cluster assignments, the number of clusters, and cluster-specific mark-dependent intensity surfaces. For posterior inference, we developed a variational Bayes algorithm combining closed-form coordinate updates with a constrained Laplace update for the nonconjugate basis-coefficient block. The proposed positivity restriction yields a well-posed constrained optimization problem that avoids the sign ambiguity and nodal-line pathologies of squared-link models, and our theoretical results justify the constrained mode-finding step and exponential-weight concentration. Synthetic experiments and an NBA shot-chart analysis demonstrate the accuracy and effectiveness of the proposed framework.


A practical limitation is that, as with many Bayesian nonparametric mixture models, the proposed DPM-MPPP can mildly over-split clusters in low-information regimes. 
A natural direction for future work is to extend the current discrete mark formulation to continuous mark distributions, allowing the same framework to model richer event-level attributes in broader marked point-process applications.


\bibliographystyle{plainnat} 
\bibliography{references}    

@inproceedings{Adams2009Tractable,
  author    = {Adams, Ryan Prescott and Murray, Iain and MacKay, David J. C.},
  title     = {Tractable Nonparametric Bayesian Inference in Poisson Processes with Gaussian Process Intensities},
  booktitle = {Proceedings of the 26th Annual International Conference on Machine Learning},
  pages     = {9--16},
  year      = {2009},
  publisher = {ACM},
  doi       = {10.1145/1553374.1553376}
}

@article{Donner2018Sigmoidal,
  author  = {Donner, Christian and Opper, Manfred},
  title   = {Efficient Bayesian Inference of Sigmoidal Gaussian Cox Processes},
  journal = {Journal of Machine Learning Research},
  year    = {2018},
  volume  = {19},
  number  = {67},
  pages   = {1--34}
}

@inproceedings{Lloyd2016LPPA,
  author    = {Lloyd, Chris and Gunter, Tom and Osborne, Michael and Roberts, Stephen and Nickson, Tom},
  title     = {Latent Point Process Allocation},
  booktitle = {Proceedings of the 19th International Conference on Artificial Intelligence and Statistics},
  series    = {Proceedings of Machine Learning Research},
  volume    = {51},
  pages     = {389--397},
  year      = {2016},
  publisher = {PMLR}
}

@inproceedings{Ding2018BaNPPA,
  author    = {Ding, Hongyi and Khan, Mohammad and Sato, Issei and Sugiyama, Masashi},
  title     = {Bayesian Nonparametric Poisson-Process Allocation for Time-Sequence Modeling},
  booktitle = {Proceedings of the Twenty-First International Conference on Artificial Intelligence and Statistics},
  series    = {Proceedings of Machine Learning Research},
  volume    = {84},
  pages     = {1108--1116},
  year      = {2018},
  publisher = {PMLR}
}

@article{Barrack2017NHPP,
  author  = {Barrack, Duncan and Preston, Simon},
  title   = {Classification and Clustering for Observations of Event Time Data Using Non-Homogeneous Poisson Process Models},
  journal = {arXiv preprint arXiv:1703.02111},
  year    = {2017}
}

@article{Phung2017RFS,
  author  = {Phung, Dinh and Vo, Ba-Tuong},
  title   = {A Random Finite Set Model for Data Clustering},
  journal = {arXiv preprint arXiv:1703.04832},
  year    = {2017}
}

@inproceedings{Yin2021RowClustering,
  author    = {Yin, Lihao and Xu, Ganggang and Sang, Huiyan and Guan, Yongtao},
  title     = {Row-clustering of a Point Process-valued Matrix},
  booktitle = {Advances in Neural Information Processing Systems},
  volume    = {34},
  pages     = {20028--20039},
  year      = {2021}
}

@inproceedings{Xu2017DMHP,
  author    = {Xu, Hongteng and Zha, Hongyuan},
  title     = {A Dirichlet Mixture Model of Hawkes Processes for Event Sequence Clustering},
  booktitle = {Advances in Neural Information Processing Systems},
  volume    = {30},
  pages     = {1354--1363},
  year      = {2017}
}

@inproceedings{Zhang2022NTPPMix,
  author    = {Zhang, Yunhao and Yan, Junchi and Zhang, Xiaolu and Zhou, Jun and Yang, Xiaokang},
  title     = {Learning Mixture of Neural Temporal Point Processes for Multi-dimensional Event Sequence Clustering},
  booktitle = {Proceedings of the Thirty-First International Joint Conference on Artificial Intelligence},
  pages     = {3766--3772},
  year      = {2022}
}

@inproceedings{Ding2023CNTPP,
  author    = {Ding, Fangyu and Yan, Junchi and Wang, Haiyang},
  title     = {C-{NTPP}: Learning Cluster-Aware Neural Temporal Point Process},
  booktitle = {Proceedings of the Thirty-Seventh {AAAI} Conference on Artificial Intelligence},
  pages     = {7369--7377},
  year      = {2023},
  doi       = {10.1609/aaai.v37i6.25897}
}

@article{Reich2006Spatial,
  author  = {Reich, Brian J. and Hodges, James S. and Carlin, Bradley P. and Reich, Adam M.},
  title   = {A Spatial Analysis of Basketball Shot Chart Data},
  journal = {The American Statistician},
  year    = {2006},
  volume  = {60},
  number  = {1},
  pages   = {3--12},
  doi     = {10.1198/000313006X90305}
}

@inproceedings{Miller2014Factorized,
  author    = {Miller, Andrew and Bornn, Luke and Adams, Ryan and Goldsberry, Kirk},
  title     = {Factorized Point Process Intensities: A Spatial Analysis of Professional Basketball},
  booktitle = {Proceedings of the 31st International Conference on Machine Learning},
  series    = {Proceedings of Machine Learning Research},
  volume    = {32},
  pages     = {235--243},
  year      = {2014},
  publisher = {PMLR}
}

@inproceedings{Yin2022SpatialHomogeneity,
  author    = {Yin, Fan and Jiao, Jieying and Yan, Jun and Hu, Guanyu},
  title     = {Bayesian Nonparametric Learning for Point Processes with Spatial Homogeneity: A Spatial Analysis of {NBA} Shot Locations},
  booktitle = {Proceedings of the 39th International Conference on Machine Learning},
  series    = {Proceedings of Machine Learning Research},
  volume    = {162},
  pages     = {25523--25551},
  year      = {2022},
  publisher = {PMLR}
}

@article{Jiao2021Bayesian,
  author  = {Jiao, Jieying and Hu, Guanyu and Yan, Jun},
  title   = {A Bayesian Marked Spatial Point Processes Model for Basketball Shot Chart},
  journal = {Journal of Quantitative Analysis in Sports},
  year    = {2021},
  volume  = {17},
  number  = {2},
  pages   = {77--90},
  doi     = {10.1515/jqas-2019-0106}
}

@article{Hu2021GroupLearning,
  author  = {Hu, Guanyu and Yang, Hou-Cheng and Xue, Yishu},
  title   = {Bayesian Group Learning for Shot Selection of Professional Basketball Players},
  journal = {Stat},
  year    = {2021},
  volume  = {10},
  number  = {1},
  pages   = {e324},
  doi     = {10.1002/sta4.324}
}

@article{Yin2023MatrixClustering,
  author  = {Yin, Fan and Hu, Guanyu and Shen, Weining},
  title   = {Analysis of Professional Basketball Field Goal Attempts via a Bayesian Matrix Clustering Approach},
  journal = {Journal of Computational and Graphical Statistics},
  year    = {2023},
  volume  = {32},
  number  = {1},
  pages   = {49--60},
  doi     = {10.1080/10618600.2022.2085727}
}

@article{Hu2023ZIPClustered,
  author  = {Hu, Guanyu and Yang, Hou-Cheng and Xue, Yishu and Dey, Dipak K.},
  title   = {Zero-inflated Poisson Model with Clustered Regression Coefficients: Application to Heterogeneity Learning of Field Goal Attempts of Professional Basketball Players},
  journal = {Canadian Journal of Statistics},
  year    = {2023},
  volume  = {51},
  number  = {1},
  pages   = {157--172},
  doi     = {10.1002/cjs.11684}
}

@article{WongToi2023Joint,
  author  = {Wong-Toi, Eliot and Yang, Hou-Cheng and Shen, Weining and Hu, Guanyu},
  title   = {A Joint Analysis for Field Goal Attempts and Percentages of Professional Basketball Players: Bayesian Nonparametric Resource},
  journal = {Journal of Data Science},
  year    = {2023},
  volume  = {21},
  number  = {1},
  pages   = {68--86},
  doi     = {10.6339/22-JDS1062}
}

@inproceedings{Aglietti2019Structured,
  author    = {Aglietti, Virginia and Bonilla, Edwin V. and Damoulas, Theodoros and Cripps, Sally},
  title     = {Structured Variational Inference in Continuous Cox Process Models},
  booktitle = {Advances in Neural Information Processing Systems},
  volume    = {32},
  pages     = {12437--12447},
  year      = {2019}
}

@article{Li2019TumorPathology,
  author  = {Li, Qiwei and Wang, Xinlei and Liang, Faming and Xiao, Guanghua},
  title   = {A Bayesian Mark Interaction Model for Analysis of Tumor Pathology Images},
  journal = {The Annals of Applied Statistics},
  year    = {2019},
  volume  = {13},
  number  = {3},
  pages   = {1708--1732},
  doi     = {10.1214/19-AOAS1254}
}

@article{Martinez2010MarkConnection,
  author  = {Mart\'{\i}nez, V. J. and Arnalte-Mur, P. and Stoyan, D.},
  title   = {Measuring Galaxy Segregation with the Mark Connection Function},
  journal = {Astronomy \& Astrophysics},
  year    = {2010},
  volume  = {513},
  pages   = {A22},
  doi     = {10.1051/0004-6361/200912922}
}

@article{Mohler2014MarkedCrime,
  author  = {Mohler, George},
  title   = {Marked Point Process Hotspot Maps for Homicide and Gun Crime Prediction in Chicago},
  journal = {International Journal of Forecasting},
  year    = {2014},
  volume  = {30},
  number  = {3},
  pages   = {491--497},
  doi     = {10.1016/j.ijforecast.2014.01.004}
}

@book{kingman1993poisson,
  title={Poisson Processes},
  author={Kingman, John Frank Charles},
  year={1993},
  publisher={Clarendon Press},
  series={Oxford Studies in Probability}
}

@article{prekopa1958secondary,
  title={On secondary processes generated by a random point distribution of {Poisson} type},
  author={Pr{\'e}kopa, Andr{\'a}s},
  journal={Annales Universitatis Scientiarum Budapestinensis de Rolando E{\"o}tv{\"o}s Nominatae, Sectio Mathematica},
  volume={1},
  pages={153--170},
  year={1958}
}

@book{daley2003introduction,
  title={An Introduction to the Theory of Point Processes: Volume {I}: Elementary Theory and Methods},
  author={Daley, Daryl J. and Vere-Jones, David},
  edition={2nd},
  year={2003},
  publisher={Springer},
  series={Probability and Its Applications}
}

@article{Ferguson1973,
  author  = {Ferguson, Thomas S.},
  title   = {A {B}ayesian Analysis of Some Nonparametric Problems},
  journal = {The Annals of Statistics},
  year    = {1973},
  volume  = {1},
  number  = {2},
  pages   = {209--230}
}

@article{Antoniak1974,
  author  = {Antoniak, Charles E.},
  title   = {Mixtures of {D}irichlet Processes with Applications to {B}ayesian Nonparametric Problems},
  journal = {The Annals of Statistics},
  year    = {1974},
  volume  = {2},
  number  = {6},
  pages   = {1152--1174}
}

@article{Sethuraman1994,
  author  = {Sethuraman, Jayaram},
  title   = {A Constructive Definition of {D}irichlet Priors},
  journal = {Statistica Sinica},
  year    = {1994},
  volume  = {4},
  pages   = {639--650}
}

@article{Ishwaran2001,
  author  = {Ishwaran, Hemant and James, Lancelot F.},
  title   = {Gibbs Sampling Methods for Stick-Breaking Priors},
  journal = {Journal of the American Statistical Association},
  year    = {2001},
  volume  = {96},
  number  = {453},
  pages   = {161--173}
}

@article{Neal2000,
  author  = {Neal, Radford M.},
  title   = {Markov Chain Sampling Methods for {D}irichlet Process Mixture Models},
  journal = {Journal of Computational and Graphical Statistics},
  year    = {2000},
  volume  = {9},
  number  = {2},
  pages   = {249--265}
}

@article{Jordan1999,
  author  = {Jordan, Michael I. and Ghahramani, Zoubin and
             Jaakkola, Tommi S. and Saul, Lawrence K.},
  title   = {An Introduction to Variational Methods for Graphical Models},
  journal = {Machine Learning},
  year    = {1999},
  volume  = {37},
  pages   = {183--233}
}

@article{Wainwright2008,
  author    = {Wainwright, Martin J. and Jordan, Michael I.},
  title     = {Graphical Models, Exponential Families, and
               Variational Inference},
  journal   = {Foundations and Trends in Machine Learning},
  year      = {2008},
  volume    = {1},
  number    = {1--2},
  pages     = {1--305}
}

@article{Blei2017,
  author  = {Blei, David M. and Kucukelbir, Alp and McAuliffe, Jon D.},
  title   = {Variational Inference: {A} Review for Statisticians},
  journal = {Journal of the American Statistical Association},
  year    = {2017},
  volume  = {112},
  number  = {518},
  pages   = {859--877}
}

@article{blei2006variational,
  title={Variational inference for Dirichlet process mixtures},
  author={Blei, David M and Jordan, Michael I},
  journal={Bayesian analysis},
  volume={1},
  number={1},
  pages={121--143},
  year={2006},
  publisher={International Society for Bayesian Analysis}
}

@book{deboor2001practical,
  title={A Practical Guide to Splines},
  author={de Boor, Carl},
  year={2001},
  publisher={Springer},
  series={Applied Mathematical Sciences},
  volume={27}
}

@article{lang2004bayesian,
  title={Bayesian {P}-Splines},
  author={Lang, Stefan and Brezger, Andreas},
  journal={Journal of Computational and Graphical Statistics},
  volume={13},
  number={1},
  pages={183--212},
  year={2004},
  publisher={Taylor \& Francis},
  doi={10.1198/1061860043010}
}

@article{moller1998log,
  title={Log Gaussian Cox processes},
  author={M{\o}ller, J. and Syversveen, A. R. and Waagepetersen, R. P.},
  journal={Scandinavian Journal of Statistics},
  volume={25},
  pages={451--482},
  year={1998}
}

@book{moller2004statistical,
  title={Statistical Inference and Simulation for Spatial Point Processes},
  author={M{\o}ller, Jesper and Waagepetersen, Rasmus Plenge},
  year={2004},
  publisher={Chapman and Hall/CRC},
  isbn={9781584882657}
}

@inproceedings{John2018LargeScaleCP,
  title={Large-Scale Cox Process Inference using Variational Fourier Features},
  author={S. T. John and James Hensman},
  booktitle={Proceedings of the 35th International Conference on Machine Learning},
  series={PMLR},
  volume={80},
  address={Stockholm, Sweden},
  year={2018},
  url={https://api.semanticscholar.org/CorpusID:4626119}
}

@inproceedings{lloyd2015variational,
  title={Variational Inference for Gaussian Process Modulated Poisson Processes},
  author={Lloyd, Chris and Gunter, Tom and Osborne, Michael and Roberts, Stephen},
  booktitle={Proceedings of the 32nd International Conference on Machine Learning},
  series={PMLR},
  volume={37},
  pages={1814--1822},
  year={2015}
}

@article{wang2013variational,
  title={Variational Inference in Nonconjugate Models},
  author={Wang, Chong and Blei, David M.},
  journal={Journal of Machine Learning Research},
  volume={14},
  pages={1005--1031},
  year={2013}
}

@article{Simpson2016GoingOffGrid,
  title={Going off grid: Computationally efficient inference for log-{Gaussian} {Cox} processes},
  author={Simpson, Daniel and Illian, Janine B. and Lindgren, Finn and S{\o}rbye, Sigrunn H. and Rue, H{\aa}vard},
  journal={Biometrika},
  volume={103},
  number={1},
  pages={49--70},
  year={2016},
  publisher={Oxford University Press}
}

@article{Rue2009INLA,
  title={Approximate {Bayesian} inference for latent {Gaussian} models by using integrated nested {Laplace} approximations},
  author={Rue, H{\aa}vard and Martino, Sara and Chopin, Nicolas},
  journal={Journal of the Royal Statistical Society: Series B (Statistical Methodology)},
  volume={71},
  number={2},
  pages={319--392},
  year={2009},
  publisher={Wiley Online Library}
}

@article{Teng2017Bayesian,
  title={Bayesian computation for log-{Gaussian} {Cox} processes: A comparative analysis of methods},
  author={Teng, Ming and Nathoo, Farouk S. and Johnson, Timothy D.},
  journal={Journal of Statistical Computation and Simulation},
  volume={87},
  number={11},
  pages={2227--2252},
  year={2017},
  publisher={Taylor \& Francis}
}

@inproceedings{Gunter2014Efficient,
  title={Efficient {Bayesian} nonparametric modelling of structured point processes},
  author={Gunter, Tom and Lloyd, Chris and Osborne, Michael A. and Roberts, Stephen J.},
  booktitle={Proceedings of the Thirtieth Conference on Uncertainty in Artificial Intelligence},
  pages={254--263},
  year={2014},
  organization={AUAI Press}
}

@book{bishop2006pattern,
  title={Pattern Recognition and Machine Learning},
  author={Bishop, Christopher M.},
  year={2006},
  publisher={Springer}
}

@book{rasmussen2006gaussian,
  title={{Gaussian} Processes for Machine Learning},
  author={Rasmussen, Carl Edward and Williams, Christopher K. I.},
  year={2006},
  publisher={MIT Press}
}

\appendix

\section{Full ELBO}
\label{app:elbo}

The variational objective is
\begin{align*}
\mathcal L(q)
=
\mathbb E_q\!\left[\log p(\bm y,\bm m,\bm z,\bm\phi,\bm\theta,\bm\tau^2)\right]
-
\mathbb E_q\!\left[\log q(\bm z,\bm\phi,\bm\theta,\bm\tau^2)\right].
\end{align*}
It is convenient to decompose
\begin{align*}
\mathcal L(q)=\mathcal L_{\mathrm{lik}}+\mathcal L_{\bm z}+\mathcal L_{\bm\phi}+\mathcal L_{\bm\theta,\bm\tau}.
\end{align*}


\paragraph{Likelihood term.}
For subject $i$ and cluster $k$, the log-likelihood contribution under the squared link is
\begin{align*}
\log p(\bm{Y}_i \mid m_i,\bm\theta_k)
&=
\sum_{m=0}^1
\left\{
-T_i\,\bm\theta_{km}^\top \bm{M} \bm\theta_{km}
+
\sum_{j\in I_{im}}
\log\!\left[\bigl(\bm{B}(\bm{y}_{ij})^\top \bm\theta_{km}\bigr)^2\right]
\right\}
+
N_i\log T_i,
\end{align*}

Taking expectation with respect to $q(\bm{\theta}_k)=q(\bm{\theta}_{k0})q(\bm\theta_{k1})$ gives
\begin{align*}
\mathbb E_{q(\bm\theta_k)}\!\left[\log p(\bm{Y}_i \mid m_i,\bm\theta_k)\right]
&=
\sum_{m=0}^1
\left\{
-T_i R_{km}
+
\sum_{j\in I_{im}}
\mathcal H(\mu_{ij,km},\sigma^2_{ij,km})
\right\}
+
N_i\log T_i,
\end{align*}
where
\begin{align*}
R_{km}
&:=
\mathbb E_{q(\bm\theta_{km})}\!\left[\bm\theta_{km}^\top \bm{M} \bm\theta_{km}\right]
=
\text{tr}\!\left\{\bm{M}\bigl(\bm\Sigma_{\bm\theta km}+\bm\mu_{\bm\theta km}\bm\mu_{\bm\theta km}^\top\bigr)\right\},
\\
\mu_{ij,km}
&:=
\bm{B}(\bm{y}_{ij})^\top \bm\mu_{\bm\theta km},
\
\sigma^2_{ij,km}
:=
\bm{B}(\bm{y}_{ij})^\top \bm\Sigma_{\bm\theta km} \bm{B}(\bm{y}_{ij}),
\end{align*}
and
\begin{align*}
\mathcal H(\mu,\sigma^2)
:=
\mathbb E_{X\sim \mathcal N(\mu,\sigma^2)}[\log X^2]
=
-\widetilde G\!\left(-\frac{\mu^2}{2\sigma^2}\right)
+
\log\!\left(\frac{\sigma^2}{2}\right)
-
C_{\mathrm E},
\end{align*}
where $C_{\mathrm E}$ is the Euler--Mascheroni constant, and $\tilde{G}$ is defined via the confluent hypergeometric function as in \citep{lloyd2015variational}.

Hence the likelihood contribution to the ELBO is
\begin{align*}
\mathcal L_{\mathrm{lik}}
=
\sum_{i=1}^n \sum_{k=1}^K \nu_{ik}
\Biggl[
\sum_{m=0}^1
\left\{
-T_i R_{km}
+
\sum_{j\in I_{im}}
\mathcal H(\mu_{ij,km},\sigma^2_{ij,km})
\right\}
+
N_i\log T_i
\Biggr].
\end{align*}

\paragraph{Assignment term.}
Define
\begin{align}
\zeta_k:=\mathbb E_q[\log \pi_k]
=
\begin{cases}
\mathbb E_q[\log \phi_k]+\sum_{\ell=1}^{k-1}\mathbb E_q[\log(1-\phi_\ell)], & k=1,\ldots,K-1,\\[0.2cm]
\sum_{\ell=1}^{K-1}\mathbb E_q[\log(1-\phi_\ell)], & k=K.
\end{cases}
\label{eq:app-zeta}
\end{align}
Then
\begin{align*}
\mathcal L_z
=
\sum_{i=1}^n \sum_{k=1}^K \nu_{ik}\zeta_k
-
\sum_{i=1}^n \sum_{k=1}^K \nu_{ik}\log \nu_{ik}.
\end{align*}

\paragraph{Stick-breaking term.}
For $k=1,\ldots,K-1$,
\begin{align*}
\mathcal L_\phi
&=
\sum_{k=1}^{K-1}
\Bigl\{
\log \alpha + (\alpha-1)\mathbb E_q[\log(1-\phi_k)]
-
\mathbb E_q[\log q(\phi_k)]
\Bigr\},
\\
\mathbb E_q[\log q(\phi_k)]
&=
\log\Gamma(\gamma_{k1}+\gamma_{k2})
-
\log\Gamma(\gamma_{k1})
-
\log\Gamma(\gamma_{k2})
\nonumber\\
&\qquad
+
(\gamma_{k1}-1)\mathbb E_q[\log\phi_k]
+
(\gamma_{k2}-1)\mathbb E_q[\log(1-\phi_k)].
\end{align*}

\paragraph{Coefficient and variance terms.}
Define
\begin{align*}
S_{km}:=\text{tr}(\bm\Omega\bm\Sigma_{\bm\theta km})+\bm\mu_{\bm\theta km}^\top\bm\Omega\bm\mu_{\bm\theta km}.
\end{align*}
Under $q(\tau_{km}^2)=\mathrm{IG}(\alpha_{qkm},\beta_{qkm})$, with density proportional to
$(\tau_{km}^2)^{-\alpha_{qkm}-1}\exp(-\beta_{qkm}/\tau_{km}^2)$, we have
\begin{align}
\mathbb E_q[\tau_{km}^{-2}]
&=
\frac{\alpha_{qkm}}{\beta_{qkm}},
\label{eq:app-Einvtausq}
\\
\mathbb E_q[\log \tau_{km}^2]
&=
\log\beta_{qkm}-\psi(\alpha_{qkm}),
\label{eq:app-Elogtausq}
\end{align}
where \(\psi(\cdot)\) denotes the digamma function, i.e.,
\(\psi(x)=\frac{d}{dx}\log\Gamma(x)\).

Therefore,
\begin{align}
\mathcal L_{\theta,\tau}
&=
\sum_{k=1}^K\sum_{m=0}^1
\Biggl[
-\frac{r}{2}\log(2\pi)
+\frac12 \log |\bm\Omega|_+
+a_0 \log b_0
-\log\Gamma(a_0)
\nonumber\\
&\qquad\qquad
-
\left(\frac{r}{2}+a_0+1\right)\mathbb E_q[\log \tau_{km}^2]
-
\left(b_0+\frac12 S_{km}\right)\mathbb E_q[\tau_{km}^{-2}]
\nonumber\\
&\qquad\qquad
-
\mathbb E_q[\log q(\tau_{km}^2)]
-
\mathbb E_q[\log q(\bm\theta_{km})]
\Biggr],
\label{eq:app-Lthetatau-raw}
\end{align}
where
\begin{align*}
\mathbb E_q[\log q(\tau_{km}^2)]
&=
\alpha_{qkm}\log \beta_{qkm}
-
\log\Gamma(\alpha_{qkm})
-
(\alpha_{qkm}+1)\mathbb E_q[\log \tau_{km}^2]
-
\alpha_{qkm},
\\
\mathbb E_q[\log q(\bm\theta_{km})]
&=
-\frac{d}{2}\log(2\pi e)-\frac12 \log |\bm\Sigma_{\bm\theta km}|.
\end{align*}
Substituting \eqref{eq:app-Einvtausq} and \eqref{eq:app-Elogtausq} into
\eqref{eq:app-Lthetatau-raw} yields the full ELBO in closed form.

\section{Closed-form CAVI updates for conjugate blocks}
\label{app:cavi}

The blocks $q(\phi_k)$, $q(z_i)$, and $q(\tau_{km}^2)$ admit exact coordinate updates.

\paragraph{Stick-breaking weights.}
For $k=1,\ldots,K-1$,
\begin{align*}
q^*(\phi_k)=\mathrm{Beta}(\gamma_{k1},\gamma_{k2}),
\qquad
\gamma_{k1}=1+\sum_{i=1}^n \nu_{ik},
\qquad
\gamma_{k2}=\alpha+\sum_{i=1}^n \sum_{\ell=k+1}^K \nu_{i\ell}.
\end{align*}
The last stick is fixed, $\phi_K\equiv 1$, and therefore has no variational update.

\paragraph{Variance parameters.}
For each $(k,m)$,
\begin{align*}
q^*(\tau_{km}^2)=\mathrm{IG}(\alpha_{qkm},\beta_{qkm}),
\qquad
\alpha_{qkm}=a_0+\frac{r}{2},
\qquad
\beta_{qkm}=b_0+\frac12 S_{km}.
\end{align*}

\paragraph{Cluster allocations.}
For each subject $i$ and cluster $k$, define the responsibility score
\begin{align*}
\log \rho_{ik}
=
\zeta_k + \operatorname{Like}_{ik},
\end{align*}
where $\zeta_k$ is given by \eqref{eq:app-zeta} and
\begin{equation}
\operatorname{Like}_{ik}
=
\sum_{m=0}^1
\left[
-T_i R_{km}
+
\sum_{j\in I_{im}}
\mathcal H(\mu_{ij,km},\sigma^2_{ij,km})
\right].
\label{eq:app-Likeik}
\end{equation}
The subject-specific constant $N_i\log T_i$ is omitted from \eqref{eq:app-Likeik} because it does
not depend on $k$ and therefore cancels when the responsibilities are normalized. The update is
\begin{align*}
\nu_{ik}
=
\frac{\rho_{ik}}{\sum_{\ell=1}^K \rho_{i\ell}},
\qquad
k=1,\ldots,K.
\end{align*}

\section{Objective function}
\label{app:theta-objective}

For a fixed cluster--mark block $(k,m)$, the optimal mean-field factor satisfies
\begin{align*}
\log q^*(\bm\theta_{km})
=
\mathbb E_{q_{-\bm\theta_{km}}}
\!\left[
\log p(\bm y,\bm m,\bm z,\bm\phi,\bm\theta,\bm\tau^2)
\right]
+
\mathrm{const}.
\end{align*}
Retaining only the terms that depend on $\bm\theta_{km}$ yields
\begin{align*}
\mathbb E_{q_{-\bm\theta_{km}}}
\!\left[
\log p(\bm y,\bm m,\bm z,\bm\phi,\bm\theta,\bm\tau^2)
\right]
&=
\sum_{i=1}^n \nu_{ik}
\left[
-T_i\,\bm\theta_{km}^\top \bm{M}\bm\theta_{km}
+
\sum_{j\in I_{im}}
\log\!\left\{\bigl(\bm{B}(\bm{y}_{ij})^\top\bm\theta_{km}\bigr)^2\right\}
\right]
\nonumber\\
&\qquad
-
\frac{1}{2}\,
\mathbb E_q[\tau_{km}^{-2}]\,
\bm\theta_{km}^\top \bm\Omega \bm\theta_{km}
+
C,
\end{align*}
where $C$ is constant in $\bm\theta_{km}$. Define
\begin{align*}
\eta_{km}
&:=
\mathbb E_q[\tau_{km}^{-2}]
=
\frac{\alpha_{qkm}}{\beta_{qkm}},
\qquad
a_{km}
:=
\sum_{i=1}^n \nu_{ik}T_i,
\\
\bm{A}_{km}
&:=
a_{km} \bm{M} + \frac{1}{2}\eta_{km}\bm\Omega.
\end{align*}
Then the $\bm\theta_{km}$-dependent objective function is
\begin{align*}
J(\bm\theta_{km})
&=
-\bm\theta_{km}^\top \bm{A}_{km}\bm\theta_{km}
+
\sum_{i=1}^n \nu_{ik}
\sum_{j\in I_{im}}
\log\!\left\{\bigl(\bm{B}(\bm{y}_{ij})^\top \bm\theta_{km}\bigr)^2\right\}.
\end{align*}

\section{Variational inference algorithm}
\label{app:vi-algorithm}

\begin{algorithm}[H]
\small
\caption{Variational Inference algorithm (DPM-MPPP)}
\label{alg:cavi}
\begin{algorithmic}[1]
\State \textbf{Input:} Data $(\bm{Y}_i,\bm{m}_i)$ and offset variables $T_i$ for subjects $i = 1, \dots, n$. Basis functions $\bm{B}(\cdot)$. precision matrix $\bm\Omega$. Hyperparameters $\alpha, a_0, b_0$. Truncation level $K$.
\State \textbf{Initialize:} Variational parameters $\gamma_{k1}, \gamma_{k2}, \nu_{ik}, \alpha_{q_{km}}, \beta_{q_{km}}, \bm\mu_{\bm\theta_{km}}, \bm\Sigma_{\bm\theta_{km}}$. Set $\alpha_{q_{km}} \gets a_0 + r/2$. ($r = rank(\Omega)$)
\State \textbf{Define:} Index sets for marks $\mathcal{I}_{im} = \{j \in \{1, \dots, N_i\} : m_{ij} = m\}$.
\Repeat
    \State \textit{// 1. Update cluster coefficients (Laplace Approximation)}
    \For{$k = 1, \dots, K$ and $m \in \{0, 1\}$}
        \State Compute $\eta_{km} \gets \alpha_{q_{km}} / \beta_{q_{km}}$, $\bm{A}_{km} \gets \left(\sum_{i=1}^n \nu_{i k} T_i\right) \bm{M} + \frac{1}{2} \eta_{km} \bm\Omega$
        \State Find positive mode $\bm\mu_{\bm\theta_{km}}^* \gets \arg\max_{\bm\theta \in D_{\delta}} \left[ -\bm\theta^\top \bm{A}_{km} \bm\theta + \sum_{i=1}^n \nu_{ik} \sum_{j \in \mathcal{I}_{im}} 2 \log(\bm\theta^\top \bm{B}(\bm{y}_{ij})) \right]$
        \State Compute Hessian $\bm H(\bm\mu_{\bm\theta_{km}}^*) \gets -2 \bm A_{km} - \sum_{i=1}^n \nu_{ik} \sum_{j \in \mathcal{I}_{im}} \frac{2}{(\bm\mu_{\bm\theta_{km}}^{*\top} \bm{B}(\bm{y}_{ij}))^2} \bm{B}(\bm{y}_{ij}) \bm{B}(\bm{y}_{ij})^\top$
        \State Update $\bm\mu_{\bm\theta_{km}} \gets \bm\mu_{\bm\theta_{km}}^*$ and $\bm\Sigma_{\bm\theta_{km}} \gets -\bm H(\bm\mu_{\bm\theta_{km}}^*)^{-1}$
    \EndFor
    
    \State \textit{// 2. Update hyperparameters for coefficients}
    \For{$k = 1, \dots, K$ and $m \in \{0, 1\}$}
        \State Update $\beta_{q_{km}} \gets b_0 + \frac{1}{2}\left( \text{Tr}(\bm\Omega \bm\Sigma_{\bm\theta_{km}}) + \bm\mu_{\bm\theta_{km}}^\top \bm\Omega \bm\mu_{\bm\theta_{km}} \right)$
    \EndFor
    
    \State \textit{// 3. Update stick-breaking weights}
    \For{$k = 1, \dots, K-1$}
        \State Update $\gamma_{k1} \gets 1 + \sum_{i=1}^n \nu_{ik}$
        \State Update $\gamma_{k2} \gets \alpha + \sum_{i=1}^n \sum_{l=k+1}^{K} \nu_{il}$
    \EndFor
    \State Set $\phi_K \equiv 1$

    \State \textit{// 4. Update cluster assignments}
    \For{$i = 1, \dots, n$ and $k = 1, \dots, K$}
        \State Compute 
        $\mathrm{Prior}_{ik} \gets
        \begin{cases}
            \Psi(\gamma_{k1})-\Psi(\gamma_{k1}+\gamma_{k2})
            +\sum_{\ell=1}^{k-1}\bigl\{\Psi(\gamma_{\ell 2})-\Psi(\gamma_{\ell 1}+\gamma_{\ell 2})\bigr\}, & k<K,\\[0.2cm]
            \sum_{\ell=1}^{K-1}\bigl\{\Psi(\gamma_{\ell 2})-\Psi(\gamma_{\ell 1}+\gamma_{\ell 2})\bigr\}, & k=K
        \end{cases}$
        \State and compute $\mathrm{Like}_{ik}:= \sum_{m=0}^1 \mathbb{E}_{q(\bm\theta_{km})}\!\left[\log p\!\left(\bm{Y}_{im} \mid \bm\theta_{km}\right)\right]$
        \State Update unnormalized assignment $\rho_{ik} \gets \exp(\mathrm{Prior}_{ik}+\mathrm{Like}_{ik})$
    \EndFor
    \State Normalize $\nu_{ik} \gets \rho_{ik} / \sum_{t=1}^K \rho_{it}$ for all $i, k$
    
    \State \textit{// 5. Convergence Check}
    \State Compute total ELBO $\mathcal{L}(q)$ and evaluate max parameter difference.
\Until{relative absolute change in ELBO $< \epsilon$}
\State \textbf{Output:} Optimal variational parameters $q^*(\bm\phi), q^*(\bm z), q^*(\bm\theta), q^*(\bm \tau^2)$.
\end{algorithmic}
\end{algorithm}

The positive-mode step in Algorithm~\ref{alg:cavi} can be solved as a convex optimization problem with linear positivity constraints. For computational efficiency, our implementation uses a coefficient-wise positive box constraint and L-BFGS-B, which was faster and more stable for the nonnegative B-spline bases used here, optionally followed by a few gradient-polishing steps. The specific optimizer is not essential; any reliable constrained optimizer for the same positive chamber objective can be used.

\makeatletter
\@ifundefined{proof}{%
  \newenvironment{proof}[1][Proof]{\par\medskip\noindent\textit{#1. }}{\hfill$\square$\par\medskip}
}{}
\makeatother

\theoremstyle{plain}
\newtheorem{appassumption}{Assumption}[section]
\newtheorem{applemma}{Lemma}[section]
\newtheorem{appproposition}{Proposition}[section]
\newtheorem{apptheorem}{Theorem}[section]
\newtheorem{appcorollary}{Corollary}[section]

\newtheorem{appremark}{Remark}[section]
\newtheorem{appexample}{Example}[section]

\section{Technical appendix for Section~4.3}
\label{app:local-theory}

This appendix collects the theory that underlies Section~4.3. Throughout, we fix one cluster--mark block $(k,m)$ and suppress the block indices whenever no confusion arises. Thus
\[
\nu_i := \nu_{ik},
\qquad
\mathcal I_i := \mathcal I_{im},
\qquad
\eta := \mathbb{E}_q(\tau_{km}^{-2}),
\qquad
a := \sum_{i=1}^n \nu_i T_i,
\qquad
\lambda_P := \frac{\eta}{2a}.
\]
We write
\begin{align*}
s_{\bm\theta}(\bm y) := \bm B(\bm y)^\top\bm\theta,
\qquad
\lambda_{\bm\theta}(\bm y) := s_{\bm\theta}(\bm y)^2,
\end{align*}
so the squared link is sign-symmetric:
\begin{align*}
\lambda_{\bm\theta}(\bm y)=\lambda_{-\bm\theta}(\bm y).
\end{align*}
The normalized positive-domain empirical criterion is
\begin{align*}
\widetilde J(\bm \theta)
=
-\bm\theta^\top \bm{M}\bm\theta
+
\frac{2}{a}\sum_{i=1}^n \nu_i \sum_{j\in\mathcal I_i}\log s_{\bm\theta}(\bm{y}_{ij})
-
\lambda_P\,\bm\theta^\top\bm\Omega\,\bm\theta,
\qquad \bm\theta\in D,
\end{align*}
where
\begin{align*}
\bm{M} := \int_{\mathcal B} \bm B(\bm y)\bm B(\bm y)^\top\,d\bm{y}.
\end{align*}
The corresponding deterministic benchmark criterion is
\begin{align*}
Q(\bm \theta)
=
-\bm\theta^\top \bm{M}\bm\theta
+
2\int_{\mathcal B}\lambda_0(\bm y)\log s_{\bm\theta}(\bm y)\,d\bm{y}
-
\bar\lambda\,\bm\theta^\top\bm\Omega\,\bm\theta,
\qquad \bm\theta\in D,
\end{align*}
where $\bar\lambda\in[0,\infty)$ is deterministic.

\subsection{Positive chambers and compact cores}

This subsection prepares the population optimization problem for the positive branch of the squared-link model. Although the squared link gives the same intensity after a global sign change, the logarithmic likelihood becomes unstable when the underlying linear predictor approaches zero. We therefore first establish that the population maximizer is separated from this zero boundary. Under a strictly positive target intensity and an interior approximation condition, candidates close to the zero boundary are uniformly worse than an interior candidate. This reduction allows the subsequent consistency argument to be carried out on a compact interior region where the objective is well behaved.

Fix constants $\delta>0$ and $R>0$. Define
\begin{align*}
D
&:=
\left\{
\bm\theta\in\mathbb R^d:\ \|\bm\theta\|_2\le R,\ \inf_{\bm y\in\mathcal B}s_{\bm\theta}(\bm y)>0
\right\},
\\
D_\delta
&:=
\left\{
\bm\theta\in\mathbb R^d:\ \|\bm\theta\|_2\le R,\ \inf_{\bm y\in\mathcal B}s_{\bm\theta}(\bm y)\ge \delta
\right\},
\\
D_\delta^{\mathrm{bd}}
&:=
\left\{
\bm\theta\in\mathbb R^d:\ \|\bm\theta\|_2\le R,\ 0<\inf_{\bm y\in\mathcal B}s_{\bm\theta}(\bm y)< \delta
\right\}.
\end{align*}
Then
\begin{align*}
D = D_\delta \cup D_\delta^{\mathrm{bd}}.
\end{align*}
The set $D$ is convex because it is the intersection of the Euclidean ball with the half-space constraints $\bm B(\bm y)^\top\bm\theta>0$ for all $\bm y\in\mathcal B$. The set $D_\delta$ is convex and closed.

Lemma~\ref{lem:app-bounded-basis} verifies the feasibility of the positivity constraint. Under continuity, nonnegativity, and the absence of a common zero for the basis functions, one can choose a positive coefficient vector such that the linear predictor is uniformly bounded away from zero on the compact domain. This ensures that the positive interior region used later is nonempty.

\begin{applemma}[Basis boundedness and nonemptiness of positive chambers]
\label{lem:app-bounded-basis}
Suppose that $\mathcal B$ is compact and each component of
$\bm B:\mathcal B\to\mathbb R^d$ is continuous. Then
\[
B_{\max}:=\sup_{\bm y\in\mathcal B}\|\bm B(\bm y)\|_2<\infty.
\]
Assume further that $\bm B$ is componentwise nonnegative and has no common zero on
$\mathcal B$, in the sense that
\[
\kappa_{\bm B}
:=
\inf_{\bm y\in\mathcal B}\sum_{\ell=1}^d  B_\ell(\bm y)
>0 .
\]
Then the positive chamber
$D$
is nonempty for every $R>0$. Moreover,
$\left\{
\bm\theta:\|\bm\theta\|_2\le R,\ 
\bm\theta_\ell>0\ \text{for all }\ell
\right\}
\subseteq D$ .
For the compact positive core
$D_\delta$,
we have the inclusion
\[
\left\{
\bm\theta:\|\bm\theta\|_2\le R,\ 
\theta_\ell\ge \frac{\delta}{\kappa_{\bm B}}
\ \text{for all }\ell
\right\}
\subseteq D_\delta .
\]
Consequently, if
$R\ge \frac{\sqrt d\,\delta}{\kappa_{\bm B}}$,
then $D_\delta$ is nonempty.
In particular, if $\bm B$ is a nonnegative partition-of-unity basis, namely
\[
\sum_{\ell=1}^d B_\ell(\bm y)=1
\qquad\text{for all } \bm y\in\mathcal B,
\]
then $\kappa_{\bm B}=1$, one may take $B_{\max}=1$, and
\[
\left\{
\bm \theta:\|\bm\theta\|_2\le R,\ 
\theta_\ell\ge\delta
\ \text{for all }\ell
\right\}
\subseteq D_\delta .
\]
\end{applemma}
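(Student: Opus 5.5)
The lemma consists of a few elementary facts, and I would prove them in the order stated.

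\emph{Boundedness of the basis.} Each component $B_\ell$ is continuous on the compact set $\mathcal B$, so $\bm y\mapsto\|\bm B(\bm y)\|_2$ is continuous and attains a finite maximum. This gives $B_{\max}<\infty$. By the same argument, $\bm y\mapsto\sum_\ell B_\ell(\bm y)$ is continuous on $\mathcal B$ and attains its infimum at some point $\bm y_0$. Since there is no common zero and all $B_\ell\ge 0$, the sum is strictly positive at $\bm y_0$. Hence $\kappa_{\bm B}>0$ follows automatically from the no-common-zero assumption. The statement posits this as the definition of no common zero, so I will note both readings.

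\emph{The core inclusion.} This is the key pointwise estimate. Take $\bm\theta$ with $\|\bm\theta\|_2\le R$ and $\theta_\ell\ge c$ for all $\ell$, where $c\ge 0$. Because $B_\ell\ge 0$, for every $\bm y\in\mathcal B$ we have
\begin{align*}
s_{\bm\theta}(\bm y)=\sum_{\ell=1}^d B_\ell(\bm y)\theta_\ell\ \ge\ c\sum_{\ell=1}^d B_\ell(\bm y)\ \ge\ c\,\kappa_{\bm B}.
\end{align*}
Choosing $c=\delta/\kappa_{\bm B}$ gives $\inf_{\bm y} s_{\bm\theta}(\bm y)\ge\delta$, which is the $D_\delta$ inclusion.

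\emph{The strictly positive inclusion.} Here the same bound must be used with a little more care. If every $\theta_\ell>0$, set $c=\min_\ell\theta_\ell>0$; this is a minimum over finitely many coordinates, so it is genuinely positive. Then $\inf_{\bm y} s_{\bm\theta}(\bm y)\ge c\,\kappa_{\bm B}>0$, so $\bm\theta\in D$. The only subtle point is that $D$ requires strict positivity of the infimum, not merely pointwise positivity, and the uniform constant $\kappa_{\bm B}$ is exactly what provides it. Nonemptiness of $D$ for every $R>0$ then follows by exhibiting the vector $\bm\theta=(R/\sqrt d)\mathbf 1$, which has norm $R$ and all coordinates positive.

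\emph{Nonemptiness of $D_\delta$.} Take $\bm\theta=(\delta/\kappa_{\bm B})\mathbf 1$. Its norm is $\sqrt d\,\delta/\kappa_{\bm B}$, which is at most $R$ under the stated condition. It satisfies the coordinate bounds, so it lies in $D_\delta$ by the core inclusion.

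\emph{Partition-of-unity case.} Here $\sum_\ell B_\ell\equiv 1$, so $\kappa_{\bm B}=1$. Moreover, with $B_\ell\ge 0$,
\begin{align*}
\|\bm B(\bm y)\|_2\le\|\bm B(\bm y)\|_1=1,
\end{align*}
so $B_{\max}\le 1$ and one may take $B_{\max}=1$. Substituting $\kappa_{\bm B}=1$ into the core inclusion gives the final claim.

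I expect no real obstacle. The only step needing care is the strict-positivity inclusion, where a uniform positive lower bound is needed rather than a pointwise one.
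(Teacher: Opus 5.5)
Your proposal is correct and follows essentially the same route as the paper: continuity on a compact set gives $B_{\max}<\infty$, the pointwise bound $\bm B(\bm y)^\top\bm\theta\ge(\min_\ell\theta_\ell)\,\kappa_{\bm B}$ gives both inclusions, the witness $(\delta/\kappa_{\bm B})\mathbf 1_d$ gives nonemptiness of $D_\delta$, and $\|\bm B(\bm y)\|_2\le\|\bm B(\bm y)\|_1=1$ handles the partition-of-unity case. The two small differences are harmless: you use the explicit witness $(R/\sqrt d)\mathbf 1_d$ for $D$ where the paper uses a small positive multiple of $\mathbf 1_d$, and you add the observation that $\kappa_{\bm B}>0$ follows from the no-common-zero condition by compactness.
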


\begin{proof}
Since each component of $\bm B$ is continuous and $\mathcal B$ is compact, the map
$\bm y\mapsto \|\bm B(\bm y)\|_2$ is continuous on $\mathcal B$ and therefore attains its
maximum. Hence $B_{\max}<\infty$.

Now suppose $\bm B$ is componentwise nonnegative and $\kappa_{\bm B}>0$. For any
$\bm \theta$ with $\theta_\ell>0$ for all $\ell$, let
\[
\theta_{\min}:=\min_{\ell=1,\ldots,d}\theta_\ell>0.
\]
Then, for every $\bm y\in\mathcal B$,
\[
\bm B(\bm y)^\top\bm\theta
=
\sum_{\ell=1}^d B_\ell(\bm y)\theta_\ell
\ge
\theta_{\min}\sum_{\ell=1}^d B_\ell(\bm y)
\ge
\theta_{\min}\kappa_{\bm B}
>0.
\]
Thus every strictly positive coefficient vector in the radius-$R$ ball belongs
to $D$. Such vectors exist for every $R>0$, for example a sufficiently small
positive multiple of $\mathbf 1_d$.

Similarly, if $\theta_\ell\ge\delta/\kappa_{\bm B}$ for all $\ell$, then
\[
\bm B(\bm y)^\top\bm\theta
\ge
\frac{\delta}{\kappa_{\bm B}}
\sum_{\ell=1}^d B_\ell(\bm y)
\ge
\delta
\]
for all $\bm y\in\mathcal B$. This proves the stated inclusion in $D_\delta$.
The nonemptiness condition follows by taking
$\bm\theta=(\delta/\kappa_{\bm B})\mathbf 1_d$, whose Euclidean norm is
$\sqrt d\,\delta/\kappa_{\bm B}$.

Finally, under the nonnegative partition-of-unity condition,
\[
\kappa_{\bm B}
=
\inf_{\bm y\in\mathcal B}\sum_{\ell=1}^dB_\ell(\bm y)
=
1.
\]
Moreover,
\[
\|\bm B(\bm y)\|_2
\le
\|\bm B(\bm y)\|_1
=
\sum_{\ell=1}^dB_\ell(\bm y)
=
1,
\]
so one may take $B_{\max}=1$.
\end{proof}

\begin{appexample}[B-spline bases]
\label{exm:bs1}
For standard B-spline bases with an open knot sequence on a compact domain, the basis functions are continuous, nonnegative, and form a partition of unity. Hence $B_{\max}=1$ and $\kappa_{\bm B}=1$. The same conclusion holds for tensor-product B-spline bases on rectangular domains.
\end{appexample}

Next, Lemma~\ref{lem:app-strong-concavity} provides the uniqueness ingredient for the population optimization problem. It shows that \(Q\) is strongly concave on \(D\), with curvature controlled by the smallest eigenvalue of \(\bm M\). Hence \(Q\) can have at most one maximizer on \(D\), and consequently at most one maximizer on the compact core \(D_\delta\).

\begin{applemma}[Strong concavity on the positive domain]
\label{lem:app-strong-concavity}
Suppose $\bm M\succ0$, $\bm \Omega\succeq0$, and $\bar\lambda\ge0$. Then $Q$ is strongly concave on $D$. More precisely, if $\mu:=\lambda_{\min}(\bm M)>0$, then for every $\bm\theta,\bm\theta'\in D$,
\begin{equation}
Q(\bm\theta')
\le
Q(\bm\theta)+\nabla Q(\bm\theta)^\top(\bm\theta'-\bm\theta)-\mu\|\bm\theta'-\bm\theta\|_2^2.
\label{eq:app-strong-concavity}
\end{equation}
In particular, $Q$ has at most one maximizer on $D$.
\end{applemma}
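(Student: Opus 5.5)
The plan is to split $Q$ into three pieces and show that each is concave. We then check that the quadratic piece $-\bm\theta^\top\bm M\bm\theta$ alone supplies the curvature constant $\mu$, and that the other two pieces only add further concavity. Write
\[
Q(\bm\theta)=Q_1(\bm\theta)+Q_2(\bm\theta)+Q_3(\bm\theta),\quad Q_1(\bm\theta)=-\bm\theta^\top\bm M\bm\theta,\quad Q_2(\bm\theta)=2\int_{\mathcal B}\lambda_0(\bm y)\log s_{\bm\theta}(\bm y)\,d\bm y,\quad Q_3(\bm\theta)=-\bar\lambda\,\bm\theta^\top\bm\Omega\bm\theta .
\]
First, $D$ is convex, since it is an intersection of a ball and half-spaces, as noted above. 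So for $\bm\theta,\bm\theta'\in D$ the whole segment between them lies in $D$, and inequality \eqref{eq:app-strong-concavity} can be checked term by term.

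For $Q_1$ an exact identity holds:
\[
Q_1(\bm\theta')=Q_1(\bm\theta)+\nabla Q_1(\bm\theta)^\top(\bm\theta'-\bm\theta)-(\bm\theta'-\bm\theta)^\top\bm M(\bm\theta'-\bm\theta),
\]
and the last term is at most $-\mu\|\bm\theta'-\bm\theta\|_2^2$. For $Q_3$, the assumptions $\bar\lambda\ge0$ and $\bm\Omega\succeq0$ give the same identity with a nonpositive remainder. For $Q_2$, concavity of $\log$ implies, pointwise in $\bm y$ (where $s_{\bm\theta}(\bm y)>0$ on $D$),
\[
\log s_{\bm\theta'}(\bm y)\le \log s_{\bm\theta}(\bm y)+\frac{\bm B(\bm y)^\top(\bm\theta'-\bm\theta)}{s_{\bm\theta}(\bm y)} .
\]
We multiply this by $2\lambda_0(\bm y)\ge0$ and integrate over $\mathcal B$. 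This yields the linearization inequality for $Q_2$, with the integrated linear term serving as $\nabla Q_2(\bm\theta)^\top(\bm\theta'-\bm\theta)$. Adding the three inequalities gives \eqref{eq:app-strong-concavity}.

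The main obstacle is making sense of $Q_2$ and its gradient on all of $D$. This requires two facts.
\begin{itemize}
\item The integral $\int\lambda_0\log s_{\bm\theta}$ must be finite. On $D$ we have $0<\inf s_{\bm\theta}\le s_{\bm\theta}\le B_{\max}R$ (Lemma~\ref{lem:app-bounded-basis}), and $\inf s_{\bm\theta}$ is attained by continuity on the compact set $\mathcal B$. Hence $\log s_{\bm\theta}$ is bounded, and $\lambda_0\in L^1$ suffices; integrability of $\lambda_0$ is part of the standing assumptions.
\item Differentiation under the integral must be justified, giving $\nabla Q_2(\bm\theta)=2\int\lambda_0\,\bm B/s_{\bm\theta}$. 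Near a fixed $\bm\theta\in D$, $s$ stays bounded below, say by $\tfrac12\inf s_{\bm\theta}$, on a small neighbourhood. So $\lambda_0\|\bm B\|/s$ is dominated by an integrable function, and dominated convergence applies.
\end{itemize}
If one prefers to avoid differentiability, the same inequality holds with any supergradient, and the argument is unchanged.

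Finally, we derive uniqueness. Suppose $\bm\theta$ and $\bm\theta'$ are both maximizers with $\bm\theta\ne\bm\theta'$. We apply strong concavity with midpoint $\bm m=(\bm\theta+\bm\theta')/2\in D$. Summing \eqref{eq:app-strong-concavity} at base point $\bm m$ towards $\bm\theta$ and towards $\bm\theta'$, the gradient terms cancel, and we get
\[
Q(\bm\theta)+Q(\bm\theta')\le 2Q(\bm m)-\tfrac{\mu}{2}\|\bm\theta-\bm\theta'\|_2^2 .
\]
Hence $Q(\bm m)>\max\{Q(\bm\theta),Q(\bm\theta')\}$, which contradicts maximality. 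Therefore $Q$ has at most one maximizer on $D$.
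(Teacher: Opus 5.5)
Your argument is correct. It reaches the inequality by a first-order route, whereas the paper uses a second-order one.

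The paper differentiates twice under the integral, obtaining $\nabla^2Q(\bm\theta)=-2\bm M-2\bar\lambda\bm\Omega-2\int_{\mathcal B}\lambda_0(\bm y)\bm B(\bm y)\bm B(\bm y)^\top s_{\bm\theta}(\bm y)^{-2}\,d\bm y$. It bounds this by $\bm v^\top\nabla^2Q(\bm\theta)\bm v\le-2\mu\|\bm v\|_2^2$ and reads off \eqref{eq:app-strong-concavity}, implicitly through a Taylor expansion along the segment, which is where convexity of $D$ enters.

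You instead keep the exact quadratic identities for $-\bm\theta^\top\bm M\bm\theta$ and $-\bar\lambda\bm\theta^\top\bm\Omega\bm\theta$. For the log term you integrate the pointwise tangent inequality for $\log$ against $2\lambda_0\ge0$. Both routes rest on the same observation: only the $\bm M$ term supplies the modulus $\mu$, and the likelihood and penalty terms merely add concavity.

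Your version has two advantages. It needs at most one differentiation under the integral, and none if the linear term is read as a supergradient. The remainder $-(\bm\theta'-\bm\theta)^\top\bm M(\bm\theta'-\bm\theta)$ is also available in closed form rather than through a Hessian bound. The paper's computation is more compact, and it exhibits the population analogue of the Hessian that the constrained Laplace step uses to form $\bm\Sigma^*_{\bm\theta_{km}}$.

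You also write out the midpoint argument for ``at most one maximizer'', which the paper asserts but does not prove.
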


\begin{proof}
For $\bm\theta\in D$, differentiation under the integral gives
\begin{align*}
\nabla^2 Q(\bm\theta)
=
-2\bm M - 2\bar\lambda\,\bm\Omega
-2\int_{\mathcal B}\lambda_0(\bm y)\,
\frac{\bm B(\bm y)\bm B(\bm y)^\top}{(\bm B(\bm y)^\top\bm \theta)^2}\,d\bm{y}.
\end{align*}
Hence for every $\bm v\in\mathbb R^d$,
\[
\bm v^\top \nabla^2Q(\bm \theta)\bm v
\le -2\bm v^\top \bm M\bm v
\le -2\mu\|\bm v\|_2^2.
\]
Therefore $Q$ is strongly concave with modulus $\mu$, which yields \eqref{eq:app-strong-concavity}.
\end{proof}

Lastly, Lemma~\ref{lem:app-compact-core} provides the compactness needed for the population mode argument. Since \(D\) is defined by a strict positivity condition, it is not closed and a continuous objective need not attain its supremum on \(D\). The lemma shows that the interior set \(D_\delta\) is compact under continuity of the basis functions. This ensures that optimization over \(D_\delta\) has an actual maximizer rather than only a supremum.

\begin{applemma}[Compactness of the positive core]
\label{lem:app-compact-core}
Suppose each component of $\bm B$ is continuous on compact $\mathcal B$. Then $D_\delta$ is compact.
\end{applemma}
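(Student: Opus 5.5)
The approach is Heine--Borel: we show that $D_\delta$ is a closed and bounded subset of $\mathbb R^d$.

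\textbf{Boundedness.} This is immediate, since $D_\delta$ is contained in the closed Euclidean ball $\{\bm\theta:\|\bm\theta\|_2\le R\}$.

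\textbf{Closedness.} Write $D_\delta$ as an intersection of closed sets:
\[
D_\delta=\{\bm\theta:\|\bm\theta\|_2\le R\}\cap\bigcap_{\bm y\in\mathcal B}\{\bm\theta:\bm B(\bm y)^\top\bm\theta\ge\delta\}.
\]
This identity holds because $\inf_{\bm y}s_{\bm\theta}(\bm y)\ge\delta$ is equivalent to $s_{\bm\theta}(\bm y)\ge\delta$ for every $\bm y\in\mathcal B$. Each set $\{\bm\theta:\bm B(\bm y)^\top\bm\theta\ge\delta\}$ is a closed half-space, or all of $\mathbb R^d$ or the empty set if $\bm B(\bm y)=0$; in every case it is closed. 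The ball is closed. An arbitrary intersection of closed sets is closed, so $D_\delta$ is closed.

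This argument does not actually need continuity of $\bm B$ or compactness of $\mathcal B$: since the index set is arbitrary, no pointwise control in $\bm y$ is required.

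\textbf{Alternative via a continuous function.} To use the hypotheses of the statement, one can instead define $g(\bm\theta):=\inf_{\bm y\in\mathcal B}\bm B(\bm y)^\top\bm\theta$. By Lemma~\ref{lem:app-bounded-basis}, $B_{\max}<\infty$, and then
\[
|g(\bm\theta)-g(\bm\theta')|\le B_{\max}\|\bm\theta-\bm\theta'\|_2,
\]
because the infimum of $B_{\max}$-Lipschitz functions is $B_{\max}$-Lipschitz. Hence $g$ is continuous, and $D_\delta=\{\|\bm\theta\|_2\le R\}\cap g^{-1}([\delta,\infty))$ is closed.

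\textbf{Conclusion.} $D_\delta$ is closed and bounded in $\mathbb R^d$, hence compact by Heine--Borel. No step presents a real obstacle; the only subtlety is checking that the infimum condition does not produce a non-closed set, and the half-space representation disposes of this. Compactness may hold vacuously if $D_\delta$ is empty; nonemptiness is handled separately by Lemma~\ref{lem:app-bounded-basis}.
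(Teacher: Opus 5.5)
Your proof is correct, and your main argument takes a different route from the paper's.

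The paper's proof is essentially your alternative. The ball is compact and $\bm\theta\mapsto\inf_{\bm y\in\mathcal B}\bm B(\bm y)^\top\bm\theta$ is continuous, so $D_\delta$ is a closed subset of a compact set. Your $B_{\max}$-Lipschitz estimate makes explicit a step the paper leaves implicit: continuity of $\bm B$ on compact $\mathcal B$ gives a uniform bound, and that bound is what yields continuity of the infimum in $\bm\theta$.

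Your half-space representation instead writes $\{\bm\theta:\inf_{\bm y}s_{\bm\theta}(\bm y)\ge\delta\}$ as an intersection of closed half-spaces. Equivalently, an infimum of linear functions is upper semicontinuous, so its superlevel sets are closed. This is more elementary, shows that closedness of $D_\delta$ needs no hypothesis on $\bm B$ or $\mathcal B$, and gives convexity of $D_\delta$ at the same time.

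What the paper's continuity route buys is reach. Applied to both $\bm\theta\mapsto\sup_{\bm y}s_{\bm\theta}(\bm y)$ and $\bm\theta\mapsto\inf_{\bm y}s_{\bm\theta}(\bm y)$, it also gives compactness of $S_\delta^{\mathrm{sc}}$, and hence of $T_\delta$. The paper later takes that compactness for granted in the full-sign arguments.

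For those sets your half-space trick does not carry over. The constraints $\sup_{\bm y}s_{\bm\theta}(\bm y)\ge\delta$ and $\inf_{\bm y}s_{\bm\theta}(\bm y)\le-\delta$ are level sets of the wrong semicontinuity type, since a supremum of linear functions is only lower semicontinuous. Their closedness therefore relies on regularity of $\bm B$. Your Lipschitz bound covers that case unchanged, because a supremum of $B_{\max}$-Lipschitz functions is again $B_{\max}$-Lipschitz.
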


\begin{proof}
The ball $\{\bm\theta:\|\bm\theta\|_2\le R\}$ is compact. Since $\bm y\mapsto \bm B(\bm y)^\top\bm\theta$ is continuous on compact $\mathcal B$, the map
\[
\bm\theta \mapsto \inf_{\bm y\in\mathcal B} \bm B(\bm y)^\top\theta
\]
is continuous. Thus $D_\delta$ is a closed subset of a compact set.
\end{proof}

\subsection{Population assumptions and positive-domain interiority}

This subsection establishes the population target for the positive chamber optimization problem. The main goal is to show that the population criterion $Q$ has a well-defined and unique maximizer. The argument combines three ingredients: the boundary region near zero is excluded, the remaining interior feasible set is compact, and the criterion is strongly concave on the positive chamber $D$. As a result, the population mode \(\theta^\star\) exists, is unique, and lies in \(D_\delta\). This provides the deterministic target for the empirical mode consistency result in the next subsection.

We first introduce the required assumptions.
Let $g_0(\bm y):=\sqrt{\lambda_0(\bm y)}$.

\begin{appassumption}[Strict positivity of the target square root]
\label{ass:app-target}
There exist constants $0<c_0\le C_0<\infty$ such that
\begin{align*}
c_0 \le g_0(\bm y)\le C_0
\qquad\text{for all } \bm y\in\mathcal B.
\end{align*}
\end{appassumption}

\begin{appassumption}[Existence of an interior witness]
\label{ass:app-witness}
There exist $\varepsilon>0$ and $\bm\theta^+\in D_{2\delta}$ such that
\begin{align*}
\|g_0-s_{\bm\theta^+}\|_\infty\le \varepsilon,
\qquad
\varepsilon < c_0/2.
\end{align*}
\end{appassumption}

\begin{appassumption}[Boundary-strip separation]
\label{ass:app-boundary}
Define
\begin{align*}
d(\delta)
:=
\inf_{\bm\theta\in D_\delta^{\mathrm{bd}}}\|s_{\bm\theta}-g_0\|_2.
\end{align*}
Assume that
\begin{align*}
d(\delta)^2 > M_\varepsilon |\mathcal B|\,\varepsilon^2,
\end{align*}
where
\begin{align*}
M_\varepsilon := 1+\frac{C_0^2}{(c_0-\varepsilon)^2}.
\end{align*}
\end{appassumption}

Assumptions~\ref{ass:app-target}--\ref{ass:app-boundary} give primitive conditions under which the zero-boundary region can be ruled out at the population level. Assumption~\ref{ass:app-target} requires the target square root \(g_0\) to be uniformly positive and bounded, so the true intensity is separated from zero. Assumption~\ref{ass:app-witness} requires that this target can be approximated by an interior positive candidate \(\bm\theta^+\in D_{2\delta}\). Assumption~\ref{ass:app-boundary} then imposes a separation condition: candidates in the boundary strip \(D_\delta^{\mathrm{bd}}\) must remain sufficiently far from \(g_0\) in \(L^2\). Together, these assumptions ensure that the population criterion favors an interior positive candidate over points close to the zero boundary.

\begin{appexample}[B-spline verification of Assumption~\ref{ass:app-witness}]
\label{exm:app-bs-witness}
Suppose that \(\bm B\) is a tensor-product B-spline basis of degree \(p\) on a compact rectangular domain \(\mathcal B\subset\mathbb R^H\), and that Assumption~\ref{ass:app-target} holds. Assume further that \(g_0\) is \(\alpha\)-smooth with \(\alpha\le p+1\). If \(h\) is the maximal knot spacing, then standard spline approximation gives a coefficient vector \(\bm\theta^+\) such that
\[
\|g_0-\bm B^\top\bm\theta^+\|_\infty \le C h^\alpha .
\]
Taking \(h\) sufficiently small yields \(\varepsilon:=C h^\alpha<c_0/2\). Therefore
\[
s_{\bm\theta^+}(\bm y)=\bm B(\bm y)^\top\bm\theta^+
\ge g_0(y)-\varepsilon
\ge c_0-\varepsilon
\]
for all \(\bm y\in \mathcal B\). If \(\delta\) is chosen so that \(2\delta<c_0-\varepsilon\), then
\[
\inf_{\bm y\in \mathcal B}s_{\bm\theta^+}(\bm y)\ge 2\delta .
\]
Thus \(\bm\theta^+\in D_{2\delta}\), provided that \(R\) is chosen large enough to contain the approximating coefficient vector. Hence Assumption~\ref{ass:app-witness} is verified for sufficiently fine knots.
\end{appexample}

\begin{appexample}[B-spline verification of Assumption~\ref{ass:app-boundary}]
\label{exm:bs3}
Suppose that \(\bm B\) is a tensor-product B-spline basis of degree \(p\) on a compact rectangular domain \(\mathcal B\subset\mathbb R^H\), with maximal knot spacing \(h\). Let \(\bm\theta^+\in D_{2\delta}\) and \(\varepsilon\ge\|g_0-s_{\bm\theta^+}\|_\infty\) be as in the previous example, so that \(\varepsilon = C_1h^\alpha\). Assume also that \(g_0\) is Lipschitz; this is implied, for example, when \(\alpha\ge 1\).
We now lower bound the distance from \(g_0\) to the boundary strip. For the spline class with \(\|\bm\theta\|_2\le R\), there exists a constant \(C_2>0\) such that the Lipschitz constant of \(s_{\bm\theta}\) satisfies
\[
L\le C_2h^{-1}.
\]
For any \(\bm\theta\in D_\delta^{\mathrm{bd}}\), there exists \(\bm y_0\in \mathcal B\) such that
\[
s_{\bm\theta}(\bm y_0)<\delta .
\]
By Assumption~\ref{ass:app-target}, \(g_0(\bm y_0)\ge c_0\). If \(\delta<c_0/2\), then
\[
g_0(\bm y_0)-s_{\bm \theta}(\bm y_0)>c_0-\delta>c_0/2.
\]
Let \(L_0\) be a Lipschitz constant of \(g_0\). Then \(g_0-s_{\bm\theta}\) is Lipschitz with constant at most \(L_0+L\). Hence the above gap persists on a neighborhood of \(y_0\). In particular, for
\[
r:=\frac{c_0}{4(L_0+L)},
\]
we have
\[
g_0(\bm y)-s_{\bm\theta}(\bm y)>c_0/4
\]
for all \(\bm y\in \mathcal B\cap \mathbb B(\bm y_0,r)\). Since \(L\le C_2h^{-1}\), there exists a constant \(C_3>0\) such that
\[
r\ge C_3h
\]
for sufficiently small \(h\). Since the domain is rectangular, there exists \(C_4>0\) such that
\[
|\mathcal B\cap \mathbb B(\bm y_0,r)|\ge C_4 r^H .
\]
Therefore,
\[
\|s_{\bm\theta}-g_0\|_2^2
\ge
\frac{c_0^2}{16}|\mathcal B\cap \mathbb B(\bm y_0,r)|
\ge
C_5 h^H
\]
for some constant \(C_5>0\). Hence
\[
d(\delta)
=
\inf_{\bm \theta\in D_\delta^{\mathrm{bd}}}\|s_{\bm\theta}-g_0\|_2
\ge C_6 h^{H/2}.
\]

Combining this with \(\varepsilon\le C_1h^\alpha\), we obtain
\[
\frac{\varepsilon}{d(\delta)}
\le
C_7 h^{\alpha-H/2}.
\]
Thus, if \(\alpha>H/2\), then \(\varepsilon/d(\delta)\to0\) as \(h\to0\). More generally, the sufficient condition is
\[
M_\varepsilon h^{2\alpha-H}\to0 .
\]
In particular, if \(M_\varepsilon\) remains bounded as \(h\to0\) and \(\alpha>H/2\), then
\[
d(\delta)^2>M_\varepsilon |B|\varepsilon^2
\]
for sufficiently small \(h\). Therefore Assumption~\ref{ass:app-boundary} holds.
\end{appexample}

Proposition~\ref{prop:app-boundary-strip} is the boundary-exclusion step for the population objective. Its purpose is to show that the population criterion cannot be maximized by functions that stay positive but approach the zero boundary. Under the interior approximation condition and the \(L^2\)-separation condition, every candidate in the boundary strip is uniformly worse than the interior witness \(\bm\theta^+\). Consequently, the population maximization over the positive chamber can be reduced to the interior core \(D_\delta\).

Define
\[
\Delta_P(\delta)
:=
\sup_{\bm\theta\in D_\delta^{\mathrm{bd}}}
\left|
\bm\theta^{+\top}\bm\Omega\bm\theta^+
-
\bm\theta^\top\bm\Omega\bm\theta
\right|.
\]
Since \(D_\delta^{\mathrm{bd}}\) is bounded and \(\bm\Omega\) is fixed, \(\Delta_P(\delta)<\infty\). This term controls the possible penalty difference between the interior witness and the boundary strip, and Proposition~\ref{prop:app-boundary-strip} assumes that its scaled contribution \(\bar\lambda\Delta_P(\delta)\) is sufficiently small. In the setting covered by Lemma~\ref{lem:app-operator-norm} and Proposition~\ref{prop:app-gamma} below, Assumption~\ref{ass:app-penalty} can be taken as \(\bar\lambda=0\). In that case, the scaled penalty contribution vanishes, and the required smallness condition follows directly from Assumption~\eqref{ass:app-boundary}.

\begin{appproposition}[Population exclusion of the boundary strip]
\label{prop:app-boundary-strip}
Suppose Assumptions~\ref{ass:app-target}--\ref{ass:app-boundary} hold. If
\begin{equation}
\bar\lambda\,\Delta_P(\delta)
<
d(\delta)^2 - M_\varepsilon |\mathcal B|\,\varepsilon^2,
\label{eq:app-small-penalty-pop}
\end{equation}
then
\begin{equation}
\sup_{\bm\theta\in D_\delta^{\mathrm{bd}}} Q(\bm\theta)
<
Q(\bm\theta^+)
\le
\sup_{\bm\theta\in D_{2\delta}} Q(\bm\theta).
\label{eq:app-boundary-strip-gap}
\end{equation}
In particular, every maximizer of $Q$ over $D$ lies in $D_\delta$.
\end{appproposition}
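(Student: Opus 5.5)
The plan is to compare every candidate with the ideal square root $g_0$ through a pointwise inequality for the integrand of $Q$. Write $\varphi_{\bm y}(s) := 2\lambda_0(\bm y)\log s - s^2 = 2g_0(\bm y)^2\log s - s^2$ for $s>0$, so that
\[
Q(\bm\theta)=\int_{\mathcal B}\varphi_{\bm y}(s_{\bm\theta}(\bm y))\,d\bm y-\bar\lambda\,\bm\theta^\top\bm\Omega\bm\theta ,
\qquad
\bm\theta^\top\bm M\bm\theta=\int_{\mathcal B}s_{\bm\theta}^2\,d\bm y .
\]
For $\bm\theta\in D$, $s_{\bm\theta}$ is continuous on compact $\mathcal B$ and bounded away from $0$, and $\lambda_0\le C_0^2$ by Assumption~\ref{ass:app-target}. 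Hence every integral is finite. Set $K:=\int_{\mathcal B}\varphi_{\bm y}(g_0(\bm y))\,d\bm y$, which is finite by Assumption~\ref{ass:app-target}. Substituting $t=s/g_0>0$ gives
\[
\varphi_{\bm y}(g_0)-\varphi_{\bm y}(s)=g_0^2\,\psi(t),
\qquad
\psi(t):=t^2-1-2\log t .
\]
The proof then needs a lower bound on $\psi$ for the boundary strip and an upper bound on $\psi$ for the witness.

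\emph{Lower bound.} We have $\psi(t)-(t-1)^2=2(t-1-\log t)\ge 0$ for all $t>0$. Therefore $\varphi_{\bm y}(g_0)-\varphi_{\bm y}(s)\ge (s-g_0)^2$ pointwise. Integrating, for every $\bm\theta\in D_\delta^{\mathrm{bd}}$,
\[
Q(\bm\theta)\le K-\|s_{\bm\theta}-g_0\|_2^2-\bar\lambda\,\bm\theta^\top\bm\Omega\bm\theta\le K-d(\delta)^2-\bar\lambda\,\bm\theta^\top\bm\Omega\bm\theta .
\]

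\emph{Upper bound at the witness.} Write $t-1-\log t=\int_t^1 (1-x)/x\,dx\le (1-t)^2/(2t)$ for $t<1$, and $t-1-\log t\le (t-1)^2/2$ for $t\ge1$. This gives $\psi(t)\le (t-1)^2(1+1/t)$, and hence
\[
\varphi_{\bm y}(g_0)-\varphi_{\bm y}(s)\le (s-g_0)^2\bigl(1+g_0/s\bigr).
\]
For $s=s_{\bm\theta^+}$, Assumption~\ref{ass:app-witness} gives $s\ge g_0-\varepsilon\ge c_0-\varepsilon>0$. So $g_0/s\le C_0/(c_0-\varepsilon)$, which is at most $C_0^2/(c_0-\varepsilon)^2$ because this ratio is at least $1$. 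The pointwise factor is therefore bounded by $M_\varepsilon$, and integrating with $|s_{\bm\theta^+}-g_0|\le\varepsilon$ yields
\[
Q(\bm\theta^+)\ge K-M_\varepsilon|\mathcal B|\varepsilon^2-\bar\lambda\,\bm\theta^{+\top}\bm\Omega\bm\theta^+ .
\]

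\emph{Conclusion.} Subtracting the two bounds, uniformly over $\bm\theta\in D_\delta^{\mathrm{bd}}$,
\[
Q(\bm\theta^+)-Q(\bm\theta)\ge d(\delta)^2-M_\varepsilon|\mathcal B|\varepsilon^2-\bar\lambda\,\Delta_P(\delta)=:\gamma>0,
\]
where $\gamma>0$ by \eqref{eq:app-small-penalty-pop}. Since this gap is uniform, $\sup_{D_\delta^{\mathrm{bd}}}Q\le Q(\bm\theta^+)-\gamma<Q(\bm\theta^+)$, which gives the strict first inequality. The second inequality in \eqref{eq:app-boundary-strip-gap} is immediate from $\bm\theta^+\in D_{2\delta}$.

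For the final claim, use $D=D_\delta\cup D_\delta^{\mathrm{bd}}$ and $\bm\theta^+\in D_{2\delta}\subseteq D$ (the radius constraint is part of Assumption~\ref{ass:app-witness}). A maximizer of $Q$ over $D$ lying in $D_\delta^{\mathrm{bd}}$ would have value below $Q(\bm\theta^+)$, which is impossible. Hence every maximizer lies in $D_\delta$.

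I expect the main obstacle to be the witness upper bound. The factor $1+g_0/s$ must be controlled by exactly the constant $M_\varepsilon$ chosen in Assumption~\ref{ass:app-boundary}, and this relies on the lower bound $s_{\bm\theta^+}\ge c_0-\varepsilon$.
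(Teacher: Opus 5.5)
Your overall route matches the paper's. The paper gets the pointwise lower bound $\ell_{\bm y}(s)-\ell_{\bm y}(g_0)\ge (s-g_0)^2$ from $\ell_{\bm y}''\ge 2$; you get it from $t-1-\log t\ge 0$. You then use a quadratic upper bound with constant $M_\varepsilon$ at the witness and subtract, absorbing the penalty difference into $\Delta_P(\delta)$. The witness step, however, contains a false inequality. From your two elementary estimates you conclude $\psi(t)\le (t-1)^2(1+1/t)$ for all $t>0$, but this only follows for $t<1$. For $t\ge 1$ your estimate $t-1-\log t\le (t-1)^2/2$ gives only $\psi(t)\le 2(t-1)^2$, which is weaker because $1+1/t<2$.

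In fact the claimed inequality fails for every $t>1$. It is equivalent to $t-1-\log t\le (t-1)^2/(2t)$. The difference $(t-1)^2/(2t)-(t-1-\log t)$ vanishes at $t=1$ and has derivative $-(t-1)^2/(2t^2)<0$, so it is negative for $t>1$. For example, $\psi(2)=3-2\log 2\approx 1.61>1.5$. Assumption~\ref{ass:app-witness} only gives $|s_{\bm\theta^+}-g_0|\le\varepsilon$, so the witness may exceed $g_0$ on parts of $\mathcal B$. There your displayed bound $\varphi_{\bm y}(g_0)-\varphi_{\bm y}(s)\le (s-g_0)^2(1+g_0/s)$ is simply not true.

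The repair uses only what you already have. Use the two-case bound $\varphi_{\bm y}(g_0)-\varphi_{\bm y}(s)\le (s-g_0)^2\max\{2,\,1+g_0/s\}$. Since $C_0\ge c_0>c_0-\varepsilon>0$, you have $M_\varepsilon=1+C_0^2/(c_0-\varepsilon)^2>2$. Your existing argument gives $1+g_0/s\le M_\varepsilon$ when $s\ge c_0-\varepsilon$, so the factor is at most $M_\varepsilon$ on both sides of $g_0$. The paper's second-order Taylor expansion avoids the case split altogether: the intermediate point $\xi$ lies between $s_{\bm\theta^+}(\bm y)$ and $g_0(\bm y)$, both in $[c_0-\varepsilon,C_0+\varepsilon]$, so $\ell_{\bm y}''(\xi)\le 2M_\varepsilon$ regardless of the sign of $s-g_0$. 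With this fix the rest of your argument is correct. That includes the uniform gap $\gamma$, which makes the strict inequality for the supremum explicit where the paper leaves it implicit.
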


\begin{proof}
Define
\begin{align*}
\ell_y(u):=u^2 - 2\lambda_0(\bm y)\log u,
\qquad u>0.
\end{align*}
Since $g_0(\bm y)^2=\lambda_0(\bm y)$,
\begin{align*}
\ell_y'(g_0(\bm y))=0,
\qquad
\ell_y''(u)=2+2\lambda_0(\bm y)u^{-2}\ge 2.
\end{align*}
Hence $u\mapsto \ell_{\bm y}(u)$ is strongly convex and minimized at $u=g_0(\bm y)$. Integrating pointwise strong convexity yields
\begin{equation}
\int_{\mathcal B}\{\ell_{\bm y}(s(\bm y))-\ell_{\bm y}(g_0(\bm y))\}\,d\bm{y}
\ge
\|s-g_0\|_2^2
\qquad\text{for every admissible } s>0.
\label{eq:app-ell-lower}
\end{equation}

Assumption~\ref{ass:app-witness} implies
\begin{align*}
s_{\bm\theta^+}(\bm y)\ge c_0-\varepsilon>c_0/2,
\qquad
s_{\bm\theta^+}(\bm y)\le C_0+\varepsilon
\qquad\text{for all } \bm y\in\mathcal B.
\end{align*}
Therefore,
\begin{align*}
\ell_{\bm y}''(u)
\le
2 + \frac{2C_0^2}{(c_0-\varepsilon)^2}
=
2M_\varepsilon
\end{align*}
on $[c_0-\varepsilon,C_0+\varepsilon]$. A second-order Taylor expansion of $\ell_{\bm y}$ around $g_0(\bm y)$ gives
\begin{align*}
\ell_{\bm y}(s_{\bm\theta^+}(\bm y))-\ell_{\bm y}(g_0(\bm y))
\le
M_\varepsilon |s_{\bm\theta^+}(\bm y)-g_0(\bm y)|^2.
\end{align*}
Integrating and using Assumption~\ref{ass:app-witness},
\begin{equation}
\int_{\mathcal B}\{\ell_{\bm y}(s_{\bm \theta^+}(\bm y))-\ell_{\bm y}(g_0(\bm y))\}\,d\bm{y}
\le
M_\varepsilon |\mathcal B|\,\varepsilon^2.
\label{eq:app-witness-loss}
\end{equation}

Now let $\bm\theta\in D_\delta^{\mathrm{bd}}$. By Assumption~\ref{ass:app-boundary} and \eqref{eq:app-ell-lower},
\begin{equation}
\int_{\mathcal B}\{\ell_{\bm y}(s_{\bm\theta}(\bm y))-\ell_{\bm y}(g_0(\bm y))\}\,d\bm{y}
\ge
d(\delta)^2.
\label{eq:app-strip-loss}
\end{equation}
Combining \eqref{eq:app-witness-loss} and \eqref{eq:app-strip-loss} gives
\begin{align*}
Q(\bm\theta^+)-Q(\bm \theta)
\ge
d(\delta)^2 - M_\varepsilon |\mathcal B|\varepsilon^2
-\bar\lambda\left(
\bm\theta^{+\top}\bm\Omega\,\bm\theta^+ - \bm\theta^\top\bm\Omega\,\bm\theta
\right).
\end{align*}
Hence
\begin{align*}
Q(\bm\theta^+)-Q(\bm\theta)
\ge
d(\delta)^2 - M_\varepsilon |\mathcal B|\varepsilon^2 - \bar\lambda\,\Delta_P(\delta).
\end{align*}
Under \eqref{eq:app-small-penalty-pop}, the right-hand side is strictly positive for every $\bm\theta\in D_\delta^{\mathrm{bd}}$, proving \eqref{eq:app-boundary-strip-gap}. Since $D=D_\delta\cup D_\delta^{\mathrm{bd}}$, no maximizer over $D$ can lie in the strip.
\end{proof}

Finally, Theorem~\ref{thm:app-positive-mode} completes the population-level argument on the positive chamber. It combines the boundary-exclusion result, compactness of the interior core, and strong concavity of \(Q\) to show that the population criterion has a well-defined and unique maximizer. Moreover, this maximizer lies in \(D_\delta\), away from the zero boundary. This result defines the deterministic target \(\bm\theta^\star\) for the empirical mode consistency result developed in the next subsection.

\begin{apptheorem}[Existence and uniqueness of the population maximizer on the positive domain]
\label{thm:app-positive-mode}
Suppose Assumptions~\ref{ass:app-target}--\ref{ass:app-boundary} hold, $\bm M\succ0$, $\bm\Omega\succeq0$, and \eqref{eq:app-small-penalty-pop} holds. Then the maximization of $Q$ over $D$ has a unique maximizer $\bm\theta^\star$. Moreover,
\begin{equation}
\bm\theta^\star\in D_\delta,
\qquad
\inf_{\bm y\in\mathcal B} \bm B(\bm y)^\top\bm\theta^\star \ge \delta.
\label{eq:app-pop-max-core}
\end{equation}
\end{apptheorem}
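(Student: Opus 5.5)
The plan combines three earlier results: Proposition~\ref{prop:app-boundary-strip} confines any maximizer to $D_\delta$, Lemma~\ref{lem:app-compact-core} supplies compactness, and Lemma~\ref{lem:app-strong-concavity} gives uniqueness. The only genuinely new ingredient is existence of a maximizer on $D_\delta$, which needs continuity of $Q$ there.

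\textbf{Step 1: continuity of $Q$ on $D_\delta$.} For $\bm\theta\in D_\delta$ and every $\bm y\in\mathcal B$,
\[
\delta\le s_{\bm\theta}(\bm y)\le B_{\max}R .
\]
The upper bound follows from Cauchy--Schwarz and Lemma~\ref{lem:app-bounded-basis}. Hence $|\log s_{\bm\theta}(\bm y)|$ is bounded uniformly in $(\bm\theta,\bm y)$. Since $\lambda_0\le C_0^2$ by Assumption~\ref{ass:app-target}, the integrand $\lambda_0\log s_{\bm\theta}$ is dominated by an integrable constant on the compact set $\mathcal B$. Dominated convergence then makes $\bm\theta\mapsto\int_{\mathcal B}\lambda_0\log s_{\bm\theta}$ continuous on $D_\delta$. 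The two quadratic terms in $Q$ are continuous, so $Q$ is continuous on $D_\delta$.

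\textbf{Step 2: existence and location.} The set $D_\delta$ is compact by Lemma~\ref{lem:app-compact-core}. It is nonempty because it contains the witness $\bm\theta^+\in D_{2\delta}\subseteq D_\delta$ from Assumption~\ref{ass:app-witness}. By Weierstrass, $Q$ attains its maximum on $D_\delta$ at some point $\bm\theta^\star$.

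To see that $\bm\theta^\star$ maximizes $Q$ over all of $D$, use the decomposition $D=D_\delta\cup D_\delta^{\mathrm{bd}}$ together with \eqref{eq:app-boundary-strip-gap}:
\[
\sup_{D_\delta^{\mathrm{bd}}}Q<Q(\bm\theta^+)\le Q(\bm\theta^\star).
\]
This gives
\[
\sup_D Q=\max\Bigl\{\max_{D_\delta}Q,\ \sup_{D_\delta^{\mathrm{bd}}}Q\Bigr\}=Q(\bm\theta^\star),
\]
so the supremum over $D$ is attained at $\bm\theta^\star$. The same strict inequality shows that any maximizer over $D$ cannot lie in $D_\delta^{\mathrm{bd}}$ and hence lies in $D_\delta$. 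The inequality $\inf_{\bm y\in\mathcal B}\bm B(\bm y)^\top\bm\theta^\star\ge\delta$ in \eqref{eq:app-pop-max-core} is then just the definition of $D_\delta$.

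\textbf{Step 3: uniqueness.} Since $\bm M\succ0$, $\bm\Omega\succeq0$ and $\bar\lambda\ge0$, Lemma~\ref{lem:app-strong-concavity} shows that $Q$ is strongly concave on the convex set $D$. Suppose two maximizers existed. Inequality \eqref{eq:app-strong-concavity} applied at their midpoint, which lies in $D$ by convexity, gives a strictly larger value than either endpoint, a contradiction.

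\textbf{Main obstacle.} The only delicate point is differentiability of $Q$ on all of $D$. Lemma~\ref{lem:app-strong-concavity} uses the gradient, and near the zero boundary $\log s_{\bm\theta}$ is unbounded. On $D$, however, $\inf_{\bm y} s_{\bm\theta}(\bm y)>0$, so each $\bm\theta\in D$ has a neighbourhood on which the same dominated-convergence argument as in Step 1 applies. If that is insufficient, I would avoid gradients and argue uniqueness directly on the convex set $D_\delta$, which contains every maximizer. There, strict concavity of the integrand $\bm\theta\mapsto-\bm\theta^\top\bm M\bm\theta$ plus the concave log term gives uniqueness by the midpoint argument with no differentiation needed.
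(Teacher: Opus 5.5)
Your proposal is correct and follows the paper's proof. Proposition~\ref{prop:app-boundary-strip} excludes the boundary strip, compactness (Lemma~\ref{lem:app-compact-core}) plus continuity of $Q$ gives existence, and strong concavity (Lemma~\ref{lem:app-strong-concavity}) gives uniqueness; you also make explicit several steps the paper leaves implicit --- dominated-convergence continuity on $D_\delta$, nonemptiness via $\bm\theta^+$, and the chain $\sup_{D_\delta^{\mathrm{bd}}}Q<Q(\bm\theta^+)\le Q(\bm\theta^\star)$ showing that the maximum over $D_\delta$ is the maximum over $D$ --- and all of these are sound.
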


\begin{proof}
By Proposition~\ref{prop:app-boundary-strip}, every maximizer of $Q$ over $D$ must belong to $D_\delta$. By Lemma~\ref{lem:app-compact-core}, $D_\delta$ is compact. Since $Q$ is continuous on $D_\delta$, it attains its maximum there. By Lemma~\ref{lem:app-strong-concavity}, $Q$ is strongly concave on the convex set $D$ and therefore has at most one maximizer on $D$. This proves existence and uniqueness, and \eqref{eq:app-pop-max-core} follows from the definition of $D_\delta$.
\end{proof}

\subsection{Uniform convergence on the compact positive core}

This subsection proves the empirical consistency step. We show that \(\widetilde J\) converges uniformly to \(Q\) on \(D_\delta\). The lower bound \(s_{\bm\theta}(y)\ge \delta\) makes the logarithmic term stable, and compactness of \(D_\delta\) allows a finite-net argument to upgrade pointwise convergence to uniform convergence. This uniform convergence, combined with the uniqueness of the population maximizer, implies that the constrained empirical mode converges to \(\bm\theta^\star\).

\begin{appassumption}[Growing within-subject exposure]
\label{ass:app-growth}
The number of subjects $n$ is fixed and
\begin{equation}
\min_{1\le i\le n} T_i \to \infty.
\label{eq:app-growth}
\end{equation}
\end{appassumption}

\begin{appassumption}[Penalty discrepancy control on the compact core]
\label{ass:app-penalty}
There exists a deterministic benchmark level $\bar\lambda\in[0,\infty)$ such that
\begin{equation}
\Delta_{\mathrm{pen}}(\delta)
:=
\sup_{\bm\theta\in D_\delta}
\left|
(\lambda_P-\bar\lambda)\bm\theta^\top\bm\Omega\,\bm\theta
\right|
\xrightarrow{p}0.
\label{eq:app-penalty-discrepancy}
\end{equation}
\end{appassumption}

Assumptions~\ref{ass:app-growth}--\ref{ass:app-penalty} are the stochastic ingredients for uniform convergence of the empirical criterion. Assumption~\ref{ass:app-growth} lets the within-subject point patterns become increasingly informative, so that the normalized Poisson sums converge to their population integrals. Assumption~\ref{ass:app-penalty} ensures that the random penalty level \(\lambda_P\) is close to its deterministic benchmark \(\bar\lambda\) uniformly over \(D_\delta\). 
Together, these assumptions allow the empirical criterion \(\widetilde J\) to approximate the population criterion \(Q\) uniformly on the compact core.

The next lemma and proposition provide a practical verification of Assumption~\ref{ass:app-penalty}. They show that, on the compact core \(D_\delta\), the penalty discrepancy is controlled by the size of \(\lambda_P-\bar\lambda\) and the operator norm of \(\bm\Omega\). Under the Gaussian coefficient prior with a Gamma prior on the precision, this control holds with \(\bar\lambda=0\). Hence the empirical penalty term becomes asymptotically negligible in the uniform convergence argument.

\begin{applemma}[A simple operator-norm sufficient condition]
\label{lem:app-operator-norm}
Assume $D_\delta\subseteq\{\bm\theta:\|\bm\theta\|_2\le R\}$. Then
\begin{equation}
\Delta_{\mathrm{pen}}(\delta)
\le
R^2\|\bm\Omega\|_{\mathrm{op}}\,|\lambda_P-\bar\lambda|.
\label{eq:app-operator-norm}
\end{equation}
Consequently Assumption~\ref{ass:app-penalty} is implied by
\begin{align*}
R^2\|\bm\Omega\|_{\mathrm{op}}\,|\lambda_P-\bar\lambda| \xrightarrow{p}0.
\end{align*}
\end{applemma}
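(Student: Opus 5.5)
The statement to prove is Lemma~\ref{lem:app-operator-norm}. The plan is a direct pointwise bound followed by a supremum. First I would fix any $\bm\theta\in D_\delta$ and write the quantity inside the supremum in \eqref{eq:app-penalty-discrepancy} as
\[
\left|(\lambda_P-\bar\lambda)\,\bm\theta^\top\bm\Omega\,\bm\theta\right|
=|\lambda_P-\bar\lambda|\,\left|\bm\theta^\top\bm\Omega\,\bm\theta\right|.
\]
Since $\bm\Omega$ is symmetric positive semidefinite, the quadratic form is nonnegative. The Rayleigh-quotient (or Cauchy--Schwarz) bound then gives
\[
0\le \bm\theta^\top\bm\Omega\,\bm\theta\le \|\bm\Omega\|_{\mathrm{op}}\|\bm\theta\|_2^2 .
\]
The hypothesis $D_\delta\subseteq\{\bm\theta:\|\bm\theta\|_2\le R\}$, which also holds directly from the definition of $D_\delta$, yields $\|\bm\theta\|_2^2\le R^2$.

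Combining these, every $\bm\theta\in D_\delta$ satisfies
\[
\left|(\lambda_P-\bar\lambda)\bm\theta^\top\bm\Omega\bm\theta\right|\le R^2\|\bm\Omega\|_{\mathrm{op}}|\lambda_P-\bar\lambda|.
\]
The right-hand side does not depend on $\bm\theta$, so taking the supremum over $D_\delta$ gives \eqref{eq:app-operator-norm}. If $D_\delta$ were empty, the supremum would be $-\infty$ or $0$ by convention and the bound would hold trivially. The bound holds deterministically on every realization, since $\lambda_P$ is the only random quantity in it.

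For the consequence, note that $0\le\Delta_{\mathrm{pen}}(\delta)\le R^2\|\bm\Omega\|_{\mathrm{op}}|\lambda_P-\bar\lambda|$. So for any $\epsilon>0$,
\[
P\bigl(\Delta_{\mathrm{pen}}(\delta)>\epsilon\bigr)\le P\bigl(R^2\|\bm\Omega\|_{\mathrm{op}}|\lambda_P-\bar\lambda|>\epsilon\bigr)\to0.
\]
This is exactly the convergence in probability \eqref{eq:app-penalty-discrepancy}, so Assumption~\ref{ass:app-penalty} holds.

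There is no real obstacle. The only point needing care is measurability of the supremum $\Delta_{\mathrm{pen}}(\delta)$. Here this is automatic: the supremum equals $|\lambda_P-\bar\lambda|\sup_{D_\delta}\bm\theta^\top\bm\Omega\bm\theta$, a deterministic constant times a measurable random variable.
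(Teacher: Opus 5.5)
Your proposal is correct and matches the paper's proof: on $D_\delta$ you bound $\bm\theta^\top\bm\Omega\bm\theta\le\|\bm\Omega\|_{\mathrm{op}}\|\bm\theta\|_2^2\le\|\bm\Omega\|_{\mathrm{op}}R^2$ pointwise and then take the supremum. Your domination argument for the convergence-in-probability consequence just writes out a step the paper leaves implicit.
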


\begin{proof}
For every $\bm\theta\in D_\delta$,
\[
\bm\theta^\top\bm\Omega\,\bm\theta \le \|\bm\Omega\|_{\mathrm{op}}\|\bm\theta\|_2^2 \le \|\bm\Omega\|_{\mathrm{op}}R^2.
\]
Taking the supremum over $\bm\theta\in D_\delta$ gives \eqref{eq:app-operator-norm}.
\end{proof}

\begin{appproposition}[Gaussian prior with a Gamma hyperprior on precision]
\label{prop:app-gamma}
Let $\kappa:=\tau^{-2}$ and suppose the conditional prior for $\bm\theta$ given $\kappa$ has kernel
\begin{align*}
p(\bm\theta\mid \kappa)\propto \kappa^{r/2}\exp\!\left\{-\frac{\kappa}{2}\bm\theta^\top\bm\Omega\,\bm\theta\right\},
\qquad
r:=\operatorname{rank}(\bm\Omega),
\end{align*}
while
\begin{align*}
\kappa\sim\Gamma(a_0,b_0)
\qquad\text{with } a_0>0,\ b_0>0.
\end{align*}
Suppose the variational update for $\kappa$ is
\begin{align*}
q(\kappa)
=
\Gamma\!\left(
a_0+\frac{r}{2},
b_0+\frac{1}{2}\mathbb E_q[\bm\theta^\top\bm\Omega\,\bm\theta]
\right).
\end{align*}
Then
\begin{equation}
\eta
=
\mathbb E_q(\kappa)
=
\frac{a_0+r/2}{b_0+\frac12\mathbb E_q[\bm\theta^\top\bm\Omega\,\bm\theta]}
\le
\frac{a_0+r/2}{b_0},
\label{eq:app-eta-bound}
\end{equation}
and therefore
\begin{equation}
\lambda_P = \frac{\eta}{2a}
\le
\frac{a_0+r/2}{2b_0\,a}.
\label{eq:app-lambdaP-bound}
\end{equation}
In particular, if Assumption~\ref{ass:app-growth} holds and $\bm\Omega$ and $R$ are fixed, then Assumption~\ref{ass:app-penalty} holds with $\bar\lambda=0$.
\end{appproposition}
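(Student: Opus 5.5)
The first claim, the bound on $\eta$ in \eqref{eq:app-eta-bound}, follows from the Gamma mean formula $\mathbb E_q(\kappa)=\text{shape}/\text{rate}$ applied to the stated variational update. We then need $\mathbb E_q[\bm\theta^\top\bm\Omega\,\bm\theta]\ge0$. This holds because $\bm\Omega\succeq0$ gives $\bm\theta^\top\bm\Omega\,\bm\theta\ge0$ pointwise. Explicitly, the expectation equals $S_{km}=\operatorname{tr}(\bm\Omega\bm\Sigma)+\bm\mu^\top\bm\Omega\bm\mu$, and both terms are nonnegative since $\bm\Sigma\succeq0$ and $\bm\Omega\succeq0$. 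So the rate satisfies $b_0+\tfrac12\mathbb E_q[\cdot]\ge b_0>0$, which gives the displayed upper bound. Dividing by $2a>0$ yields \eqref{eq:app-lambdaP-bound}. We need $a>0$, which holds whenever the block carries positive responsibility mass: $\nu_i\ge0$, $T_i>0$, and not all $\nu_i$ vanish.

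For the final claim we take $\bar\lambda=0$ and invoke Lemma~\ref{lem:app-operator-norm}. This reduces Assumption~\ref{ass:app-penalty} to showing $R^2\|\bm\Omega\|_{\mathrm{op}}\lambda_P\xrightarrow{p}0$. Since $R$ and $\bm\Omega$ are fixed, $R^2\|\bm\Omega\|_{\mathrm{op}}$ is a constant. By \eqref{eq:app-lambdaP-bound}, it therefore suffices that $a=\sum_i\nu_iT_i\to\infty$.

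This is the one step that needs care, because Assumption~\ref{ass:app-growth} controls $\min_iT_i$ but not the responsibilities $\nu_i$. The route is the bound $a\ge(\min_iT_i)\sum_i\nu_i$. Hence $a\to\infty$ as long as the block's total responsibility $\sum_i\nu_{ik}$ stays bounded away from zero along the sequence, for example $\sum_i\nu_{ik}\ge1$ or at least some fixed $c>0$. That is the natural standing condition for any active cluster whose $\bm\theta$-update is being analyzed. I would make it explicit as the meaning of ``fixed block'' in the asymptotic statement.

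Under that condition, $\lambda_P\le C/\min_iT_i\to0$ deterministically along the sequence, so convergence in probability is immediate. The remaining steps are routine bookkeeping.
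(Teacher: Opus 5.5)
Your argument is correct and matches the paper's proof. It takes the Gamma mean as shape over rate, gets $\mathbb E_q[\bm\theta^\top\bm\Omega\,\bm\theta]\ge0$ from $\bm\Omega\succeq0$, divides by $2a$, and then applies Lemma~\ref{lem:app-operator-norm} with $\bar\lambda=0$. Your caveat about $a\to\infty$ is well taken: the paper simply asserts $a\to\infty$ under Assumption~\ref{ass:app-growth} (in the proof of Theorem~\ref{thm:app-uniform-core}), which tacitly requires $\sum_i\nu_{ik}$ to stay bounded away from zero, so making that condition explicit closes a gap in the paper's argument rather than opening one in yours.
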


\begin{proof}
The expectation formula for a Gamma distribution gives
\[
\mathbb E_q(\kappa)
=
\frac{a_0+r/2}{b_0+\frac12\mathbb E_q[\bm\theta^\top\bm\Omega\,\bm\theta]}.
\]
Since $b_0>0$ and $\mathbb E_q[\bm\theta^\top\bm\Omega\,\bm\theta]\ge 0$, \eqref{eq:app-eta-bound} follows immediately. Then \eqref{eq:app-lambdaP-bound} is just the definition of $\lambda_P$. The final statement follows by combining \eqref{eq:app-lambdaP-bound} with Lemma~\ref{lem:app-operator-norm}.
\end{proof}

Theorem~\ref{thm:app-uniform-core} is the uniform convergence step for the strictly positive chamber empirical objective. It shows that the empirical criterion \(\widetilde J\) uniformly approximates the population criterion \(Q\) over the compact core \(D_\delta\). This result is needed because pointwise convergence alone is not enough to justify convergence of the maximizer. Uniform convergence ensures that the whole objective surface is close to its population limit, so that the empirical maximizer cannot move away from the unique population mode \(\bm\theta^\star\).

\begin{apptheorem}[Uniform convergence on the compact positive core]
\label{thm:app-uniform-core}
Suppose Assumptions~\ref{ass:app-growth} and~\ref{ass:app-penalty} hold. Then
\begin{equation}
\sup_{\bm\theta\in D_\delta} |\widetilde J(\bm\theta)-Q(\bm\theta)| \xrightarrow{p} 0.
\label{eq:app-uniform-core}
\end{equation}
\end{apptheorem}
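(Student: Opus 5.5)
The plan is to write the discrepancy as a centred Poisson-sum term plus a penalty term, and control each uniformly over $D_\delta$. The quadratic term $-\bm\theta^\top\bm M\bm\theta$ appears identically in $\widetilde J$ and $Q$ and cancels. So for $\bm\theta\in D_\delta$,
\begin{align*}
\widetilde J(\bm\theta)-Q(\bm\theta)
=
2\bigl\{G_a(\bm\theta)-G(\bm\theta)\bigr\}
-(\lambda_P-\bar\lambda)\,\bm\theta^\top\bm\Omega\,\bm\theta ,
\end{align*}
where
\[
G_a(\bm\theta):=\frac1a\sum_{i=1}^n\nu_i\sum_{j\in\mathcal I_i}\log s_{\bm\theta}(\bm y_{ij}),
\qquad
G(\bm\theta):=\int_{\mathcal B}\lambda_0(\bm y)\log s_{\bm\theta}(\bm y)\,d\bm y .
\]
The supremum of the second term over $D_\delta$ is exactly $\Delta_{\mathrm{pen}}(\delta)$, which tends to zero in probability by Assumption~\ref{ass:app-penalty}. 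It therefore remains to prove $\sup_{D_\delta}|G_a-G|\xrightarrow{p}0$.

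\textbf{Step 1: pointwise convergence.} I will treat the weights $\nu_i$ as fixed (conditioning on them), with each $\bm Y_{im}$ distributed as $\mathcal{PP}(T_i\lambda_0)$ for the block under study. For $\bm\theta\in D_\delta$ and every $\bm y\in\mathcal B$ we have $\delta\le s_{\bm\theta}(\bm y)\le B_{\max}R$, using Lemma~\ref{lem:app-bounded-basis}. Hence $f_{\bm\theta}:=\log s_{\bm\theta}$ is uniformly bounded by $L_\delta:=\max\{|\log\delta|,|\log(B_{\max}R)|\}$.

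Campbell's formula for the Poisson process gives
\[
\mathbb E\Bigl[\sum_j f_{\bm\theta}(\bm y_{ij})\Bigr]=T_i\int_{\mathcal B}\lambda_0 f_{\bm\theta},
\qquad
\operatorname{Var}\Bigl[\sum_j f_{\bm\theta}(\bm y_{ij})\Bigr]=T_i\int_{\mathcal B}\lambda_0 f_{\bm\theta}^2 .
\]
Since $a=\sum_i\nu_iT_i$, this yields $\mathbb E\,G_a(\bm\theta)=G(\bm\theta)$. Using $\nu_i\le1$ and independence across subjects,
\[
\operatorname{Var}\,G_a(\bm\theta)\le \frac{\sum_i\nu_i T_i}{a^2}\,L_\delta^2\!\int_{\mathcal B}\lambda_0=\frac{L_\delta^2\int_{\mathcal B}\lambda_0}{a}.
\]
Chebyshev's inequality then gives $G_a(\bm\theta)\xrightarrow{p}G(\bm\theta)$, provided $a\to\infty$. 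Assumption~\ref{ass:app-growth} delivers $a\to\infty$ as soon as the block weights do not all vanish, i.e.\ $\sum_i\nu_i$ is bounded below. I will state this nondegeneracy explicitly.

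\textbf{Step 2: equicontinuity and a finite net.} On $D_\delta$,
\[
|\log s_{\bm\theta}(\bm y)-\log s_{\bm\theta'}(\bm y)|\le \delta^{-1}|\bm B(\bm y)^\top(\bm\theta-\bm\theta')|\le (B_{\max}/\delta)\,\|\bm\theta-\bm\theta'\|_2 .
\]
Consequently
\[
|G(\bm\theta)-G(\bm\theta')|\le (B_{\max}/\delta)\Bigl(\int\lambda_0\Bigr)\|\bm\theta-\bm\theta'\|_2,
\qquad
|G_a(\bm\theta)-G_a(\bm\theta')|\le (B_{\max}/\delta)\,\bar N_a\,\|\bm\theta-\bm\theta'\|_2,
\]
where $\bar N_a:=a^{-1}\sum_i\nu_iN_i$. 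Step 1 applied with $f\equiv1$ gives $\bar N_a\xrightarrow{p}\int\lambda_0$, so $\bar N_a=O_p(1)$.

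Now fix $\epsilon>0$. Since $D_\delta$ is compact (Lemma~\ref{lem:app-compact-core}), take a finite $\rho$-net $\{\bm\theta_1,\dots,\bm\theta_K\}$ of it. Then
\[
\sup_{D_\delta}|G_a-G|\le \max_{k\le K}|G_a(\bm\theta_k)-G(\bm\theta_k)|+(B_{\max}/\delta)\,\rho\,\Bigl(\bar N_a+\int\lambda_0\Bigr).
\]
The maximum over finitely many points tends to zero in probability by Step 1 and a union bound. Choosing $\rho$ small makes the second term below $\epsilon$ with probability tending to one. Combining this with the penalty term proves \eqref{eq:app-uniform-core}.

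\textbf{Main obstacle.} The one delicate point is the probabilistic status of the weights $\nu_i$ and of $\lambda_P$, both of which come out of the variational iteration and therefore depend on the data. The penalty part is already absorbed by Assumption~\ref{ass:app-penalty}. For the weights, I would first argue conditionally on fixed $\nu$ (equivalently, treat them as the limiting responsibilities), and make explicit the requirements that $\sum_i\nu_i$ stays bounded away from zero and that the pooled subjects share $\lambda_0$. Everything else in the argument is routine once the logarithm is bounded and Lipschitz on $D_\delta$.
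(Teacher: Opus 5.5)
Your proposal is correct and follows essentially the same route as the paper's proof. You cancel the quadratic term, absorb the penalty discrepancy through Assumption~\ref{ass:app-penalty}, get pointwise convergence of the weighted Poisson sum from the Campbell mean/variance formulas with $\sum_i\nu_i^2T_i/a^2\le 1/a$, and pass to uniformity on the compact core $D_\delta$ using the $(B_{\max}/\delta)$-Lipschitz bound on $\log s_{\bm\theta}$ and a finite net. Your version is slightly more careful in two places: you carry the random factor $\bar N_a=a^{-1}\sum_i\nu_i|\mathcal I_i|=O_p(1)$ in the Lipschitz modulus of the empirical term, where the paper's net bound writes only $L\varepsilon_{\mathrm{net}}$, and you state explicitly the nondegeneracy of the weights needed for $a\to\infty$.
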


\begin{proof}
Write
\begin{align*}
\widetilde J(\bm\theta)-Q(\bm\theta)
=
2\left[
\frac{1}{a}\sum_{i=1}^n \nu_i \sum_{j\in\mathcal I_i}\log s_{\bm\theta}(\bm{y}_{ij})
-
\int_{\mathcal B}\lambda_0(\bm y)\log s_{\bm\theta}(\bm y)\,d\bm{y}
\right]
-
(\lambda_P-\bar\lambda)\bm\theta^\top\bm\Omega\,\bm\theta.
\end{align*}
For the stochastic term define
\[
f_{\bm \theta}(y):=\log s_{\bm\theta}(\bm y).
\]
By Lemma~\ref{lem:app-bounded-basis}, $B_{\max}<\infty$. Since $\bm\theta\in D_\delta$ implies $s_{\bm\theta}(\bm y)\ge \delta$ and $s_{\bm\theta}(\bm y)\le B_{\max}R$, we have
\begin{align*}
|f_{\bm\theta}(\bm y)|
\le
\max\{|\log\delta|,\ |\log(B_{\max}R)|\}
=: V
\qquad\text{for all } \bm\theta\in D_\delta,\ \bm y\in\mathcal B.
\end{align*}
Moreover, for $\bm \theta,\bm\theta'\in D_\delta$,
\begin{equation}
|f_{\bm\theta}(\bm y)-f_{\bm \theta'}(\bm y)|
\le
\frac{B_{\max}}{\delta}\|\bm\theta-\bm\theta'\|_2
\qquad\text{for all } \bm y\in\mathcal B.
\label{eq:app-log-Lipschitz}
\end{equation}

For each fixed \(\bm\theta\), independence across subjects and the variance formula for Poisson integrals give
\begin{align*}
\operatorname{Var}\!\left(
\frac{1}{a}\sum_{i=1}^n \nu_i \sum_{j\in\mathcal I_i} f_{\bm\theta}(\bm{y}_{ij})
\right)
&=
\frac{1}{a^2}\sum_{i=1}^n \nu_i^2T_i \int_{\mathcal B} f_{\bm\theta}(\bm y)^2\lambda_0(\bm y) d\bm{y} 
\le
V^2\left(\int_{\mathcal B}\lambda_0(\bm y)\,d\bm{y}\right)b,
\end{align*}
where $b:=\frac{\sum_{i=1}^n \nu_i^2T_i}{a^2}$.

Under Assumption~\ref{ass:app-growth}, \(a\to\infty\). Moreover, since \(0\le \nu_i\le 1\),
\[
b
=
\frac{\sum_{i=1}^n \nu_i^2T_i}{a^2}
\le
\frac{\sum_{i=1}^n \nu_iT_i}{a^2}
=
\frac{1}{a}
\to0.
\]
Hence the variance converges to zero. Therefore, for each fixed \(\theta\),
\begin{equation}
\frac{1}{a}\sum_{i=1}^n \nu_i \sum_{j\in\mathcal I_i} f_{\bm \theta}(\bm{y}_{ij})
-
\int_{\mathcal B}\lambda_0(\bm y)f_{\bm\theta}(\bm y)\,d\bm{y}
\xrightarrow{p}0.
\label{eq:app-pointwise-lln}
\end{equation}

To upgrade to uniform convergence, fix $\varepsilon_{\mathrm{net}}>0$ and choose an $\varepsilon_{\mathrm{net}}$-net $\{\theta^{(1)},\dots,\theta^{(m_{\varepsilon_{\mathrm{net}}})}\}$ of the compact set $D_\delta$ under $\|\cdot\|_2$. By \eqref{eq:app-log-Lipschitz}, if $\|\bm\theta-\bm\theta^{(m)}\|_2\le \varepsilon_{\mathrm{net}}$ then
\begin{align*}
\sup_{\bm y\in\mathcal B}|f_{\bm\theta}(\bm y)-f_{\bm\theta^{(m)}}(\bm y)|
\le
L\,\varepsilon_{\mathrm{net}},
\qquad
L:=\frac{B_{\max}}{\delta}.
\end{align*}
Therefore
\begin{align*}
&\sup_{\bm\theta\in D_\delta}
\left|
\frac{1}{a}\sum_{i=1}^n \nu_i \sum_{j\in\mathcal I_i} f_{\bm\theta}(\bm{y}_{ij})
-
\int_{\mathcal B}\lambda_0(\bm y)f_{\bm\theta}(\bm y)\,d\bm{y}
\right|
\\
&\quad\le
\max_{1\le m\le m_{\varepsilon_{\mathrm{net}}}}
\left|
\frac{1}{a}\sum_{i=1}^n \nu_i \sum_{j\in\mathcal I_i} f_{\bm\theta^{(m)}}(\bm{y}_{ij})
-
\int_{\mathcal B}\lambda_0(\bm y)f_{\bm\theta^{(m)}}(\bm y)\,d\bm{y}
\right|
+
L\varepsilon_{\mathrm{net}}
+
L\varepsilon_{\mathrm{net}}\int_{\mathcal B}\lambda_0(\bm y)\,d\bm{y}.
\end{align*}
For fixed $\varepsilon_{\mathrm{net}}$, the first term on the right-hand side converges to $0$ in probability by a finite union bound and \eqref{eq:app-pointwise-lln}. Then let $\varepsilon_{\mathrm{net}}\downarrow 0$. This proves
\begin{equation}
\sup_{\bm\theta\in D_\delta}
\left|
\frac{1}{a}\sum_{i=1}^n \nu_i \sum_{j\in\mathcal I_i}\log s_{\bm \theta}(\bm{y}_{ij})
-
\int_{\mathcal B}\lambda_0(\bm y)\log s_{\bm\theta}(\bm y)\,d\bm{y}
\right|
\xrightarrow{p}0.
\label{eq:app-uniform-log}
\end{equation}
Assumption~\ref{ass:app-penalty} handles the penalty term. Combining \eqref{eq:app-uniform-log} and \eqref{eq:app-penalty-discrepancy} proves \eqref{eq:app-uniform-core}.
\end{proof}

\subsection{Full-sign population comparison}

This subsection extends the population comparison beyond the positive chamber. 
Because the squared link is invariant under the transformation \(\bm\theta\mapsto-\bm\theta\), the positive and negative chambers represent the same intensity surface. 
However, sign-changing coefficients can create nodal regions where the linear predictor crosses zero, leading to undesirable modes of the squared-link objective. 
To rule out such competitors at the population level, we introduce the absolute-value criterion \(Q^{\mathrm{abs}}\), which allows the positive chamber, the negative chamber, and the sign-changing region to be compared under a common objective. 
The main result of this subsection shows that sign-changing candidates are uniformly separated from the admissible chambers: the only population maximizers over the full sign-comparison class are the two symmetric modes \(\pm\bm\theta^\star\). 
This population separation provides the deterministic gap used later to prove empirical separation and exponential-weight dominance of the admissible chambers.

Define the absolute-value population criterion by
\begin{align*}
Q^{\mathrm{abs}}(\bm\theta)
:=
\begin{cases}
-\bm\theta^\top \bm{M}\bm\theta
+2\displaystyle\int_{\mathcal B}\lambda_0(\bm y)\log|s_{\bm\theta}(\bm y)|\,d\bm{y}
-\bar\lambda\,\bm\theta^\top\bm\Omega\,\bm\theta,
& \text{if } \lambda_0|\log|s_{\bm\theta}|| \in L^1(\mathcal B),\\[0.8ex]
-\infty,
& \text{otherwise.}
\end{cases}
\end{align*}
On $D$ one has $Q^{\mathrm{abs}}=Q$, and on $-D$ one has
\begin{equation}
Q^{\mathrm{abs}}(\bm\theta)=Q(-\theta).
\label{eq:app-sign-symmetry-abs}
\end{equation}

Define the sign-changing region inside the radius-$R$ ball by
\begin{align*}
S^{\mathrm{sc}}
:=
\left\{
\bm\theta\in\mathbb R^d:\ \|\bm\theta\|_2\le R,\ 
\sup_{\bm y\in\mathcal B}s_{\bm\theta}(\bm y)>0,\ 
\inf_{\bm y\in\mathcal B}s_{\bm\theta}(\bm y)<0
\right\},
\end{align*}
and the corresponding full sign-comparison class by
\begin{align*}
\Theta^{\pm,\mathrm{sc}}
:=
D\cup(-D)\cup S^{\mathrm{sc}}.
\end{align*}

\begin{appassumption}[Path-connected domain, neighborhood differentiability, and local volume]
\label{ass:app-slope}
The set $\mathcal B$ is path-connected. There exists an open convex neighborhood $U\supset\mathcal B$ such that each component of $B$ extends to a continuously differentiable function on $U$. Define
\begin{equation}
L := R \sup_{\bm y\in U}\|\nabla \bm B(\bm y)\|_2,
\label{eq:app-L}
\end{equation}
and assume $L>0$. Let
\begin{align*}
r^{\mathrm{sc}} := \frac{c_0}{2L},
\qquad
v^{\mathrm{sc}} := \inf_{\bm y\in\mathcal B} |\mathcal B\cap \mathbb B(\bm y,r^{\mathrm{sc}})|,
\end{align*}
and assume $v^{\mathrm{sc}}>0$.
\end{appassumption}

Assumption~\ref{ass:app-slope} provides the geometric regularity needed to turn sign changes into a uniform population loss. If \(s_{\bm\theta}\) is genuinely sign-changing on a path-connected domain, then it must vanish at some point. The differentiability and bounded-coefficient conditions imply a uniform Lipschitz bound for \(s_{\bm\theta}\), so the linear predictor remains small on a neighborhood of this nodal point. The local-volume condition ensures that this neighborhood has uniformly positive Lebesgue measure, even near the boundary of the domain. Consequently, sign-changing candidates incur a non-negligible loss in the logarithmic term relative to the strictly positive target \(g_0\), which is the key ingredient for excluding nodal-line competitors at the population level.

\begin{applemma}[Verification of Assumption~\ref{ass:app-slope} for rectangular domains]
Suppose that \(\mathcal B\subset\mathbb R^H\) is a compact rectangular domain and that each component of the basis vector \(\bm B(\cdot)\) extends to a continuously differentiable function on an open neighborhood \(U\supset\mathcal B\). Assume also that
\[
0< \sup_{\bm y\in U}\|\nabla \bm B(\bm y)\|_2 <\infty .
\]
Then Assumption~\ref{ass:app-slope} holds. 
\end{applemma}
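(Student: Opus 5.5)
We must verify three things for a compact rectangle $\mathcal B=\prod_{h=1}^H[\ell_h,u_h]$, with $\ell_h<u_h$ (implicitly assumed, since the box should have positive volume): (i) $\mathcal B$ is path-connected; (ii) there is an open \emph{convex} neighborhood $U\supset\mathcal B$ on which $\bm B$ is $C^1$ and $L\in(0,\infty)$; (iii) $v^{\mathrm{sc}}>0$.

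\textbf{Steps (i) and (ii).} Path-connectedness follows from convexity of the box: the segment between any two points lies in $\mathcal B$.

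For (ii), the given neighborhood $U$ need not be convex, so we shrink it. Since $\mathcal B$ is compact and $U$ open, $\rho:=\operatorname{dist}(\mathcal B,U^c)>0$. Set $U':=\{\bm y:\operatorname{dist}(\bm y,\mathcal B)<\rho\}$.
\begin{itemize}
\item $U'$ is open and contains $\mathcal B$.
\item $U'$ is contained in $U$.
\item $U'$ is convex, because it is the open $\rho$-enlargement of a convex set. Indeed, $\operatorname{dist}(\cdot,\mathcal B)$ is a convex function when $\mathcal B$ is convex, so its sublevel sets are convex.
\end{itemize}
Each $B_\ell$ remains $C^1$ on $U'\subseteq U$. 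Moreover,
\[
\sup_{U'}\|\nabla\bm B\|_2\le\sup_U\|\nabla\bm B\|_2<\infty .
\]
Positivity of $L$ needs $\sup_{U'}\|\nabla\bm B\|_2>0$. Suppose instead that $\nabla\bm B\equiv0$ on $U'$. Since $U'$ is connected, $\bm B$ would be constant there, hence constant on $\mathcal B$. This is the degenerate case we rule out, by noting that the sup over $U$ being positive is meant to hold near $\mathcal B$.

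To avoid this subtlety altogether, I would replace $U$ by $U'$ in the definition \eqref{eq:app-L} and simply assume a non-constant basis. Under that reading, $L=R\sup_{U'}\|\nabla\bm B\|_2\in(0,\infty)$, so $r^{\mathrm{sc}}=c_0/(2L)$ is a fixed positive radius.

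\textbf{Step (iii): the local-volume bound (the main step).} Fix $\bm y\in\mathcal B$ and write $r=r^{\mathrm{sc}}$. Set $s:=\min\{r/\sqrt H,\ \min_h(u_h-\ell_h)\}$.

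In each coordinate $h$, choose an interval $I_h\subseteq[\ell_h,u_h]$ of length $s$ that contains $y_h$:
\begin{itemize}
\item if $y_h+s\le u_h$, take $I_h=[y_h,y_h+s]$;
\item otherwise take $I_h=[y_h-s,y_h]$, which stays inside $[\ell_h,u_h]$ because $s\le u_h-\ell_h$.
\end{itemize}

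The cube $\prod_h I_h$ lies in $\mathcal B$. Every point of the cube is within Euclidean distance $\sqrt H\,s\le r$ of $\bm y$, so the cube lies in the closed ball $\overline{\mathbb B}(\bm y,r)$. The closed and open balls differ only by the boundary sphere, which has Lebesgue measure zero, so
\[
|\mathcal B\cap\mathbb B(\bm y,r)|\ge s^H .
\]

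This lower bound does not depend on $\bm y$. Taking the infimum over $\bm y\in\mathcal B$ gives $v^{\mathrm{sc}}\ge s^H>0$.

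I expect the only delicate point to be this uniformity near corners of the box. The corner-adapted cube construction above handles it, since it uses the same side length $s$ at every point of $\mathcal B$.
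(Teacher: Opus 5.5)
Your route matches the paper's: path-connectedness from convexity, $0<L<\infty$ from the gradient hypothesis, and a uniform local-volume bound. You are more careful in two places the paper skips. First, you check that the neighborhood in Assumption~\ref{ass:app-slope} can be taken convex. Second, your side length $s$ works at every radius, whereas the paper applies a small-$r$ bound $c_{\mathcal B}r^H$ at $r=r^{\mathrm{sc}}$.

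\textbf{A step that fails as written.} The interval construction in Step (iii) is wrong in its second case. You take $I_h=[y_h-s,y_h]$, but $s\le u_h-\ell_h$ does not put this inside $[\ell_h,u_h]$. For $\mathcal B=[0,1]^H$ and $r^{\mathrm{sc}}\ge\sqrt H$ you have $s=1$, and $y_h=1/2$ gives $I_h=[-1/2,1/2]$. More generally it fails whenever $s>(u_h-\ell_h)/2$ and $y_h$ lies near the middle of the side. The fix is to take $I_h=[u_h-s,u_h]$ in that case. There $y_h>u_h-s$, so $y_h\in I_h$, and $u_h-s\ge\ell_h$ is exactly what your justification $s\le u_h-\ell_h$ gives. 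Alternatively, halve the side lengths in the definition of $s$. With this change the cube lies in $\mathcal B\cap\overline{\mathbb B}(\bm y,r^{\mathrm{sc}})$, and $v^{\mathrm{sc}}\ge s^H>0$ follows.

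\textbf{An unnecessary extra hypothesis.} Your assumption of a non-constant basis is not in the lemma, and you do not need it. One option is to read $U$ as convex, as the paper tacitly does. The other handles the case $\nabla\bm B\equiv0$ on $U'$. Assumption~\ref{ass:app-slope} only requires some $C^1$ extension, so add to one component a smooth bump supported in the open set $U'\setminus\mathcal B$. This leaves $\bm B$ unchanged on $\mathcal B$, keeps the gradient bounded, and makes $L\in(0,\infty)$. A larger $L$ only shrinks $r^{\mathrm{sc}}$, and the Lipschitz bound used in Proposition~\ref{prop:app-signchanging} stays valid.
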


\begin{proof}
A compact rectangular domain is path-connected. Since the components of \(\bm B\) are continuously differentiable on the open neighborhood \(U\), the quantity
\[
\sup_{\bm y\in U}\|\nabla \bm B(\bm y)\|_2
\]
is finite. Hence \(0<L<\infty\) and \(r^{\mathrm{sc}}>0\).
It remains to verify the local-volume lower bound. For a compact rectangular domain, there exists a constant \(c_{\mathcal B}>0\), depending only on the domain and the dimension, such that for every \(\bm y\in\mathcal B\) and every sufficiently small \(r>0\),
\[
|\mathcal B\cap \mathbb B(\bm y,r)|\ge c_{\mathcal B}r^H .
\]
Applying this with \(r=r^{\mathrm{sc}}\) gives
\[
v^{\mathrm{sc}}
\ge c_{\mathcal B}(r^{\mathrm{sc}})^H>0 .
\]
Thus Assumption~E.6 holds.
\end{proof}

\begin{appexample}[Tensor-product B-spline bases]
\label{exm:bs4}
For tensor-product B-spline bases on a compact rectangular domain, the conditions of the preceding lemma hold when the spline degree is at least two and the knot sequence is fixed. The basis functions are then continuously differentiable and have bounded derivatives on a neighborhood of the domain. Hence Assumption~\ref{ass:app-slope} holds.
\end{appexample}

Define the sign-changing penalty discrepancy by
\begin{align*}
\Delta_P^{\mathrm{sc}}
:=
\sup_{\bm\theta\in S^{\mathrm{sc}}}
\left|
\bm\theta^{+\top}\bm\Omega\,\bm\theta^+
-
\bm\theta^\top\bm\Omega\,\bm\theta
\right|.
\end{align*}
Since $S^{\mathrm{sc}}$ is bounded and $\bm\Omega$ is fixed, one has $\Delta_P^{\mathrm{sc}}<\infty$.
Define the nodal-loss constant by
\begin{equation}
\Gamma^{\mathrm{sc}}
:=
\inf_{\bm y\in\mathcal B}\ \inf_{0<u\le c_0/2}
\left\{
\ell_y(u)-\ell_y(g_0(\bm y))
\right\}.
\label{eq:app-Gamma-sc}
\end{equation}
By Assumption~\ref{ass:app-target}, $g_0(\bm y)\in[c_0,C_0]$ for all $\bm y$. For each fixed $\bm y$, the map $u\mapsto \ell_y(u)-\ell_y(g_0(\bm y))$ is strictly decreasing on $(0,g_0(\bm y)]$, so the inner infimum in \eqref{eq:app-Gamma-sc} is attained at $u=c_0/2$. Hence
\[
\Gamma^{\mathrm{sc}}
=
\inf_{t\in[c_0,C_0]}
\left\{
\left(\frac{c_0}{2}\right)^2 - 2t^2\log\!\left(\frac{c_0}{2}\right)
-
\bigl( t^2 - 2t^2\log t \bigr)
\right\}
>0.
\]

Proposition~\ref{prop:app-signchanging} is the population-level exclusion result for sign-changing competitors. 
It plays the same role as the boundary-strip exclusion result in Proposition~\ref{prop:app-boundary-strip}, but now the bad region is the set of coefficients whose linear predictor changes sign. 
The proposition shows that, if the nodal-loss gap is large enough relative to the approximation error of the interior witness and the penalty discrepancy, then every sign-changing candidate has strictly smaller \(Q^{\mathrm{abs}}\)-value than the interior positive candidate \(\bm\theta^+\). 
Thus sign-changing coefficients cannot be population maximizers of the absolute-value criterion. 
This result is the key step for showing that the only relevant population modes of the squared-link objective are the two symmetric modes in the positive and negative chambers.

\begin{appproposition}[Population exclusion of sign-changing competitors]
\label{prop:app-signchanging}
Suppose Assumptions~\ref{ass:app-target}--\ref{ass:app-boundary} and~\ref{ass:app-slope} hold. If
\begin{equation}
\bar\lambda\,\Delta_P^{\mathrm{sc}}
<
\Gamma^{\mathrm{sc}}v^{\mathrm{sc}} - M_\varepsilon |\mathcal B|\,\varepsilon^2,
\label{eq:app-small-penalty-sc}
\end{equation}
then
\begin{equation}
\sup_{\bm\theta\in S^{\mathrm{sc}}} Q^{\mathrm{abs}}(\bm\theta)
<
Q(\bm\theta^+)
\le
\sup_{\theta\in D} Q(\bm\theta).
\label{eq:app-signchanging-gap}
\end{equation}
\end{appproposition}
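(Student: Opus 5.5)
I will mirror the proof of Proposition~\ref{prop:app-boundary-strip}, using the pointwise loss $\ell_{\bm y}(u)=u^2-2\lambda_0(\bm y)\log u$ with $u=|s_{\bm\theta}(\bm y)|$. For any $\bm\theta$ with $Q^{\mathrm{abs}}(\bm\theta)>-\infty$ we have the identity
\[
Q(\bm\theta^+)-Q^{\mathrm{abs}}(\bm\theta)
=\int_{\mathcal B}\{\ell_{\bm y}(|s_{\bm\theta}(\bm y)|)-\ell_{\bm y}(g_0(\bm y))\}\,d\bm y
-\int_{\mathcal B}\{\ell_{\bm y}(s_{\bm\theta^+}(\bm y))-\ell_{\bm y}(g_0(\bm y))\}\,d\bm y
-\bar\lambda\bigl(\bm\theta^{+\top}\bm\Omega\bm\theta^+-\bm\theta^\top\bm\Omega\bm\theta\bigr).
\]
It uses $s_{\bm\theta}^2=|s_{\bm\theta}|^2$, so the quadratic term $\bm\theta^\top\bm M\bm\theta=\int s_{\bm\theta}^2$ matches $\int |s_{\bm\theta}|^2$. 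When $Q^{\mathrm{abs}}(\bm\theta)=-\infty$ the strict inequality is trivial, so I only treat finite values.

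The second integral is bounded by $M_\varepsilon|\mathcal B|\varepsilon^2$, exactly as in \eqref{eq:app-witness-loss}. The penalty term is bounded by $\bar\lambda\Delta_P^{\mathrm{sc}}$. It therefore remains to show that the first integral is at least $\Gamma^{\mathrm{sc}}v^{\mathrm{sc}}$ uniformly over $\bm\theta\in S^{\mathrm{sc}}$.

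The key step is this nodal lower bound. Fix $\bm\theta\in S^{\mathrm{sc}}$. Since $s_{\bm\theta}$ is continuous on the path-connected set $\mathcal B$ and takes both signs, the intermediate value theorem along a path gives a point $\bm y_0\in\mathcal B$ with $s_{\bm\theta}(\bm y_0)=0$.

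Next, by Assumption~\ref{ass:app-slope} and $\|\bm\theta\|_2\le R$, the function $s_{\bm\theta}$ is Lipschitz on the convex set $U$ with constant at most $L$. The gradient is $\nabla\bm B(\bm y)^\top\bm\theta$, bounded by $\|\nabla\bm B\|_2R$, and the mean value theorem applies on the segment in $U$. Hence for $\bm y\in\mathcal B\cap\mathbb B(\bm y_0,r^{\mathrm{sc}})$ we get $|s_{\bm\theta}(\bm y)|\le L r^{\mathrm{sc}}=c_0/2$.

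On this set the integrand is at least $\Gamma^{\mathrm{sc}}$ by definition \eqref{eq:app-Gamma-sc}; points where $s_{\bm\theta}=0$ form a null set, since finiteness of $Q^{\mathrm{abs}}$ forces this. Off this set the integrand is nonnegative, since $\ell_{\bm y}$ is minimized at $g_0(\bm y)$. The set has measure at least $v^{\mathrm{sc}}$, so the first integral is at least $\Gamma^{\mathrm{sc}}v^{\mathrm{sc}}$.

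Combining the three bounds gives
\[
Q(\bm\theta^+)-Q^{\mathrm{abs}}(\bm\theta)\ge \Gamma^{\mathrm{sc}}v^{\mathrm{sc}}-M_\varepsilon|\mathcal B|\varepsilon^2-\bar\lambda\Delta_P^{\mathrm{sc}}=:\gamma>0
\]
by \eqref{eq:app-small-penalty-sc}, uniformly in $\bm\theta$. Taking the supremum gives a strict inequality, because the gap $\gamma$ is uniform.

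The right inequality in \eqref{eq:app-signchanging-gap} is immediate, since $\bm\theta^+\in D_{2\delta}\subseteq D$ (because $\|\bm\theta^+\|_2\le R$) and $Q^{\mathrm{abs}}=Q$ on $D$.

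The main obstacle is the nodal step. One must confirm that the Lipschitz bound holds on all of $\mathcal B\cap\mathbb B(\bm y_0,r^{\mathrm{sc}})$; this is why convexity of $U$ is assumed. One must also handle measure-zero zero sets and the $L^1$ integrability convention cleanly, so that the integrand identity is legitimate.
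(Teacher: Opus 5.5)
Your proposal is correct and follows essentially the same route as the paper's proof. The steps match: the $\ell_{\bm y}$-decomposition of $Q(\bm\theta^+)-Q^{\mathrm{abs}}(\bm\theta)$, a zero of $s_{\bm\theta}$ from path-connectedness, and the mean-value bound on the convex neighborhood $U$ giving $|s_{\bm\theta}|\le c_0/2$ on $\mathcal B\cap\mathbb B(\bm y_0,r^{\mathrm{sc}})$. These yield the lower bound $\Gamma^{\mathrm{sc}}v^{\mathrm{sc}}$, which the witness bound and penalty control then complete. Your extra remarks make explicit steps the paper leaves implicit: the integrand is nonnegative off the ball, the zero set is null when $Q^{\mathrm{abs}}(\bm\theta)>-\infty$, and the gap is uniform, so the supremum inequality is strict.
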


\begin{proof}
Let $\bm\theta\in S^{\mathrm{sc}}$. If $Q^{\mathrm{abs}}(\bm\theta)=-\infty$, then the desired inequality is immediate. So assume $Q^{\mathrm{abs}}(\bm\theta)>-\infty$.
Since $\mathcal B$ is path-connected and $\bm y\mapsto s_{\bm\theta}(\bm y)$ is continuous, there exists $\bm y_\theta\in\mathcal B$ such that
\begin{align*}
s_{\bm{\theta}}(\bm{y}_{\bm\theta})=0.
\end{align*}
For any $\bm y\in\mathcal B$, the line segment from $\bm y_{\bm\theta}$ to $\bm y$ lies in $U$ because $U$ is convex. By the mean-value theorem and \eqref{eq:app-L},
\begin{align*}
|s_{\bm\theta}(\bm y)|
=
|s_{\bm\theta}(\bm y)-s_{\bm\theta}(\bm{y}_{\bm\theta})|
\le
\sup_{\xi\in U}\|\nabla s_{\bm\theta}(\xi)\|_2\,\|\bm y-\bm y_{\bm \theta}\|_2
\le
L\|\bm y-\bm y_{\bm\theta}\|_2.
\end{align*}
Hence for every $\bm y\in \mathcal B\cap \mathbb B(\bm y_{\bm\theta},r^{\mathrm{sc}})$,
\begin{align*}
|s_{\bm\theta}(\bm y)| \le Lr^{\mathrm{sc}} = c_0/2.
\end{align*}
By the definitions of $\Gamma^{\mathrm{sc}}$ and $v^{\mathrm{sc}}$,
\begin{align*}
\int_{\mathcal B}\{\ell_{\bm y}(|s_{\bm\theta}(\bm y)|)-\ell_{\bm y}(g_0(\bm y))\}\,d\bm{y}
\ge
\Gamma^{\mathrm{sc}}|\mathcal B\cap \mathbb B(\bm y_{\bm\theta},r^{\mathrm{sc}})|
\ge
\Gamma^{\mathrm{sc}}v^{\mathrm{sc}}.
\end{align*}
The witness bound \eqref{eq:app-witness-loss} remains valid. Therefore
\begin{align*}
Q(\bm\theta^+) - Q^{\mathrm{abs}}(\bm\theta)
\ge
\Gamma^{\mathrm{sc}}v^{\mathrm{sc}} - M_\varepsilon |\mathcal B|\varepsilon^2
-\bar\lambda\left(
\bm\theta^{+\top}\bm\Omega\,\bm\theta^+ - \bm\theta^\top\bm\Omega\,\bm\theta
\right),
\end{align*}
hence
\begin{align*}
Q(\bm\theta^+) - Q^{\mathrm{abs}}(\bm\theta)
\ge
\Gamma^{\mathrm{sc}}v^{\mathrm{sc}} - M_\varepsilon |\mathcal B|\varepsilon^2 - \bar\lambda\,\Delta_P^{\mathrm{sc}}.
\end{align*}
Under \eqref{eq:app-small-penalty-sc}, the right-hand side is strictly positive for every $\bm\theta\in S^{\mathrm{sc}}$, which proves \eqref{eq:app-signchanging-gap}.
\end{proof}

It remains to discuss the gap condition in \eqref{eq:app-small-penalty-sc}. 
The next example shows that this condition holds for tensor-product B-spline bases under mild smoothness assumptions. 
The argument parallels Example~\ref{exm:bs3}.

\begin{appexample}[B-spline verification of condition~\eqref{eq:app-small-penalty-sc}]
Suppose that \(\bm B\) is a tensor-product B-spline basis of degree \(p\) on a compact rectangular domain \(\mathcal B\subset\mathbb R^H\), with maximal knot spacing \(h\). Let \(\bm\theta^+\in D_{2\delta}\) and \(\varepsilon\ge \|g_0-s_{\bm\theta^+}\|_\infty\) be as in Example~\ref{exm:app-bs-witness}, with
\[
\varepsilon\le C_1h^\alpha .
\]
For the spline class with \(\|\bm\theta\|_2\le R\), there exists a constant \(C_2>0\) such that
\[
L
=
R\sup_{\bm y\in U}\|\nabla \bm B(\bm y)\|_2
\le C_2h^{-1}.
\]
Hence
\[
r^{\mathrm{sc}}
=
\frac{c_0}{2L}
\ge
C_3h
\]
for some constant \(C_3>0\). Since the domain is rectangular, there exists \(C_4>0\) such that
\[
v^{\mathrm{sc}}
=
\inf_{\bm y\in \mathcal B}|B\cap \mathbb B(\bm y,r^{\mathrm{sc}})|
\ge
C_4 (r^{\mathrm{sc}})^H
\ge
C_5 h^H
\]
for some constant \(C_5>0\). Since \(\Gamma^{\mathrm{sc}}>0\), this gives
\[
\Gamma^{\mathrm{sc}}v^{\mathrm{sc}}
\ge
C_6 h^H
\]
for some constant \(C_6>0\).

Combining this lower bound with \(\varepsilon\le C_1h^\alpha\), we obtain
\[
M_\varepsilon |B|\varepsilon^2
\le
C_7 M_\varepsilon h^{2\alpha}
\]
for some constant \(C_7>0\). Therefore, if
\[
M_\varepsilon h^{2\alpha-H}\to0,
\]
then
\[
\Gamma^{\mathrm{sc}}v^{\mathrm{sc}}
>
M_\varepsilon |B|\varepsilon^2
\]
for sufficiently small \(h\). In particular, if \(M_\varepsilon\) remains bounded as \(h\to0\) and \(\alpha>H/2\), this condition holds. Moreover, in the setting covered by Lemma~\ref{lem:app-operator-norm} and Proposition~\ref{prop:app-gamma}, the deterministic penalty level can be taken as \(\bar\lambda=0\). Hence condition~\eqref{eq:app-small-penalty-sc} holds for sufficiently small \(h\).
\end{appexample}

Theorem~\ref{thm:app-global-modes} completes the population separation argument for the full sign-comparison class. 
The squared-link parametrization identifies \(\bm\theta\) and \(-\bm\theta\), so the positive and negative chambers should both contain valid population modes. 
The theorem shows that these are the only population maximizers: after excluding the zero-boundary strip and the \(\delta\)-separated sign-changing region, the absolute-value population criterion \(Q^{\mathrm{abs}}\) is maximized only at the two symmetric points \(\pm\bm\theta^\star\). 
Thus nodal-line competitors cannot be population optima, and the sign ambiguity of the squared-link model is reduced to the unavoidable global sign symmetry. 
This result provides the deterministic separation needed later to prove empirical separation and consistency of the full-sign empirical maximizers.

\begin{apptheorem}[Population global modes on the full sign-comparison class]
\label{thm:app-global-modes}
Suppose the conditions of Theorem~\ref{thm:app-positive-mode} and Proposition~\ref{prop:app-signchanging} hold. Then the maximization of $Q^{\mathrm{abs}}$ over $\Theta^{\pm,\mathrm{sc}}$ has exactly two global maximizers, namely
$
\pm \bm\theta^\star$.
Moreover,
\begin{align*}
\sup_{\bm\theta\in S^{\mathrm{sc}}} Q^{\mathrm{abs}}(\bm\theta)
<
Q^{\mathrm{abs}}(\bm\theta^\star)
=
Q^{\mathrm{abs}}(-\bm\theta^\star).
\end{align*}
\end{apptheorem}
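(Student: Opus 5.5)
The plan is to split $\Theta^{\pm,\mathrm{sc}}$ into its three pieces and compare $Q^{\mathrm{abs}}$ on each with the value at $\bm\theta^\star$.

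First, the positive chamber $D$. Here $Q^{\mathrm{abs}}=Q$. Theorem~\ref{thm:app-positive-mode} gives a unique maximizer $\bm\theta^\star\in D_\delta$ of $Q$ over $D$. Hence $Q^{\mathrm{abs}}(\bm\theta)<Q^{\mathrm{abs}}(\bm\theta^\star)$ for every $\bm\theta\in D\setminus\{\bm\theta^\star\}$.

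Second, the negative chamber $-D$. By the sign symmetry \eqref{eq:app-sign-symmetry-abs}, $Q^{\mathrm{abs}}(\bm\theta)=Q(-\bm\theta)$ for $\bm\theta\in -D$. Since $\bm\theta\mapsto-\bm\theta$ is a bijection from $-D$ onto $D$, the maximum of $Q^{\mathrm{abs}}$ over $-D$ equals $Q(\bm\theta^\star)$. It is attained only at $-\bm\theta^\star$, by uniqueness on $D$. This also gives the identity $Q^{\mathrm{abs}}(\bm\theta^\star)=Q^{\mathrm{abs}}(-\bm\theta^\star)$.

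Third, the sign-changing region $S^{\mathrm{sc}}$. Proposition~\ref{prop:app-signchanging} yields
\[
\sup_{\bm\theta\in S^{\mathrm{sc}}}Q^{\mathrm{abs}}(\bm\theta)<Q(\bm\theta^+)\le\sup_{\bm\theta\in D}Q(\bm\theta)=Q(\bm\theta^\star).
\]
The last equality holds because the supremum over $D$ is attained at $\bm\theta^\star$. This gives the strict gap claimed in the theorem. In particular, no point of $S^{\mathrm{sc}}$ attains the global value, and points with $Q^{\mathrm{abs}}=-\infty$ are trivially excluded.

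Combining the three cases, the supremum over $\Theta^{\pm,\mathrm{sc}}$ equals $Q(\bm\theta^\star)$. It is attained exactly at $\pm\bm\theta^\star$, and these are distinct points because $\bm\theta^\star\in D$ forces $-\bm\theta^\star\in -D$, which is disjoint from $D$. The sets $D$, $-D$ and $S^{\mathrm{sc}}$ are pairwise disjoint, since their defining sign conditions on $\inf s_{\bm\theta}$ and $\sup s_{\bm\theta}$ are incompatible, so the case split is exhaustive and unambiguous.

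There is no real obstacle here. The only point needing care is that the Proposition~\ref{prop:app-signchanging} bound runs through the witness $\bm\theta^+$ rather than $\bm\theta^\star$. The chain $Q(\bm\theta^+)\le Q(\bm\theta^\star)$ closes this gap, using $\bm\theta^+\in D_{2\delta}\subseteq D$.
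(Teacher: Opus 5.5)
Your proposal is correct and follows the same route as the paper's proof. You get uniqueness on $D$ from Theorem~\ref{thm:app-positive-mode}, transfer it to $-D$ via the sign symmetry \eqref{eq:app-sign-symmetry-abs}, and exclude $S^{\mathrm{sc}}$ via Proposition~\ref{prop:app-signchanging}, while spelling out steps the paper leaves implicit: the chain $Q(\bm\theta^+)\le Q(\bm\theta^\star)$ behind the strict gap, the identity $Q^{\mathrm{abs}}(\bm\theta^\star)=Q^{\mathrm{abs}}(-\bm\theta^\star)$, and the fact that $\pm\bm\theta^\star$ are distinct.
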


\begin{proof}
By Theorem~\ref{thm:app-positive-mode}, $Q$ has the unique maximizer $\bm\theta^\star$ on $D$. By \eqref{eq:app-sign-symmetry-abs}, the negative domain $-D$ has the unique maximizer $-\bm\theta^\star$. Proposition~\ref{prop:app-signchanging} excludes every point in $S^{\mathrm{sc}}$. Hence the only global maximizers on $\Theta^{\pm,\mathrm{sc}}$ are $\pm\bm\theta^\star$.
\end{proof}

\subsection{Empirical full-sign control}

This subsection transfers the population separation result to the empirical full-sign objective and the induced exponential weights. 
The main difficulty is that the absolute-log criterion \(Q^{\mathrm{abs}}\) and its empirical counterpart can be singular for sign-changing coefficients, because \(s_{\bm\theta}\) may vanish. 
To avoid this issue, we restrict attention to a compact sign-comparison class, defined below, and introduce truncated empirical and population criteria. 
The truncation makes the empirical-process comparison well behaved, while preserving the original objective on the admissible chambers \(D_\delta\cup(-D_\delta)\) and upper bounding the sign-changing contribution. 
Using the population gap against sign-changing competitors and a uniform truncated empirical approximation, we show that the exponential weight assigned to sign-changing candidates is asymptotically negligible relative to the admissible chambers.

Define the compact sign-changing competitor core
\begin{align*}
S_\delta^{\mathrm{sc}}
:=
\left\{
\bm\theta\in\mathbb R^d:\ \|\bm\theta\|_2\le R,\ 
\sup_{\bm y\in\mathcal B}s_{\bm\theta}(\bm y)\ge \delta,\ 
\inf_{\bm y\in\mathcal B}s_{\bm\theta}(\bm y)\le -\delta
\right\},
\end{align*}
and the compact full sign-comparison class
\begin{align*}
T_\delta := D_\delta\cup(-D_\delta)\cup S_\delta^{\mathrm{sc}}.
\end{align*}

Define the empirical absolute-value criterion by
\begin{align*}
\widetilde J^{\mathrm{abs}}(\theta)
:=
\begin{cases}
-\bm\theta^\top \bm{M}\bm\theta
+\dfrac{2}{a}\displaystyle\sum_{i=1}^n \nu_i\sum_{j\in\mathcal I_i}\log|s_{\bm\theta}(\bm y_{ij})|
-\lambda_P\,\bm\theta^\top\bm\Omega\,\bm\theta,
& \text{if } s_{\bm\theta}(\bm y_{ij})\neq 0\ \forall(i,j),\\[0.8ex]
-\infty,
& \text{otherwise.}
\end{cases}
\end{align*}
On $D$ one has $Q^{\mathrm{abs}}=Q$ and $\widetilde J^{\mathrm{abs}}=\widetilde J$; on $-D$ one has
\begin{equation}
\widetilde J^{\mathrm{abs}}(\bm\theta)=\widetilde J(-\bm\theta).
\label{eq:app-sign-symmetry-J}
\end{equation}
Fix \(L_{\rm tr}>0\) such that \(e^{-L_{\rm tr}}<\delta\), and define
\[
\ell_{L_{\rm tr}}(u)
:=
\log\bigl(|u|\vee e^{-L_{\rm tr}}\bigr).
\]
Define
\[
\widetilde J^{\mathrm{abs}}_{L_{\rm tr}}(\bm\theta)
:=
-\bm\theta^\top \bm{M}\bm\theta
+
\frac{2}{a}\sum_{i=1}^n\nu_i\sum_{j\in I_i}
\ell_{L_{\rm tr}}\{s_{\bm\theta}(\bm y_{ij})\}
-\lambda_P\bm\theta^\top\bm\Omega\bm\theta,
\label{eq:app-Jabs-truncated}
\]
and
\[
Q^{\mathrm{abs}}_{L_{\rm tr}}(\bm\theta)
:=
-\bm\theta^\top \bm{M}\bm\theta
+
2\int_B\lambda_0(\bm y)\ell_{L_{\rm tr}}\{s_{\bm\theta}(\bm y)\}\,d\bm{y}
-\bar\lambda\bm\theta^\top\bm\Omega\bm\theta.
\]


\begin{appremark}
\label{rem:app-truncated-fullsign}
The truncated criteria $\widetilde J^{\mathrm{abs}}_{L_{\rm tr}}(\bm\theta),\  Q^{\mathrm{abs}}_{L_{\rm tr}}(\bm\theta)$
are finite on $T_\delta$ and, for fixed
$L_{\mathrm{tr}}$. If $e^{-L_{\mathrm{tr}}}<\delta$, then on $D_\delta\cup(-D_\delta)$,
\[
\widetilde J^{\mathrm{abs}}_{L_{\mathrm{tr}}}(\bm\theta)=\widetilde J^{\mathrm{abs}}(\bm\theta),
\qquad
Q^{\mathrm{abs}}_{L_{\mathrm{tr}}}(\bm\theta)=Q^{\mathrm{abs}}(\bm\theta).
\]
Moreover, on $S_\delta^{\mathrm{sc}}$,
\[
\widetilde J^{\mathrm{abs}}(\bm\theta)
\le
\widetilde J^{\mathrm{abs}}_{L_{\mathrm{tr}}}(\bm\theta),
\qquad
Q^{\mathrm{abs}}(\bm\theta) \le Q^{\mathrm{abs}}_{L_{\mathrm{tr}}}(\bm\theta) .
\]
Hence any upper bound for the truncated sign-changing empirical
criterion also upper bounds the original sign-changing empirical criterion.
\end{appremark}

\begin{appassumption}[Full-sign penalty discrepancy control]
\label{ass:app-fullsign-penalty}
There exists a deterministic benchmark level \(\bar\lambda\in[0,\infty)\) such that
\[
\Delta_{\rm pen}^{\pm,\mathrm{sc}}(\delta)
:=
\sup_{\bm\theta\in T_\delta}
\left|
(\lambda_P-\bar\lambda)\bm\theta^\top\bm\Omega\bm\theta
\right|
\xrightarrow{p}0 .
\]
\end{appassumption}

Assumption~\ref{ass:app-fullsign-penalty} is the full-sign version of Assumption~\ref{ass:app-penalty}. 
Although it is stated over the larger class \(T_\delta\), it follows from the same operator-norm argument as Lemma~\ref{lem:app-operator-norm}, because
$
T_\delta\subseteq\{\bm\theta:\|\bm\theta\|_2\le R\}$.
Indeed,
\[
\sup_{\bm\theta\in T_\delta}
\left|(\lambda_P-\bar\lambda)\bm\theta^\top\bm\Omega\bm\theta\right|
\le
R^2\|\bm\Omega\|_{\mathrm{op}}|\lambda_P-\bar\lambda|.
\]
Thus the Gaussian--Gamma prior calculation in Proposition~\ref{prop:app-gamma} verifies Assumption~\ref{ass:app-fullsign-penalty} with \(\bar\lambda=0\) under Assumption~\ref{ass:app-growth}.

Proposition~\ref{lem:app-truncated-full-sign-ulln} is the full-sign analogue of the uniform convergence result on \(D_\delta\). 
The truncation removes the singularity of the absolute logarithm near zeros of \(s_{\bm\theta}\), making the relevant empirical-process class uniformly bounded and Lipschitz on \(T_\delta\). 
Together with the growing-exposure condition and the full-sign penalty discrepancy control, this yields uniform convergence of the truncated empirical criterion to its population counterpart.

\begin{appproposition}[Truncated full-sign uniform convergence]
\label{lem:app-truncated-full-sign-ulln}
Fix \(L_{\rm tr}<\infty\). Suppose Assumption~\ref{ass:app-growth} and Assumption~\ref{ass:app-fullsign-penalty} hold. Suppose also that \(\bm B(\cdot)\) is a fixed finite-dimensional basis satisfying
\[
B_{\max}:=\sup_{\bm y\in\mathcal B}\|\bm B(\bm y)\|_2<\infty,
\]
and that \(\lambda_0\) is bounded on \(\mathcal B\). Then
\begin{align}
\Xi_{L_{\rm tr},a}
:=
\sup_{\bm\theta\in T_\delta}
\left|
\widetilde J^{\mathrm{abs}}_{L_{\rm tr}}(\bm\theta)
-
Q^{\mathrm{abs}}_{L_{\rm tr}}(\bm\theta)
\right|
\xrightarrow{p}0 .
\label{eq:app-Xi-truncated}
\end{align}
\end{appproposition}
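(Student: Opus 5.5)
The plan mirrors the proof of Theorem~\ref{thm:app-uniform-core}, with the truncated logarithm $\ell_{L_{\rm tr}}$ replacing $\log s_{\bm\theta}$ and $T_\delta$ replacing $D_\delta$. First I will decompose
\[
\widetilde J^{\mathrm{abs}}_{L_{\rm tr}}(\bm\theta)-Q^{\mathrm{abs}}_{L_{\rm tr}}(\bm\theta)
=
2\Bigl[\frac1a\sum_{i=1}^n\nu_i\sum_{j\in\mathcal I_i} f_{\bm\theta}(\bm y_{ij})-\int_{\mathcal B}\lambda_0(\bm y)f_{\bm\theta}(\bm y)\,d\bm y\Bigr]-(\lambda_P-\bar\lambda)\bm\theta^\top\bm\Omega\bm\theta,
\]
where $f_{\bm\theta}(\bm y):=\ell_{L_{\rm tr}}\{s_{\bm\theta}(\bm y)\}$. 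The quadratic $\bm M$ terms cancel exactly. The penalty term is uniformly negligible over $T_\delta$ by Assumption~\ref{ass:app-fullsign-penalty}, so only the stochastic term needs work.

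Next I will record the two uniform properties of the class $\{f_{\bm\theta}:\bm\theta\in T_\delta\}$ that the truncation buys.

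\emph{Boundedness.} Since $|s_{\bm\theta}(\bm y)|\le B_{\max}R$ and $|u|\vee e^{-L_{\rm tr}}\ge e^{-L_{\rm tr}}$, we get $|f_{\bm\theta}|\le V:=\max\{L_{\rm tr},|\log(B_{\max}R)|\}$.

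\emph{Lipschitz continuity.} The map $u\mapsto\log(|u|\vee e^{-L_{\rm tr}})$ is Lipschitz on $\mathbb R$ with constant $e^{L_{\rm tr}}$. It is constant on $[-e^{-L_{\rm tr}},e^{-L_{\rm tr}}]$ and has derivative bounded by $e^{L_{\rm tr}}$ in absolute value elsewhere. Combining this with $|s_{\bm\theta}-s_{\bm\theta'}|\le B_{\max}\|\bm\theta-\bm\theta'\|_2$ gives
\[
\sup_{\bm y}|f_{\bm\theta}(\bm y)-f_{\bm\theta'}(\bm y)|\le e^{L_{\rm tr}}B_{\max}\|\bm\theta-\bm\theta'\|_2 .
\]
This Lipschitz property is the one step that genuinely differs from the positive-core case. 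It replaces the $\delta^{-1}$ bound of \eqref{eq:app-log-Lipschitz}, which fails on $S_\delta^{\mathrm{sc}}$ where $s_{\bm\theta}$ can vanish. It is elementary once written down.

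Pointwise convergence then follows exactly as before, using the Poisson mean and variance formulas for $\sum_{j\in\mathcal I_i}f_{\bm\theta}(\bm y_{ij})$ under intensity $T_i\lambda_0$. The mean of the normalized sum equals $\int\lambda_0 f_{\bm\theta}$ because $\sum_i\nu_iT_i=a$. The variance is at most $V^2\|\lambda_0\|_{L^1}\,b\le V^2\|\lambda_0\|_{L^1}/a$, and this tends to zero by Assumption~\ref{ass:app-growth}. Here boundedness of $\lambda_0$ on compact $\mathcal B$ guarantees $\|\lambda_0\|_{L^1}<\infty$. Chebyshev's inequality then gives convergence in probability for each fixed $\bm\theta$.

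For uniformity, $T_\delta$ is a bounded subset of $\mathbb R^d$, so it admits a finite $\varepsilon_{\rm net}$-net; closedness is not even needed for this. The Lipschitz bound controls both the empirical and the population fluctuations within a net cell, giving
\[
\sup_{T_\delta}|\cdot|\le\max_{m}|\cdot|_{\bm\theta^{(m)}}+e^{L_{\rm tr}}B_{\max}\varepsilon_{\rm net}\Bigl(\frac1a\sum_{i}\nu_iN_i+\|\lambda_0\|_{L^1}\Bigr).
\]
The factor $a^{-1}\sum_i\nu_iN_i$ converges in probability to $\|\lambda_0\|_{L^1}$ by the same variance bound, so it is $O_p(1)$. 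A finite union bound handles the maximum over net points. Letting $\varepsilon_{\rm net}\downarrow0$ then gives \eqref{eq:app-Xi-truncated}.

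The main care needed is that the empirical Lipschitz remainder carries the random count factor. This factor must be shown to be $O_p(1)$ rather than absorbed into a deterministic constant.
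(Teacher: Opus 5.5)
Your proposal is correct and follows the paper's proof essentially step for step. Both arguments cancel the $\bm M$ terms, handle the penalty by Assumption~\ref{ass:app-fullsign-penalty}, use that the truncated class is bounded and $e^{L_{\rm tr}}B_{\max}$-Lipschitz, get pointwise convergence from the Poisson variance bound, and upgrade to uniformity with a finite net. Your explicit control of the random factor $a^{-1}\sum_i\nu_i|\mathcal I_i|=O_p(1)$ in the empirical Lipschitz remainder (your $N_i$ should be read as the block count $|\mathcal I_i|$) is more careful than the paper, which asserts that the empirical average moves by at most $e^{L_{\rm tr}}B_{\max}\rho$ without that factor.
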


\begin{proof}
For fixed \(L_{\rm tr}<\infty\), the function \(\ell_{L_{\rm tr}}\) is bounded below by \(-L_{\rm tr}\). Moreover, on the parameter set \(\|\bm\theta\|_2\le R\),
\[
|s_{\bm\theta}(\bm y)|=|\bm B(\bm y)^\top\bm\theta|
\le B_{\max}R .
\]
Hence the class
\[
\mathcal F_{L_{\rm tr}}
:=
\{\ell_{L_{\rm tr}}(s_{\bm\theta}(\cdot)):\bm\theta\in T_\delta\}
\]
is uniformly bounded. Let
\[
C_{L_{\rm tr}}
:=
\max\{L_{\rm tr},\log(B_{\max}R\vee e^{-L_{\rm tr}})\}.
\]
Then
\[
|\ell_{L_{\rm tr}}(s_{\bm\theta}(\bm y))|\le C_{L_{\rm tr}}
\]
for all \(\bm y\in\mathcal B\) and all \(\bm\theta\in T_\delta\).

The truncated logarithm is also Lipschitz. In particular,
\[
|\ell_{L_{\rm tr}}(u)-\ell_{L_{\rm tr}}(v)|
\le e^{L_{\rm tr}}|u-v|
\]
for all \(u,v\in\mathbb R\). Therefore, for all \(\bm\theta,\bm\theta'\in T_\delta\),
\[
\sup_{\bm y\in\mathcal B}
\left|
\ell_{L_{\rm tr}}(s_{\bm\theta}(\bm y))
-
\ell_{L_{\rm tr}}(s_{\bm\theta'}(\bm y))
\right|
\le
e^{L_{\rm tr}}B_{\max}\|\bm\theta-\bm\theta'\|_2 .
\]
Thus \(\mathcal F_{L_{\rm tr}}\) is a uniformly bounded and Lipschitz finite-dimensional class.

For a fixed \(\bm\theta\in T_\delta\), the variance formula for Poisson integrals gives
\begin{align*}
\operatorname{Var}\!\left(
\frac{1}{a}\sum_{i=1}^n \nu_i
\sum_{j\in I_i}
\ell_{L_{\rm tr}}\{s_{\bm\theta}(\bm y_{ij})\}
\right)
&=
\frac{1}{a^2}\sum_{i=1}^n \nu_i^2T_i
\int_{\mathcal B}
\ell_{L_{\rm tr}}\{s_{\bm\theta}(\bm y)\}^2\lambda_0(\bm y)\,d\bm{y} \\
&\le
C_{L_{\rm tr}}^2
\left(\int_{\mathcal B}\lambda_0(\bm y)\,d\bm{y}\right)
\frac{\sum_{i=1}^n\nu_i^2T_i}{a^2}.
\end{align*}
By Assumption~\ref{ass:app-growth}, the last factor converges to zero. Hence, for each fixed \(\bm\theta\in T_\delta\),
\[
\frac{1}{a}\sum_{i=1}^n \nu_i
\sum_{j\in I_i}
\ell_{L_{\rm tr}}\{s_{\bm\theta}(\bm y_{ij})\}
-
\int_{\mathcal B}
\lambda_0(\bm y)\ell_{L_{\rm tr}}\{s_{\bm\theta}(\bm y)\}\,d\bm{y}
\xrightarrow{p}0 .
\]

We now upgrade this pointwise convergence to uniform convergence. Since \(T_\delta\) is compact, for every \(\rho>0\) there exists a finite \(\rho\)-net
\[
\{\bm\theta_1,\ldots,\bm\theta_N\}\subset T_\delta .
\]
For any \(\bm\theta\in T_\delta\), choose \(\bm\theta_\ell\) with
\[
\|\bm\theta-\bm\theta_\ell\|_2\le\rho .
\]
The Lipschitz bound above implies that both the empirical and population averages change by at most
$
e^{L_{\rm tr}}B_{\max}\rho$.
Therefore,
\[
\sup_{\bm\theta\in T_\delta}
\left|
\frac{1}{a}\sum_{i=1}^n\nu_i\sum_{j\in I_i}
\ell_{L_{\rm tr}}\{s_{\bm\theta}(\bm y_{ij})\}
-
\int_{\mathcal B}\lambda_0(\bm y)\ell_{L_{\rm tr}}\{s_{\bm\theta}(\bm y)\}\,d\bm{y}
\right|
\xrightarrow{p}0 .
\]
This follows by first taking the maximum over the finite net and then letting \(\rho\downarrow0\).

The quadratic term \(-\bm\theta^\top \bm{M}\bm\theta\) is identical in
\(\widetilde J^{\mathrm{abs}}_{L_{\rm tr}}\) and \(Q^{\mathrm{abs}}_{L_{\rm tr}}\). The remaining discrepancy is the penalty term,
\[
\sup_{\bm\theta\in T_\delta}
\left|
(\lambda_P-\bar\lambda)\bm\theta^\top\bm\Omega\bm\theta
\right|,
\]
which converges to zero in probability by Assumption~\ref{ass:app-fullsign-penalty}. Combining the uniform convergence of the truncated likelihood term with the penalty control yields
\[
\sup_{\bm\theta\in T_\delta}
\left|
\widetilde J^{\mathrm{abs}}_{L_{\rm tr}}(\bm\theta)
-
Q^{\mathrm{abs}}_{L_{\rm tr}}(\bm\theta)
\right|
\xrightarrow{p}0 .
\]
This proves the claim.
\end{proof}

For measurable $C\subseteq\Theta^{\pm,\mathrm{sc}}$, define the exponential population weight
\begin{align*}
Z_a(C)
:=
\int_C \exp\{aQ^{\mathrm{abs}}(\bm\theta)\}\,d\bm\theta.
\end{align*}

For measurable $C\subseteq T_\delta$, define the empirical exponential weight
\begin{align*}
\widehat Z_a(C)
:=
\int_C \exp\{a\widetilde J^{\mathrm{abs}}(\bm\theta)\}\,d\bm\theta.
\end{align*}
Finally, define the truncated sign-changing population gap
\begin{equation}
G^{\mathrm{sc}}_{L_{\mathrm{tr}}}
:=
Q(\bm\theta^\star)
-
\sup_{\bm\theta\in S_\delta^{\mathrm{sc}}}Q^{\mathrm{abs}}_{L_{\mathrm{tr}}}(\bm\theta).
\label{eq:app-truncated-sc-gap}
\end{equation}
We choose $L_{\mathrm{tr}}$ sufficiently large so that $G^{\mathrm{sc}}_{L_{\mathrm{tr}}}>0$; this follows from the same nodal-neighborhood argument used for Proposition~\ref{prop:app-signchanging}, with the absolute log replaced by the truncated absolute log.



Proposition~\ref{prop:app-pop-dominance-truncated}  shows that the truncated population weight of any measurable subset of \(S_\delta^{\mathrm{sc}}\) is exponentially dominated by the population weight near \(\bm\theta^\star\) inside \(D_\delta\). 
The exponent is governed by the gap \(G^{\mathrm{sc}}_{L_{\rm tr}}\), reduced by the local oscillation \(\omega(r)\) of \(Q\) around \(\bm\theta^\star\).

\begin{appproposition}[Truncated population dominance of the admissible chamber]
\label{prop:app-pop-dominance-truncated}
Assume $G^{\mathrm{sc}}_{L_{\mathrm{tr}}}>0$. Let $r>0$ be such that $v^\star(r)>0$, with $v^\star(r)$ and $\omega(r)$ defined in the main text. 
Then, for every measurable
$C\subseteq S_\delta^{\mathrm{sc}}$,
\begin{equation}
\frac{
\int_C \exp\{aQ^{\mathrm{abs}}_{L_{\mathrm{tr}}}(\bm\theta)\}\,d\bm\theta
}{
Z_a({D_\delta})
}
\le
\frac{|C|}{v^\star(r)}
\exp\left\{
-a\bigl(G^{\mathrm{sc}}_{L_{\mathrm{tr}}}-\omega(r)\bigr)
\right\}.
\label{eq:app-pop-ratio-sc-truncated}
\end{equation}
Consequently, if $\omega(r)<G^{\mathrm{sc}}_{L_{\mathrm{tr}}}$, then
\begin{equation}
\int_{S_\delta^{\mathrm{sc}}}
\exp\{aQ^{\mathrm{abs}}_{L_{\mathrm{tr}}}(\bm\theta)\}\,d\bm\theta
=
o\!\left(
Z_a({D_\delta})
+
Z_a(-{D_\delta})
\right)
\qquad\text{as }a\to\infty .
\label{eq:app-pop-mass-dominance-truncated}
\end{equation}
\end{appproposition}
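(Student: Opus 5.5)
The plan is a direct Laplace-type comparison: bound the numerator above by its worst point and the denominator below by a small ball around $\bm\theta^\star$.

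\emph{Numerator.} By definition of $G^{\mathrm{sc}}_{L_{\mathrm{tr}}}$, every $\bm\theta\in S_\delta^{\mathrm{sc}}$ satisfies
\[
Q^{\mathrm{abs}}_{L_{\mathrm{tr}}}(\bm\theta)\le Q(\bm\theta^\star)-G^{\mathrm{sc}}_{L_{\mathrm{tr}}}.
\]
Integrating over $C$ gives
\[
\int_C \exp\{aQ^{\mathrm{abs}}_{L_{\mathrm{tr}}}\}\,d\bm\theta\le |C|\exp\{aQ(\bm\theta^\star)-aG^{\mathrm{sc}}_{L_{\mathrm{tr}}}\}.
\]

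\emph{Denominator.} Since the integrand is positive, restrict $Z_a(D_\delta)$ to $D_\delta\cap\mathbb B(\bm\theta^\star,r)$. On this set $Q^{\mathrm{abs}}=Q$ (because $D_\delta\subseteq D$), and by the definition of $\omega(r)$ we have $Q(\bm\theta)\ge Q(\bm\theta^\star)-\omega(r)$. Hence
\[
Z_a(D_\delta)\ge v^\star(r)\exp\{aQ(\bm\theta^\star)-a\omega(r)\}.
\]
This bound is positive because $v^\star(r)>0$. Dividing the numerator bound by it, the $aQ(\bm\theta^\star)$ terms cancel, which yields \eqref{eq:app-pop-ratio-sc-truncated}.

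\emph{Consequence \eqref{eq:app-pop-mass-dominance-truncated}.} Take $C=S_\delta^{\mathrm{sc}}$. This set is measurable (it is closed, since $\bm\theta\mapsto\sup_{\bm y} s_{\bm\theta}$ and $\bm\theta\mapsto\inf_{\bm y} s_{\bm\theta}$ are continuous, as in Lemma~\ref{lem:app-compact-core}). Its measure is at most the volume of the radius-$R$ ball, so $|S_\delta^{\mathrm{sc}}|<\infty$. If $\omega(r)<G^{\mathrm{sc}}_{L_{\mathrm{tr}}}$, the right-hand side is a fixed constant times $e^{-ac}$ with $c>0$, and it tends to $0$ as $a\to\infty$. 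Therefore the truncated sign-changing mass is $o(Z_a(D_\delta))$. Since $Z_a(-D_\delta)\ge 0$ (and in fact equals $Z_a(D_\delta)$ by \eqref{eq:app-sign-symmetry-abs} and the change of variables $\bm\theta\mapsto-\bm\theta$), it is also $o(Z_a(D_\delta)+Z_a(-D_\delta))$.

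\emph{Main obstacle.} The argument is essentially bookkeeping. The only delicate points are the following.
\begin{itemize}
\item Ensuring $Q^{\mathrm{abs}}$ coincides with $Q$ on the ball piece inside $D_\delta$. This holds because $\lambda_0$ is bounded and $s_{\bm\theta}\ge\delta$ there, so the integrability condition holds.
\item Ensuring the quantities are finite. Here $Z_a(D_\delta)<\infty$ because $Q$ is bounded on the compact set $D_\delta$. The numerator is finite because the truncated criterion is bounded on $T_\delta$ (Remark~\ref{rem:app-truncated-fullsign}).
\end{itemize}
I would state these explicitly before the two-line comparison.
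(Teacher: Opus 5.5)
Your proposal is correct and follows the paper's proof essentially line for line. You bound the numerator pointwise via the definition of $G^{\mathrm{sc}}_{L_{\mathrm{tr}}}$, lower-bound $Z_a(D_\delta)$ by restricting to $D_\delta\cap\mathbb B(\bm\theta^\star,r)$ using $\omega(r)$, divide, and then take $C=S_\delta^{\mathrm{sc}}$ with $|S_\delta^{\mathrm{sc}}|$ bounded by the ball volume; your extra checks (measurability of $S_\delta^{\mathrm{sc}}$, finiteness of both integrals, and $Z_a(-D_\delta)=Z_a(D_\delta)$ by sign symmetry) are correct and simply make explicit what the paper leaves implicit.
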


\begin{proof}
For $\bm\theta\in C\subseteq S_\delta^{\mathrm{sc}}$,
\[
Q^{\mathrm{abs}}_{L_{\mathrm{tr}}}(\bm\theta)
\le
Q(\bm\theta^\star)-G^{\mathrm{sc}}_{L_{\mathrm{tr}}}.
\]
Therefore
\[
\int_C \exp\{aQ^{\mathrm{abs}}_{L_{\mathrm{tr}}}(\bm\theta)\}\,d\bm\theta
\le
|C|\exp\{a(Q(\bm\theta^\star)-G^{\mathrm{sc}}_{L_{\mathrm{tr}}})\}.
\]
On the other hand, for
$\bm\theta\in D_\delta\cap \mathbb B(\bm\theta^\star,r)$,
\[
Q(\bm\theta)\ge Q(\bm\theta^\star)-\omega(r),
\]
so
\[
\int_{D_\delta}\exp\{aQ(\bm\theta)\}\,d\bm\theta
\ge
v^\star(r)\exp\{a(Q(\bm\theta^\star)-\omega(r))\}.
\]
Dividing the two bounds gives the claim.

Taking $C=S_\delta^{\mathrm{sc}}$ gives
\[
\frac{
\int_{S_\delta^{\mathrm{sc}}}
\exp\{aQ^{\mathrm{abs}}_{L_{\mathrm{tr}}}(\bm\theta)\}\,d\bm\theta
}{
Z_a({D_\delta})
}
\le
\frac{|S_\delta^{\mathrm{sc}}|}{v^\star(r)}
\exp\left\{
-a\bigl(G^{\mathrm{sc}}_{L_{\mathrm{tr}}}-\omega(r)\bigr)
\right\}.
\]
Since $S_\delta^{\mathrm{sc}}\subseteq\{\|\bm\theta\|_2\le R\}$ has finite Lebesgue measure and
$\omega(r)<G^{\mathrm{sc}}_{L_{\mathrm{tr}}}$, the right-hand side tends to zero. The denominator
in \eqref{eq:app-pop-mass-dominance-truncated} is at least
$Z_a({D_\delta})$, so the displayed ratio also implies
\eqref{eq:app-pop-mass-dominance-truncated}.
\end{proof}

\subsection{Proofs of the main text Theorems~\ref{thm:main-interiority}--~\ref{thm:main-dominance}}

\begin{proof}[Proof of Theorem~\ref{thm:main-interiority}]
Theorem~\ref{thm:app-positive-mode} gives the unique population maximizer $\bm\theta^\star$ on $D$ and places it in $D_\delta$. Theorem~\ref{thm:app-uniform-core} gives uniform convergence of $\widetilde J$ to $Q$ on the compact set $D_\delta$. Hence the standard compact-set argmax theorem yields
\[
\widehat{\bm\theta}_\delta \xrightarrow{p}\bm\theta^\star.
\]
\end{proof}

\begin{proof}[Proof of Theorem~\ref{thm:main-chamber-gap}]
By Proposition~\ref{lem:app-truncated-full-sign-ulln},
\eqref{eq:app-Xi-truncated} satisfies
$\Xi_{L_{\mathrm{tr}},a}\xrightarrow{p}0$. By the definition of
$G^{\mathrm{sc}}_{L_{\mathrm{tr}}}$,
\begin{align*}
\sup_{\bm\theta\in S_\delta^{\mathrm{sc}}} Q^{\mathrm{abs}}_{L_{\mathrm{tr}}}(\bm\theta)
\le
Q(\bm\theta^\star)-G^{\mathrm{sc}}_{L_{\mathrm{tr}}}.
\end{align*}
Hence on the event $\{2\Xi_{L_{\mathrm{tr}},a}<G^{\mathrm{sc}}_{L_{\mathrm{tr}}}\}$,
\begin{align*}
\sup_{\bm\theta\in S_\delta^{\mathrm{sc}}}\widetilde J^{\mathrm{abs}}(\bm\theta)
&\le
\sup_{\bm\theta\in S_\delta^{\mathrm{sc}}}\widetilde J^{\mathrm{abs}}_{L_{\mathrm{tr}}}(\bm\theta)
\\
&\le
\sup_{\bm\theta\in S_\delta^{\mathrm{sc}}}Q^{\mathrm{abs}}_{L_{\mathrm{tr}}}(\bm\theta)
+\Xi_{L_{\mathrm{tr}},a}
\\
&\le
Q(\bm\theta^\star)-G^{\mathrm{sc}}_{L_{\mathrm{tr}}}+\Xi_{L_{\mathrm{tr}},a}
\\
&<
Q(\bm\theta^\star)-\Xi_{L_{\mathrm{tr}},a}
\\
&\le
\widetilde J^{\mathrm{abs}}_{L_{\mathrm{tr}}}(\bm\theta^\star)
\\
&=
\widetilde J^{\mathrm{abs}}(\bm\theta^\star)
\le
\sup_{\bm\theta\in D_\delta\cup(-D_\delta)}\widetilde J^{\mathrm{abs}}(\bm\theta).
\end{align*}
This proves \eqref{eq:main-sample-gap}. After this separation, empirical maximizers over $T_\delta$ must lie in $D_\delta\cup(-D_\delta)$ with probability tending to one. On these two chambers the truncated and original criteria coincide, and Proposition~\ref{lem:app-truncated-full-sign-ulln} gives uniform convergence to the corresponding population criterion. Since Theorem~\ref{thm:app-global-modes} gives the two population maximizers $\pm\bm\theta^\star$ on $T_\delta$, the compact-set argmax theorem yields \eqref{eq:main-chamber-gap-argmax}.
\end{proof}


\begin{proof}[Proof of Theorem~\ref{thm:main-dominance}]
Let $C\subseteq S_\delta^{\mathrm{sc}}$. For every $\bm\theta\in C$,
\begin{align*}
\widetilde J^{\mathrm{abs}}(\bm\theta)
&\le
\widetilde J^{\mathrm{abs}}_{L_{\mathrm{tr}}}(\bm\theta)
\\
&\le
Q^{\mathrm{abs}}_{L_{\mathrm{tr}}}(\bm\theta)+\Xi_{L_{\mathrm{tr}},a}
\\
&\le
Q(\bm\theta^\star)-G^{\mathrm{sc}}_{L_{\mathrm{tr}}}+\Xi_{L_{\mathrm{tr}},a}.
\end{align*}
Thus
\begin{align}
\widehat Z_a(C)
\le
|C|
\exp\!\left\{
 a\bigl(Q(\bm\theta^\star)-G^{\mathrm{sc}}_{L_{\mathrm{tr}}}+\Xi_{L_{\mathrm{tr}},a}\bigr)
\right\}.
\label{eq:app-sample-Z-upper}
\end{align}
For $\bm\theta\in D_\delta\cap \mathbb B(\theta^\star,r)$, \eqref{eq:app-omega} implies
$Q(\bm\theta)\ge Q(\bm\theta^\star)-\omega(r)$. Proposition~\ref{lem:app-truncated-full-sign-ulln} gives
\[
\widetilde J^{\mathrm{abs}}(\bm\theta)
=
\widetilde J^{\mathrm{abs}}_{L_{\mathrm{tr}}}(\bm\theta)
\ge
Q(\bm\theta)-\Xi_{L_{\mathrm{tr}},a}
\ge
Q(\bm\theta^\star)-\omega(r)-\Xi_{L_{\mathrm{tr}},a}.
\]
Therefore
\begin{align}
\widehat Z_a(D_\delta)
\ge
v^\star(r)
\exp\!\left\{
 a\bigl(Q(\bm\theta^\star)-\omega(r)-\Xi_{L_{\mathrm{tr}},a}\bigr)
\right\}.
\label{eq:app-sample-Z-lower}
\end{align}
Combining \eqref{eq:app-sample-Z-upper} and \eqref{eq:app-sample-Z-lower} yields
\eqref{eq:main-sample-ratio-bound}. Since $\Xi_{L_{\mathrm{tr}},a}\xrightarrow{p}0$, 
\eqref{eq:main-sample-ratio-goes-zero} follows whenever $\omega(r)<G^{\mathrm{sc}}_{L_{\mathrm{tr}}}$. The negative-core statement follows from \eqref{eq:app-sign-symmetry-abs} and \eqref{eq:app-sign-symmetry-J}.
\end{proof}




\section{Synthetic data analysis}
\label{app:syn_data}


We designed synthetic experiments to assess the performance of the proposed method of recovering three inferential targets simultaneously: the cluster-specific mark-dependent intensity surfaces, the subject-level latent cluster assignments, and the number of clusters. 
The simulations cover diverse mark-specific intensity geometries, cluster numbers, cluster-size heterogeneity, and event-count regimes. 
We evaluate the method through recovery diagnostics that are directly aligned with the generative targets of the model. In subsection~\ref{app:baseline-comparison}, we report comparative experiments against baseline models, further demonstrating the superior clustering performance of the proposed method.

All synthetic datasets were generated on a rectangular domain $\mathcal{B}\subset\mathbb{R}^2$. For each subject $i$, we first sampled a latent cluster label and then generated two independent Poisson point processes for the binary marks $m\in\{0,1\}$, with subject-specific offset $T_i$ and cluster-specific baseline intensities $\lambda_{k0}(\cdot)$ and $\lambda_{k1}(\cdot)$. Across simulation configurations, we considered a wide range of mark-specific intensity shapes, including multimodal, anisotropic, ring-like, and highly irregular surfaces; the true number of clusters ranged from 4 to 9, cluster sizes varied from tens to hundreds of subjects, and subject-level event counts ranged from sparse to highly dense regimes.

Here we provide three representative configurations chosen to stress the following aspects: 
a four-cluster setting with a mark-swapped cluster pair, a nine-cluster setting with complicated intensity surfaces, and a seven-cluster setting with highly irregular random intensity surfaces. 
For each of these settings, we also consider a reduced-sample counterpart that preserves the same cluster structure and underlying mark-specific intensity surfaces, but reduces the total number of observed events to roughly one-third to one-quarter of the full-sample size. This design allows us to examine the effect of sample size on intensity estimation and cluster recovery.
Table~\ref{tab:app-syn-summary} summarizes the synthetic experiments settings. 

\begin{table*}[ht]
\centering
\caption{Additional representative synthetic settings reported in Appendix~\ref{app:syn_data}.}
\label{tab:app-syn-summary}
\resizebox{\textwidth}{!}{
\begin{tabular}{l|cccccc}
\hline
Setting & $\#$ Clusters & Subjects range & $N_i$ range & Total subjects & Total events & Time (s) \\
\hline
A         & 4 & 30--60 & 125--1200 & 174 & 116,764 & 124.62 \\
A reduced & 4 & 15--40 & 63--600   & 115 & 36,510  & 49.74 \\
B         & 9 & 36--95 & 150--750  & 631 & 289,617 & 417.39 \\
B reduced & 9 & 18--48 & 75--375   & 284 & 66,944  & 159.19 \\
C         & 7 & 45--63 & 550--650  & 368 & 219,258 & 375.34 \\
C reduced & 7 & 23--32 & 275--325  & 186 & 55,448  & 76.17  \\
\hline
\end{tabular}
}
\end{table*}

Across all synthetic experiments, we set truncation level $K=30$, concentration parameter $\alpha=1$, and hyperparameters $a_0=1$ and $b_0=0.005$. We employed a tensor-product B-spline basis of degree 3 with 10 interior knots on each axis, yielding $d=14\times 14=196$ basis functions, and took $\bm\Omega$ to be the corresponding first-order Bayesian $P$-spline penalty matrix. Each dataset was fit from multiple random initializations in parallel, and we retained the solution with the largest final ELBO.
For visualization, we use plug-in summaries: the fitted mark-specific intensities $\widehat\lambda_{km}(\bm y)=\{\bm B(\bm y)^\top \widehat{\bm\theta}_{km}\}^2$, the total intensity $\widehat\lambda_k(\bm y)=\widehat\lambda_{k0}(\bm y)+\widehat\lambda_{k1}(\bm y)$, and the induced mark-probability surface $\widehat p_k(\bm y)=\widehat\lambda_{k1}(\bm y)/\{\widehat\lambda_{k0}(\bm y)+\widehat\lambda_{k1}(\bm y)\}$.

\paragraph{Setting A.}
Figure~\ref{fig:app-syn-a} shows a four-cluster configuration in which a pair of clusters, Clusters 1 and 3, have broadly similar total intensity geometry but exchanged mark-specific intensity surfaces. This setting is informative because successful recovery cannot be explained by total intensity alone; it requires the model to use the full marked structure. 
We also consider a reduced-sample counterpart, shown in Figure~\ref{fig:app-syn-a-reduced}, where the same underlying cluster structure and mark-specific intensity functions are retained but the total number of observed events is reduced to roughly one-third to one-quarter of the full-sample size. 
In both the full and reduced-sample settings, the fitted surfaces recover the main total-intensity geometry, the exchanged mark-specific patterns, and the induced mark-probability surfaces. The matched confusion matrices in Tables~\ref{tab:app-syn-a-confusion} and~\ref{tab:app-syn-a-reduced-confusion} are exactly diagonal, showing that the latent cluster assignments are recovered without error in both sample-size regimes.

\begin{figure*}[ht]
    \centering
    \includegraphics[width=\textwidth]{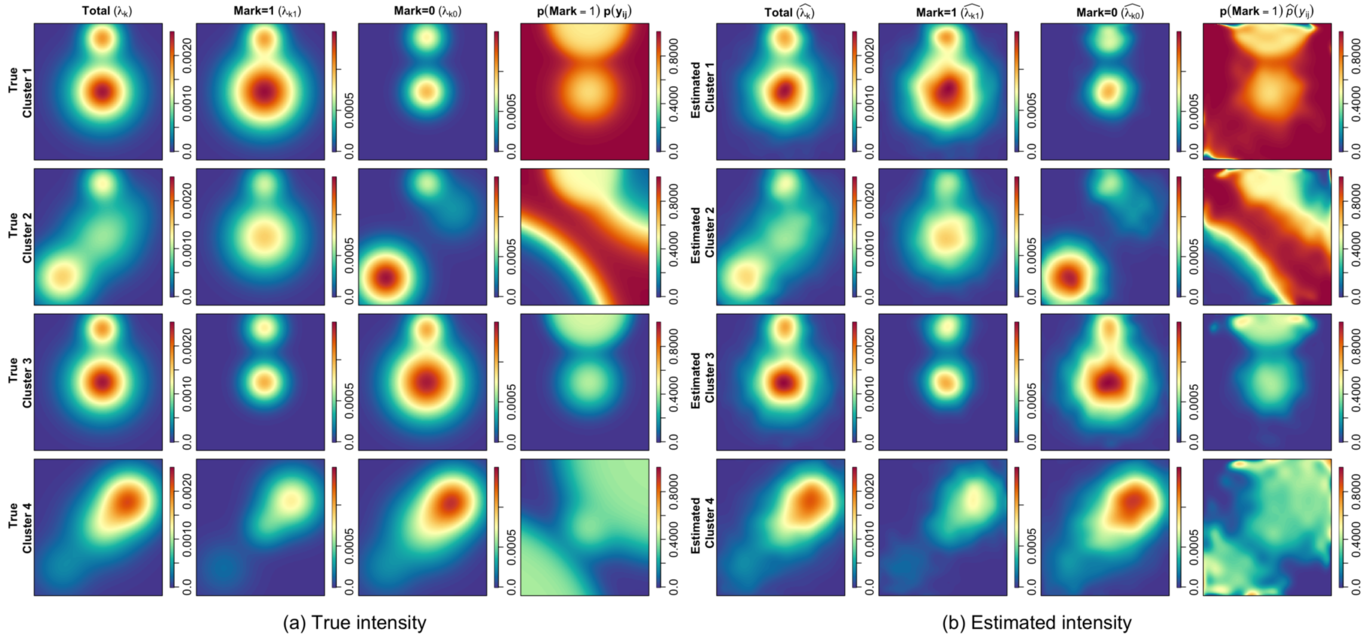}
    \caption{Setting A. For each row, the left four panels (a) show the true cluster-specific surfaces and the right four panels (b) show the corresponding fitted surfaces after label matching. 
    The four columns correspond to the total intensity, the mark-1 intensity, the mark-0 intensity, and the induced mark-probability surface \(p(\text{mark}=1)\).
    Each mark-specific intensity surface is visualized using the plug-in estimator $\widehat{\lambda}_{km}(\bm y) := \bigl(\bm B(\bm y)^\top \widehat{\bm \theta}_{km}\bigr)^2$. The total intensity surface is then computed as $\widehat{\lambda}_k(\bm y) := \widehat{\lambda}_{k0}(\bm y)+\widehat{\lambda}_{k1}(\bm y)$, and the induced spatial mark-probability surface is given by $\widehat{p}_k(\bm y) = \frac{\widehat{\lambda}_{k1}(\bm y)}{\widehat{\lambda}_{k0}(\bm y)+\widehat{\lambda}_{k1}(\bm y)}$.}
    \label{fig:app-syn-a}
\end{figure*}

\begin{table}[ht]
\centering
\caption{Matched confusion matrix for Setting A.}
\label{tab:app-syn-a-confusion}
\begin{tabular}{c|cccc}
\hline
True & \multicolumn{4}{c}{Estimated} \\
\cline{2-5}
     & 1 & 2 & 3 & 4 \\
\hline
1 & 32 & 0  & 0  & 0 \\
2 & 0  & 59 & 0  & 0 \\
3 & 0  & 0  & 40 & 0 \\
4 & 0  & 0  & 0  & 43 \\
\hline
\end{tabular}
\end{table}

\begin{figure*}[ht]
    \centering
    \includegraphics[width=\textwidth]{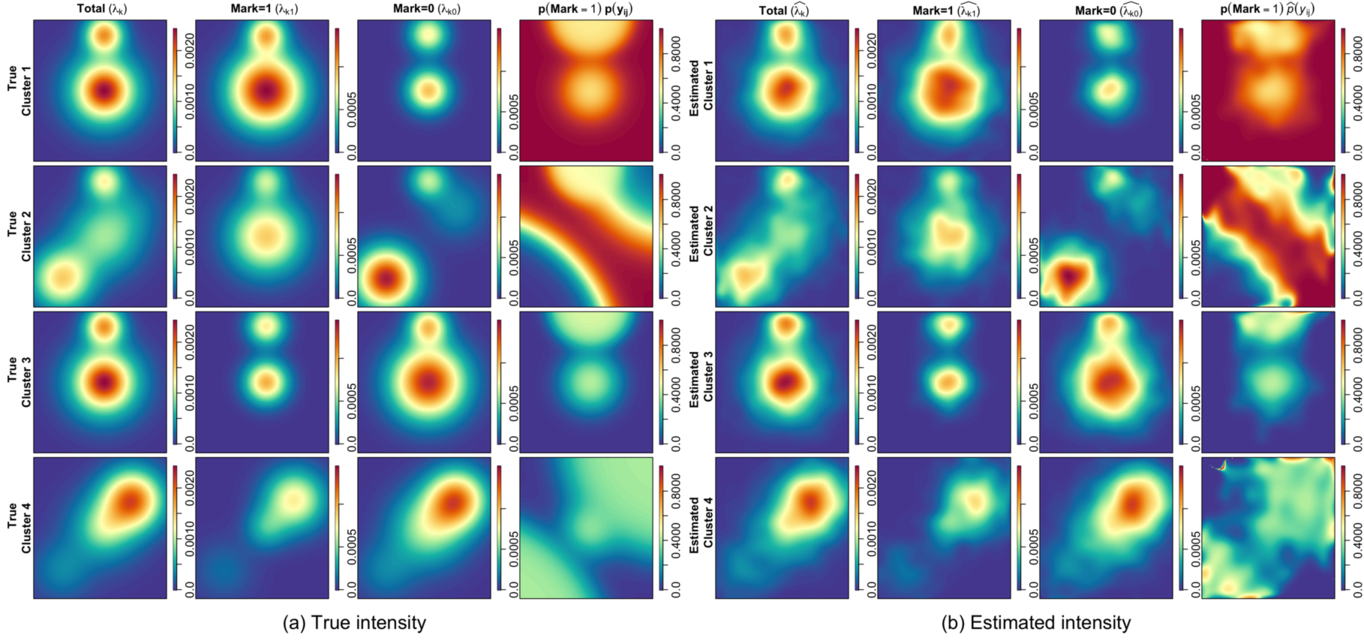}
    \caption{Setting A, reduced-sample setting. The data-generating cluster structure and mark-specific intensity functions are the same as in Figure~\ref{fig:app-syn-a}, but the total number of observed events is reduced to roughly one-third to one-quarter of the full-sample size. For each row, the left four panels show the true cluster-specific surfaces and the right four panels show the corresponding fitted surfaces after label matching. Columns correspond to the total intensity, the mark-1 intensity, the mark-0 intensity, and the induced mark-probability surface $p(\text{mark}=1)$.}
    \label{fig:app-syn-a-reduced}
\end{figure*}

\begin{table}[ht]
\centering
\caption{Matched confusion matrix for Setting A, reduced-sample setting.}
\label{tab:app-syn-a-reduced-confusion}
\begin{tabular}{c|cccc}
\hline
True & \multicolumn{4}{c}{Estimated} \\
\cline{2-5}
     & 1 & 2 & 3 & 4 \\
\hline
1 & 29 & 0  & 0  & 0 \\
2 & 0  & 23 & 0  & 0 \\
3 & 0  & 0  & 29 & 0 \\
4 & 0  & 0  & 0  & 34 \\
\hline
\end{tabular}
\end{table}

\paragraph{Setting B.}
Figure~\ref{fig:app-syn-b} reports a substantially more challenging nine-cluster setting, containing 631 subjects and 289{,}617 events. The true surfaces include oscillatory, ring-like, checkerboard, and multimodal spatial patterns, so this experiment tests whether the method can separate many latent groups while recovering complicated marked intensity geometry. 
We also consider a reduced-sample counterpart in Figure~\ref{fig:app-syn-b-reduced}, where the same underlying intensity structures are retained but the total number of observed events is reduced to roughly one-quarter of the full-sample size. 
In the full-sample setting, the fitted surfaces closely match the true total, mark-specific, and induced mark-probability surfaces across all nine clusters, and the matched confusion matrix in Table~\ref{tab:app-syn-b-confusion} is exactly diagonal. 
In the reduced-sample setting, the fitted surfaces still recover the geometric and mark-specific structures, although the smaller sample size leads to mild over-clustering: several true clusters are split into multiple estimated components. As shown in Table~\ref{tab:app-syn-b-reduced-confusion}, these additional components are nearly pure with respect to the true labels, so the degradation is mainly over-splitting of small or complex groups rather than mixing between different true clusters.
This illustrates that the method remains stable under substantially reduced event-level information, with the main degradation appearing as a small over-clustering effect rather than a failure to recover the underlying intensity geometry.

\begin{figure*}[ht]
    \centering
    \includegraphics[width=\textwidth]{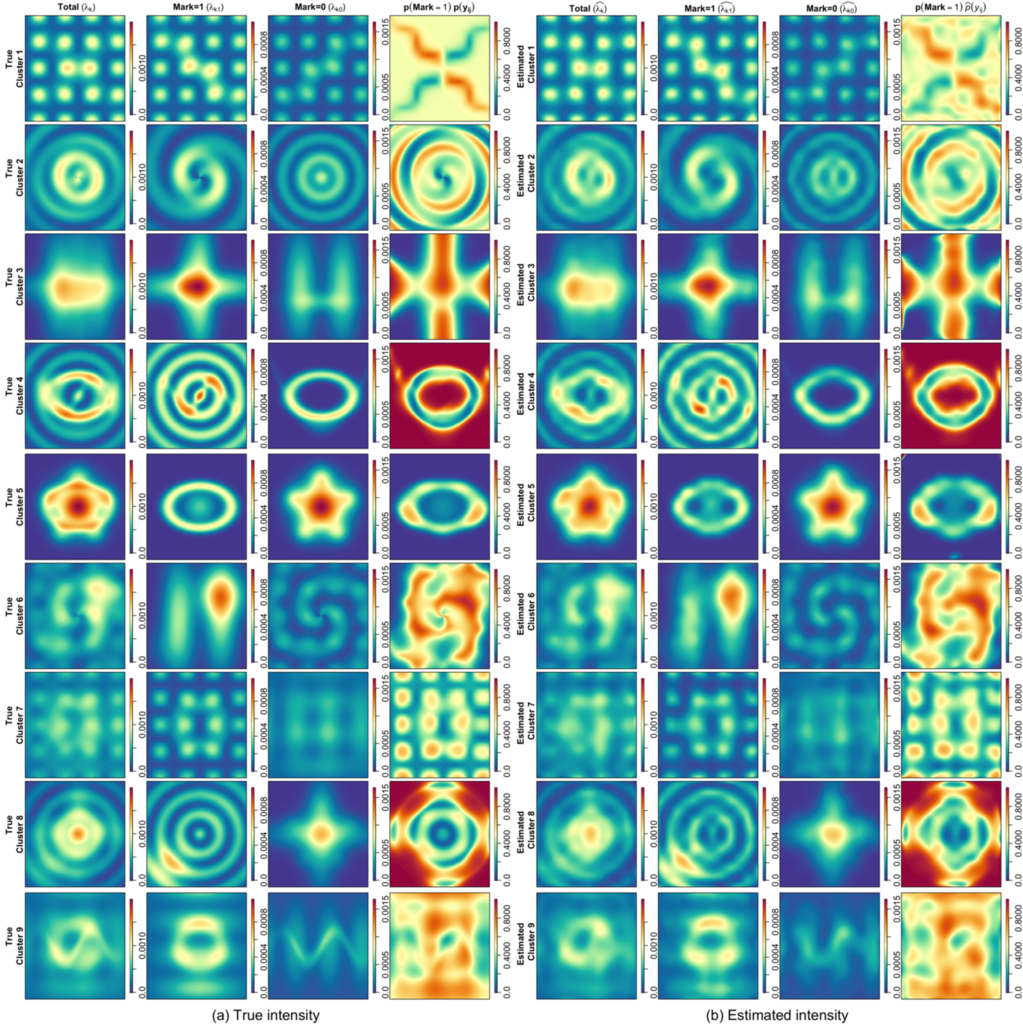}
    \caption{Setting B. For each row, the left four panels (a) show the true cluster-specific surfaces and the right four panels (b) show the corresponding fitted surfaces after label matching. The four columns correspond to the total intensity, the mark-1 intensity, the mark-0 intensity, and the induced mark-probability surface \(p(\text{mark}=1)\).}
    \label{fig:app-syn-b}
\end{figure*}

\begin{table}[ht]
\centering
\caption{Matched confusion matrix for Setting B.}
\label{tab:app-syn-b-confusion}
\begin{tabular}{c|ccccccccc}
\hline
True & \multicolumn{9}{c}{Estimated} \\
\cline{2-10}
     & 1 & 2 & 3 & 4 & 5 & 6 & 7 & 8 & 9 \\
\hline
1 & 64 & 0  & 0  & 0  & 0  & 0  & 0  & 0  & 0 \\
2 & 0  & 85 & 0  & 0  & 0  & 0  & 0  & 0  & 0 \\
3 & 0  & 0  & 71 & 0  & 0  & 0  & 0  & 0  & 0 \\
4 & 0  & 0  & 0  & 54 & 0  & 0  & 0  & 0  & 0 \\
5 & 0  & 0  & 0  & 0  & 87 & 0  & 0  & 0  & 0 \\
6 & 0  & 0  & 0  & 0  & 0  & 95 & 0  & 0  & 0 \\
7 & 0  & 0  & 0  & 0  & 0  & 0  & 39 & 0  & 0 \\
8 & 0  & 0  & 0  & 0  & 0  & 0  & 0  & 71 & 0 \\
9 & 0  & 0  & 0  & 0  & 0  & 0  & 0  & 0  & 65 \\
\hline
\end{tabular}
\end{table}

\begin{figure*}[ht]
    \centering
    \includegraphics[width=0.97\textwidth]{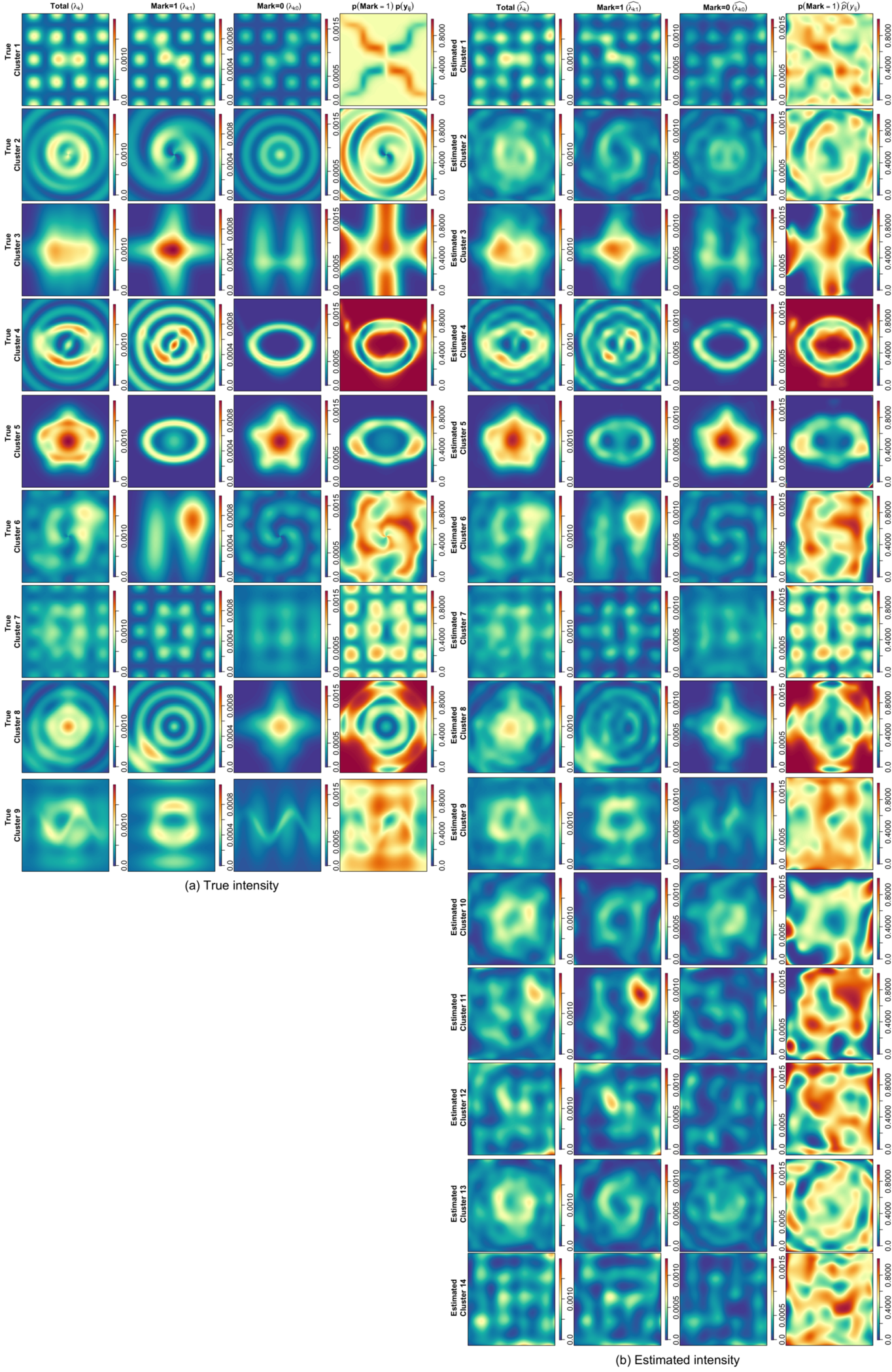}
    \caption{Setting B, reduced-sample setting. The data-generating cluster structure and mark-specific intensity functions are the same as in Figure~\ref{fig:app-syn-b}, but the total number of observed events is reduced to roughly one-quarter of the full-sample size. For each row, the left four panels show the true cluster-specific surfaces and the right four panels show the corresponding fitted surfaces after label matching. Columns correspond to the total intensity, the mark-1 intensity, the mark-0 intensity, and the induced mark-probability surface $p(\text{mark}=1)$.}
    \label{fig:app-syn-b-reduced}
\end{figure*}

\begin{table}[ht]
\centering
\caption{Matched confusion matrix for Setting B, reduced-sample setting. Extra estimated components correspond to mild over-splitting of several true clusters rather than mixing across true clusters.}
\label{tab:app-syn-b-reduced-confusion}
\begin{tabular}{c|cccccccccccccc}
\hline
True & \multicolumn{14}{c}{Estimated} \\
\cline{2-15}
     & 1 & 2 & 3 & 4 & 5 & 6 & 7 & 8 & 9 & 10 & 11 & 12 & 13 & 14\\
\hline
1 & 18 & 0  & 0  & 0  & 0  & 0  & 0  & 0  & 0 & 0 & 0 & 1 & 0 & 2 \\
2 & 0  & 13 & 0  & 0  & 0  & 0  & 0  & 0  & 0 & 1 & 0 & 0 & 9 & 0 \\
3 & 0  & 0  & 25 & 0  & 0  & 0  & 0  & 0  & 0 & 0 & 0 & 0 & 0 & 0 \\
4 & 0  & 0  & 0  & 48 & 0  & 0  & 0  & 0  & 0 & 0 & 0 & 0 & 0 & 0 \\
5 & 0  & 0  & 0  & 0  & 20 & 0  & 0  & 0  & 0 & 0 & 0 & 0 & 0 & 0 \\
6 & 0  & 0  & 0  & 0  & 0  & 45 & 0  & 0  & 0 & 0 & 3 & 0 & 0 & 0 \\
7 & 0  & 1  & 0  & 0  & 0  & 0  & 46 & 0  & 0 & 0 & 0 & 0 & 0 & 0 \\
8 & 0  & 0  & 0  & 0  & 0  & 0  & 0  & 18 & 0 & 0 & 0 & 0 & 0 & 0 \\
9 & 0  & 0  & 0  & 0  & 0  & 0  & 0  & 0  & 34 & 0 & 0 & 0 & 0 & 0 \\
\hline
\end{tabular}
\end{table}

\paragraph{Setting C.}
Figure~\ref{fig:app-syn-c} shows a seven-cluster configuration in which the mark-specific intensities are highly irregular and visually close to random spatial fields. This is the least structured of the representative settings and therefore provides a direct test of the model's flexibility. 
We also consider a reduced-sample counterpart in Figure~\ref{fig:app-syn-c-reduced}, where the same underlying cluster structure and mark-specific intensity functions are retained but the total number of observed events is reduced to roughly one-quarter of the full-sample size. 
In the full-sample setting, the fitted surfaces recover both global and local features of the true total, mark-specific, and induced mark-probability surfaces, and the matched confusion matrix in Table~\ref{tab:app-syn-c-confusion} is exactly diagonal. 
In the reduced-sample setting, the fitted surfaces still track the irregular spatial patterns well and recover every true cluster. The main degradation is a mild over-splitting effect: Table~\ref{tab:app-syn-c-reduced-confusion} shows that True Clusters 2 and 4 each produce a small additional estimated component containing only two subjects, while the remaining subjects are assigned to the correct matched components. Thus, even under substantially reduced event-level information, the method preserves the true cluster structure up to a few small satellite components rather than mixing distinct true clusters. Together with Settings A and B, this experiment shows that recovery is not limited to smooth or highly interpretable shapes, but extends to substantially more irregular marked intensity configurations.

\begin{figure*}[ht]
    \centering
    \includegraphics[width=\textwidth]{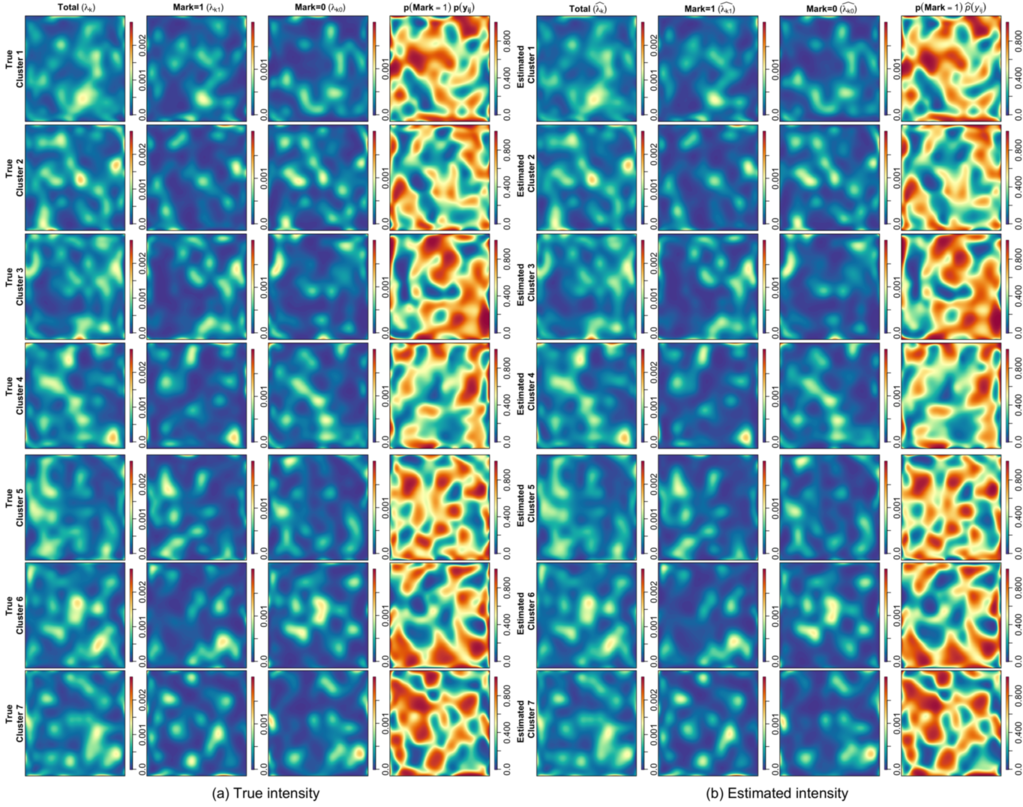}
    \caption{Setting C, full-sample setting. For each row, the left four panels (a) show the true cluster-specific surfaces and the right four panels (b) show the corresponding fitted surfaces after label matching. The four columns correspond to the total intensity, the mark-1 intensity, the mark-0 intensity, and the induced mark-probability surface \(p(\text{mark}=1)\).}
    \label{fig:app-syn-c}
\end{figure*}

\begin{table}[ht]
\centering
\caption{Matched confusion matrix for Setting C.}
\label{tab:app-syn-c-confusion}
\begin{tabular}{c|ccccccc}
\hline
True & \multicolumn{7}{c}{Estimated} \\
\cline{2-8}
     & 1 & 2 & 3 & 4 & 5 & 6 & 7 \\
\hline
1 & 55 & 0  & 0  & 0  & 0  & 0  & 0 \\
2 & 0  & 57 & 0  & 0  & 0  & 0  & 0 \\
3 & 0  & 0  & 48 & 0  & 0  & 0  & 0 \\
4 & 0  & 0  & 0  & 47 & 0  & 0  & 0 \\
5 & 0  & 0  & 0  & 0  & 46 & 0  & 0 \\
6 & 0  & 0  & 0  & 0  & 0  & 53 & 0 \\
7 & 0  & 0  & 0  & 0  & 0  & 0  & 62 \\
\hline
\end{tabular}
\end{table}

\begin{figure*}[ht]
    \centering
    \includegraphics[width=\textwidth]{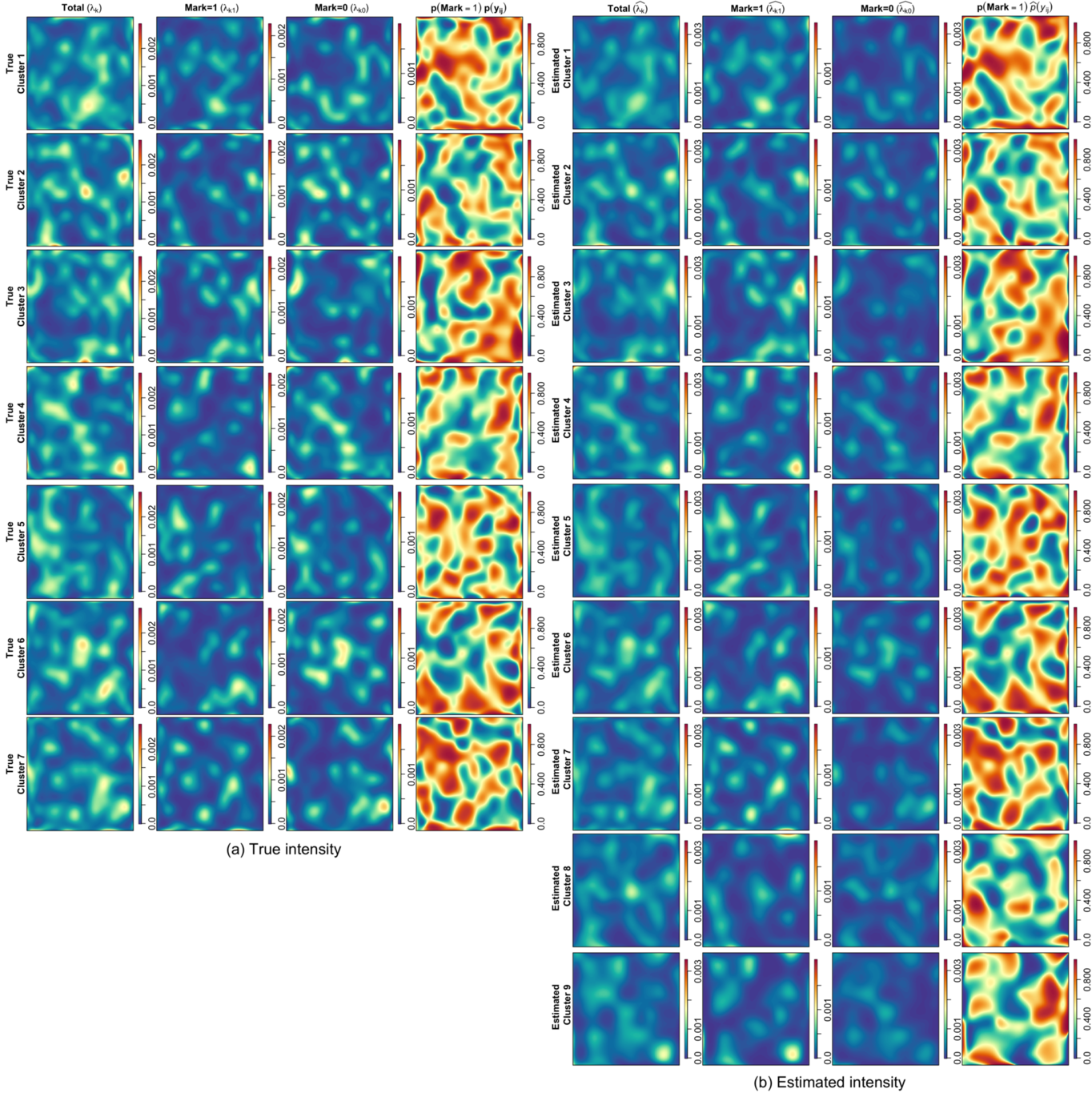}
    \caption{Setting C, reduced-sample setting. The data-generating cluster structure and mark-specific intensity functions are the same as in Figure~\ref{fig:app-syn-c}, but the total number of observed events is reduced to roughly one-quarter of the full-sample size. For each row, the left four panels show the true cluster-specific surfaces and the right four panels show the corresponding fitted surfaces after label matching. Columns correspond to the total intensity, the mark-1 intensity, the mark-0 intensity, and the induced mark-probability surface $p(\text{mark}=1)$.}
    \label{fig:app-syn-c-reduced}
\end{figure*}

\begin{table}[ht]
\centering
\caption{Matched confusion matrix for Setting C, reduced-sample setting. Extra estimated components correspond to small satellite splits of True Clusters 2 and 4.}
\label{tab:app-syn-c-reduced-confusion}
\begin{tabular}{c|ccccccccc}
\hline
True & \multicolumn{9}{c}{Estimated} \\
\cline{2-10}
     & 1 & 2 & 3 & 4 & 5 & 6 & 7 & 8 & 9\\
\hline
1 & 24 & 0  & 0  & 0  & 0  & 0  & 0  & 0 & 0 \\
2 & 0  & 23 & 0  & 0  & 0  & 0  & 0  & 2 & 0 \\
3 & 0  & 0  & 23 & 0  & 0  & 0  & 0  & 0 & 0 \\
4 & 0  & 0  & 0  & 28 & 0  & 0  & 0  & 0 & 2 \\
5 & 0  & 0  & 0  & 0  & 30 & 0  & 0  & 0 & 0 \\
6 & 0  & 0  & 0  & 0  & 0  & 30 & 0  & 0 & 0 \\
7 & 0  & 0  & 0  & 0  & 0  & 0  & 24  & 0 & 0 \\
\hline
\end{tabular}
\end{table}

Across the six appendix experiments, the same qualitative result appears. In the full-sample settings, the proposed method accurately reconstructs the cluster-specific marked intensity surfaces, recovers the correct number of occupied clusters, and assigns subjects to the correct latent groups after label matching. In the reduced-sample counterparts, the dominant component of each true cluster is still recovered and the fitted surfaces preserve the main mark-specific spatial geometry, while the main degradation appears as mild over-clustering into small, nearly label-pure satellite components rather than mixing between distinct true clusters.

Together, these experiments show that the proposed method remains robust across complementary sources of difficulty: mark-swapped cluster pairs where total intensity alone is insufficient, many-cluster settings with highly structured intensity surfaces, highly irregular random intensity fields, and substantially reduced event-level sample sizes.


\subsection{Comparison with baselines}
\label{app:baseline-comparison}

We further evaluated the subject-level clustering performance of the proposed DPM-MPPP method against four feature-based clustering baselines. The goal of this experiment was to assess whether the proposed likelihood-based marked point-process model provides more accurate clustering than baseline methods. 

We considered four baselines. The first two baselines use marked binned-count features. Specifically, the spatial domain was divided into a $10\times 10$ grid, and for each subject we computed the offset-normalized cell counts separately for mark $0$ and mark $1$. This yielded a $200$-dimensional feature vector per subject. These features were clustered by (i) $K$-means with the true number of clusters supplied, and (ii) a finite Gaussian mixture model with the true number of clusters supplied. The other two baselines use kernel-smoothed intensity features. For each subject and each mark, we estimated a spatial intensity surface using kernel smoothing and evaluated the two mark-specific intensity estimates on the same $10\times 10$ grid, again yielding a $200$-dimensional offset-normalized vector. These features were clustered by (iii) $K$-means with the true number of clusters supplied, and (iv) a finite Gaussian mixture model with the true number of clusters supplied. Thus, all four baselines use the event-level mark information, and the $K$-means and finite-mixture baselines are given the true number of clusters. They therefore represent strong oracle feature-based competitors.

We used the same six synthetic configurations as in Appendix~F: A, A reduced, B, B reduced, C, and C reduced. 
These settings use the same mark-specific intensity functions as the previous synthetic experiments. Setting A contains four clusters and includes a mark-swapped cluster pair; Setting B contains nine clusters with more complicated spatial and mark-specific intensity surfaces; and Setting C contains seven clusters with highly irregular random intensity surfaces. The reduced counterparts preserve the same underlying cluster-specific intensity functions while reducing the amount of observed data. For each setting, we generated $100$ independent datasets and fit the proposed method and the four baselines to the same datasets. 
We report clustering purity 
in Table~\ref{tab:baseline-purity}. 
For a predicted partition $\widehat z$, clustering purity is defined as
\[
    \mathrm{Purity}
    =
    \frac{1}{n}
    \sum_{\ell}
    \max_k
    \#\{i: \widehat z_i=\ell,\ z_i=k\},
\]
where $z_i$ is the true cluster label. Larger values indicate better agreement with the true clustering.

\begin{table}[t]
\centering
\caption{
Clustering purity in the synthetic comparison experiment. Entries are mean (standard deviation) across 100 repeated datasets, rounded to three decimal places. The proposed DPM-MPPP method outperforms all four marked feature-based baselines in all six settings.
}
\label{tab:baseline-purity}
\resizebox{\textwidth}{!}{
\begin{tabular}{lccccc}
\toprule
Setting
& Proposed DPM-MPPP
& Binned + $K$-means
& Binned + FGM
& KDE + $K$-means
& KDE + FGM \\
\midrule
A
& $\mathbf{1.000 (0.001)}$
& $0.857 (0.047)$
& $0.724 (0.013)$
& $0.941 (0.043)$
& $0.785 (0.054)$ \\
A reduced
& $\mathbf{0.984 (0.031)}$
& $0.721 (0.003)$
& $0.721 (0.001)$
& $0.887 (0.020)$
& $0.765 (0.049)$ \\
B
& $\mathbf{1.000 (0.001)}$
& $0.866 (0.030)$
& $0.646 (0.024)$
& $0.874 (0.018)$
& $0.764 (0.029)$ \\
B reduced
& $\mathbf{0.940 (0.020)}$
& $0.765 (0.026)$
& $0.609 (0.015)$
& $0.793 (0.018)$
& $0.745 (0.037)$ \\
C
& $\mathbf{1.000 (0.000)}$
& $0.909 (0.037)$
& $0.604 (0.007)$
& $0.892 (0.022)$
& $0.861 (0.042)$ \\
C reduced
& $\mathbf{0.945 (0.043)}$
& $0.679 (0.028)$
& $0.603 (0.005)$
& $0.749 (0.028)$
& $0.695 (0.041)$ \\
\bottomrule
\end{tabular}
}
\end{table}

Table~\ref{tab:baseline-purity} shows that the proposed DPM-MPPP achieves the highest clustering purity in all six synthetic settings. The advantage is especially clear in the reduced settings, where subject-level feature summaries become noisier. 
Even though the $K$-means and finite Gaussian mixture baselines are given the true number of clusters, their performance is still limited. 
In contrast, the proposed method directly models the replicated marked point patterns through shared cluster-level mark-specific intensity surfaces, allowing it to borrow information across subjects within a cluster.

Overall, these results indicate that the proposed DPM-MPPP is more robust than strong oracle feature-based baselines for recovering latent subject clusters in replicated marked point-process data.

\section{NBA shot-chart analysis}
\label{app:nba_real_data}

\paragraph{Literature review of basketball shot-chart analysis.}
\citet{Reich2006Spatial} modeled shot frequencies and efficiencies on a discretized court, and \citet{Miller2014Factorized} used shared nonnegative basis functions to represent shot-location intensities. \citet{Yin2022SpatialHomogeneity} proposed a flexible nonparametric estimator for individual shot-location intensities with spatial homogeneity, but their focus is intensity estimation rather than clustering or marks. \citet{Jiao2021Bayesian} jointly modeled shot intensity and shot outcome for individual players, but the clustering of players was conducted only as a secondary analysis of fitted parameters for the top 50 shooters. \citet{Hu2021GroupLearning} performed Bayesian group learning for shot selection via LGCPs, yet the procedure is explicitly two-stage and excludes lower-volume players. \citet{Yin2023MatrixClustering} cluster discretized shot-intensity matrices over rectangular court regions, \citet{Hu2023ZIPClustered} model zone-level shot counts through a zero-inflated Poisson regression mixture, and \citet{WongToi2023Joint} jointly analyze shot attempts and field-goal percentages over 12 predefined front-court regions while retaining only players with at least four shots in each region. More broadly, matrix- and region-based representations do not explicitly separate spatial allocation from total shooting volume, so differences in overall attempt frequency can enter the clustering target together with differences in spatial shape.

We apply the proposed DPM-MPPP model to the NBA shot-chart data from the 2024--2025 regular season. The data were collected via \texttt{nba\_api} Python package from the official NBA stats website and consist of 219{,}527 shot attempts taken by 566 players. For each player $i$, the observed shot chart is represented as marked point data $(\bm{Y}_i,\bm{m}_i)=\{(\bm y_{ij},m_{ij})\}_{j=1}^{N_i}$, where $\bm y_{ij}\in\mathcal{B}\subset\mathbb{R}^2$ denotes the shot location on the offensive half-court and $m_{ij}\in\{0,1\}$ indicates whether the shot was missed or made. To account for varying levels of exposure, we define the offset $T_i$ as the total regular-season minutes played by player $i$. The total number of shot attempts per player $N_i$ ranges from 1 to 1{,}656, with mean 387.9 and median 287. Although the raw data contain additional contextual variables such as team, game, quarter, time, shot region, and shot type, the present analysis uses only the player identifier, the shot location, and the made/missed indicator.

The model is fitted with truncation level $K=50$, concentration parameter $\alpha=1$, and hyperparameters $a_0=1$ and $b_0=0.005$. We use a tensor-product cubic B-spline basis with 10 interior knots on each axis, yielding $d=14\times 14=196$ basis functions, and the precision matrix $\Omega$ is taken to be the corresponding first-order Bayesian $P$-spline penalty matrix. For visualization, we summarize each fitted component by the plug-in intensity surface
$\widehat{\lambda}_{km}(\bm y) := \bigl(\bm B(\bm y)^\top \widehat{\bm\theta}_{km}\bigr)^2$,
together with the total shot-selection surface $\widehat{\lambda}_k(\bm y)=\widehat{\lambda}_{k0}(\bm y)+\widehat{\lambda}_{k1}(\bm y)$ and the induced spatial field-goal probability surface
$\widehat{p}_k(\bm y)= \frac{\widehat{\lambda}_{k1}(\bm y)}{\widehat{\lambda}_{k0}(\bm y)+\widehat{\lambda}_{k1}(\bm y)}$.
Under the activity criterion $\sum_{i=1}^n \nu_{ik}>1$, 16 clusters are active in the fitted model.

Figure~\ref{fig:app-cluster-summary} displays the estimated total intensity, make intensity, miss intensity, and success-probability surfaces for the 15 major clusters containing at least five players. The central empirical finding is that the clustering recovers broad offensive roles while also resolving meaningful within-role heterogeneity. In particular, players with broadly similar roles are further separated according to where they concentrate shot attempts and where they convert more efficiently. In other words, the fitted clusters are determined by a finer joint pattern of spatial preference and location-specific success probabilities. Representative players for each cluster are listed in Table~\ref{tab:app-representative-players}.

\begin{figure*}[ht]
  \centering
  \includegraphics[width=\textwidth]{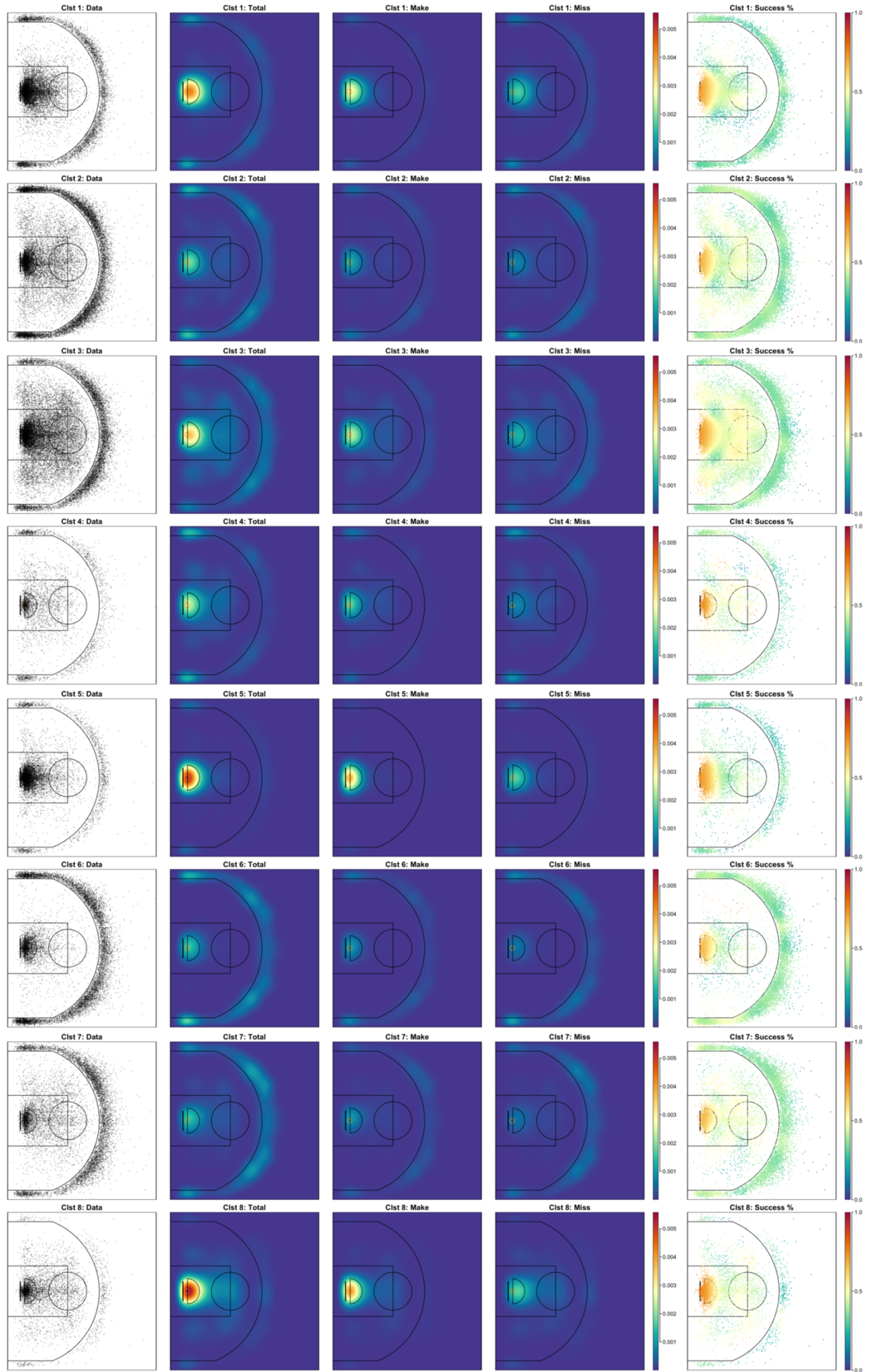}
  \caption{Estimated total intensity, make intensity, miss intensity, and implied success-probability surfaces for the 15 major clusters containing at least five players. For each cluster, the four panels summarize total shot allocation, made-shot allocation, missed-shot allocation, and the induced success pattern.}
  \label{fig:app-cluster-summary}
\end{figure*}

\begin{figure*}[ht]
  \ContinuedFloat
  \centering
  \includegraphics[width=\textwidth]{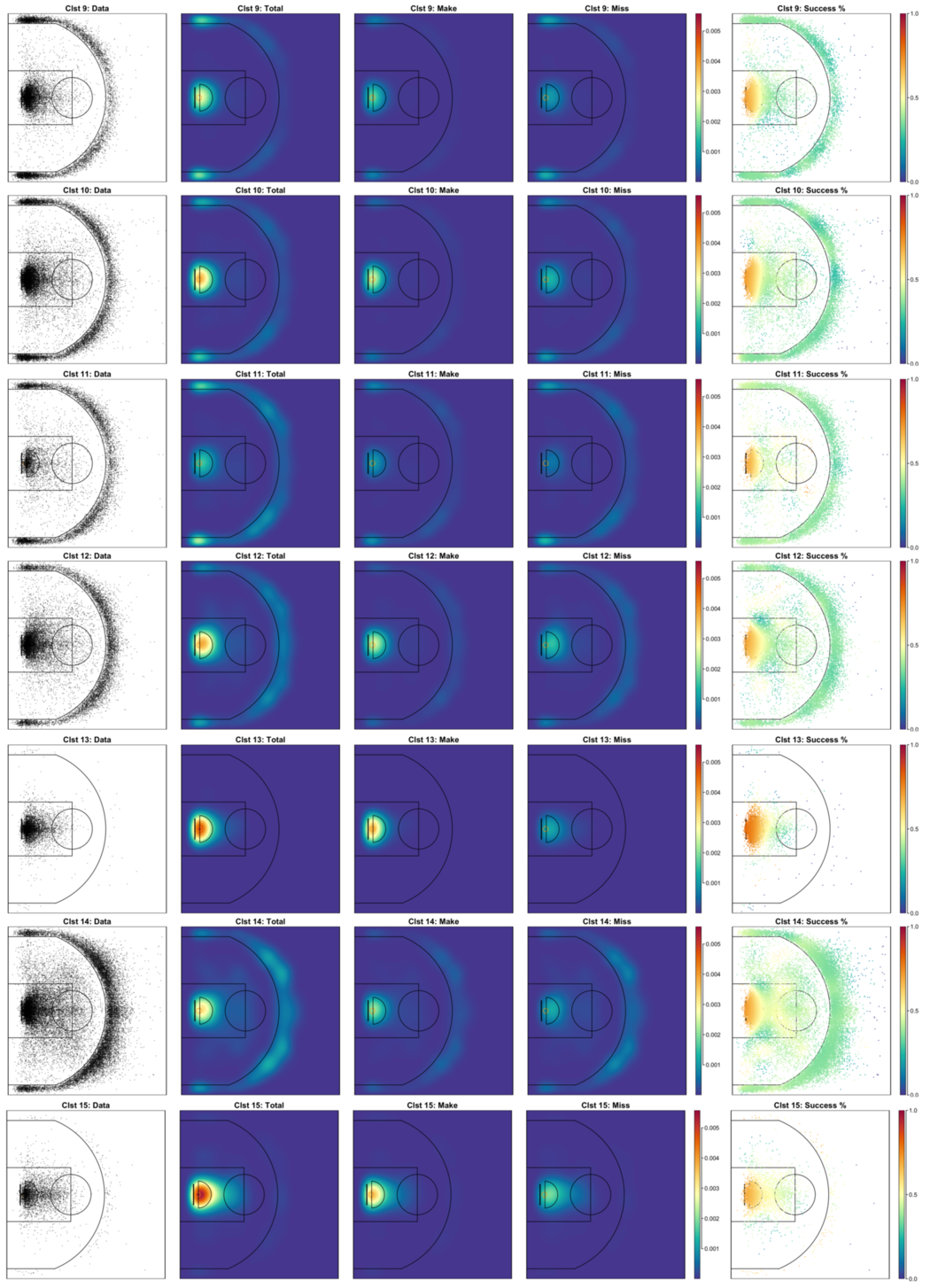}
  \caption{(Continued)}
  \label{fig:app-cluster-summary2}
\end{figure*}

Figures~\ref{fig:app-scorers}, \ref{fig:app-bigs}, and~\ref{fig:app-shooters} illustrate how the fitted clusters resolve meaningful within-role heterogeneity. Among ace scorers, Clusters 3, 7, and 14 all correspond to primary scoring profiles, but they differ in how offensive mass is allocated across the paint, middle area, and three-point region. Thus, players with broadly similar roles as team-level scoring options are further separated according to finer spatial shot-selection and make--miss structure. Among big-man clusters, Clusters 5, 13, and 15 are all strongly paint-centered, but they exhibit distinct secondary tendencies: Cluster 13 is almost entirely concentrated near the basket, Cluster 5 remains primarily rim-oriented while adding occasional corner-three activity, and Cluster 15 shows more short-middle involvement together with occasional above-the-break perimeter attempts. Finally, Clusters 6 and 11 represent shooter-type profiles with broadly similar perimeter-oriented patterns, while Cluster 11 shows slightly stronger mass in the corners and middle area. 
The success-probability panels also reveal cluster-specific differences in location-dependent shot efficiency. In particular, high-success regions are concentrated near the rim for the big-man clusters, while the shooter clusters display higher success rate along the arc.
Taken together, these results show that the proposed marked point-process clustering framework identifies interpretable player archetypes while also distinguishing subtle within-role variation in spatial shot allocation and location-specific shot outcomes.

\begin{figure*}[ht]
  \centering
  \includegraphics[width=0.8\textwidth]{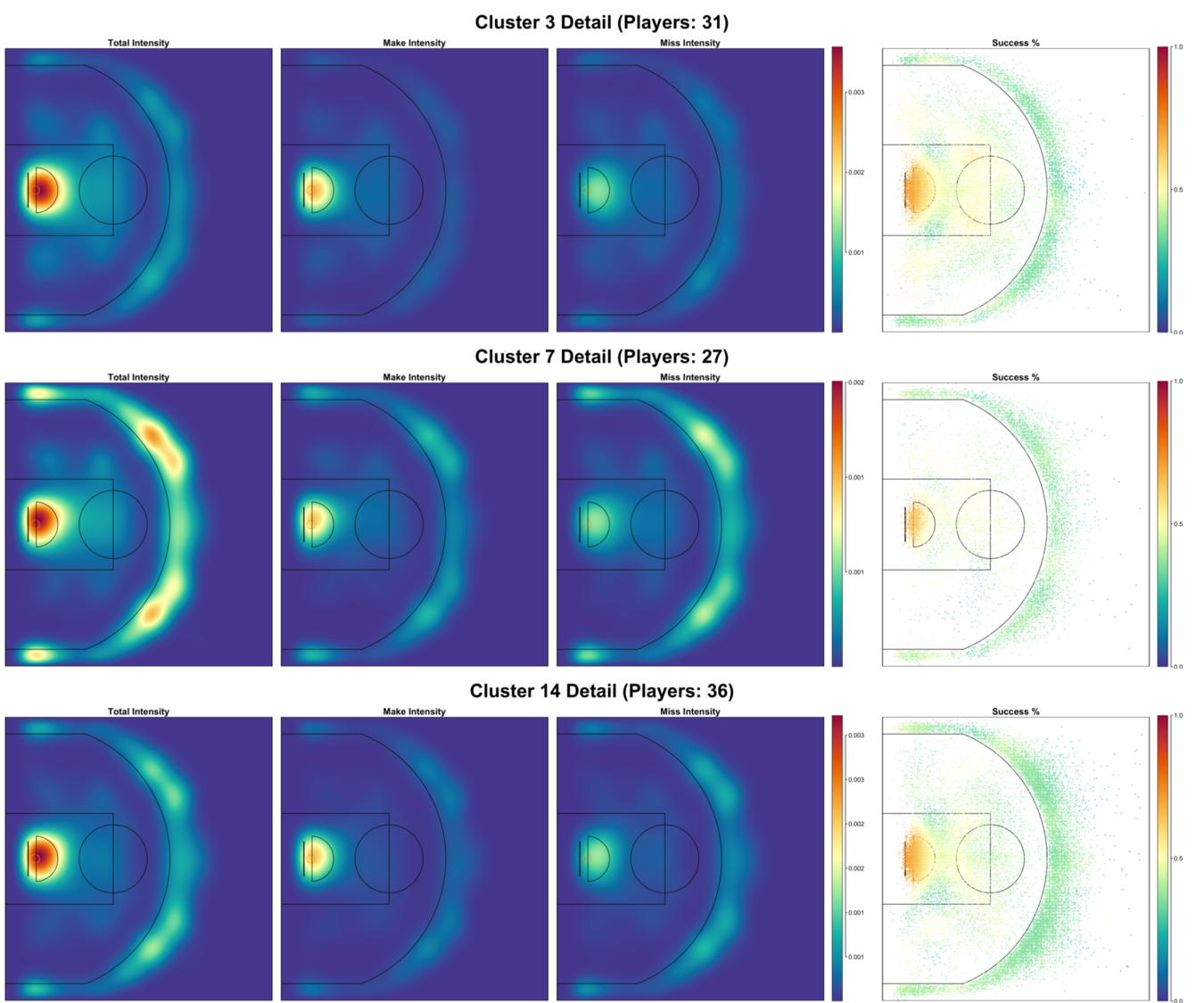}
  \caption{Detailed total, make, miss intensity surfaces, and success-probability maps for the three scorer clusters, Clusters 3, 7, and 14. These clusters represent primary scoring profiles but differ in the relative concentration of shot attempts across the paint, middle area, and three-point region.}
  \label{fig:app-scorers}
\end{figure*}

\begin{figure*}[ht]
  \centering
  \includegraphics[width=0.8\textwidth]{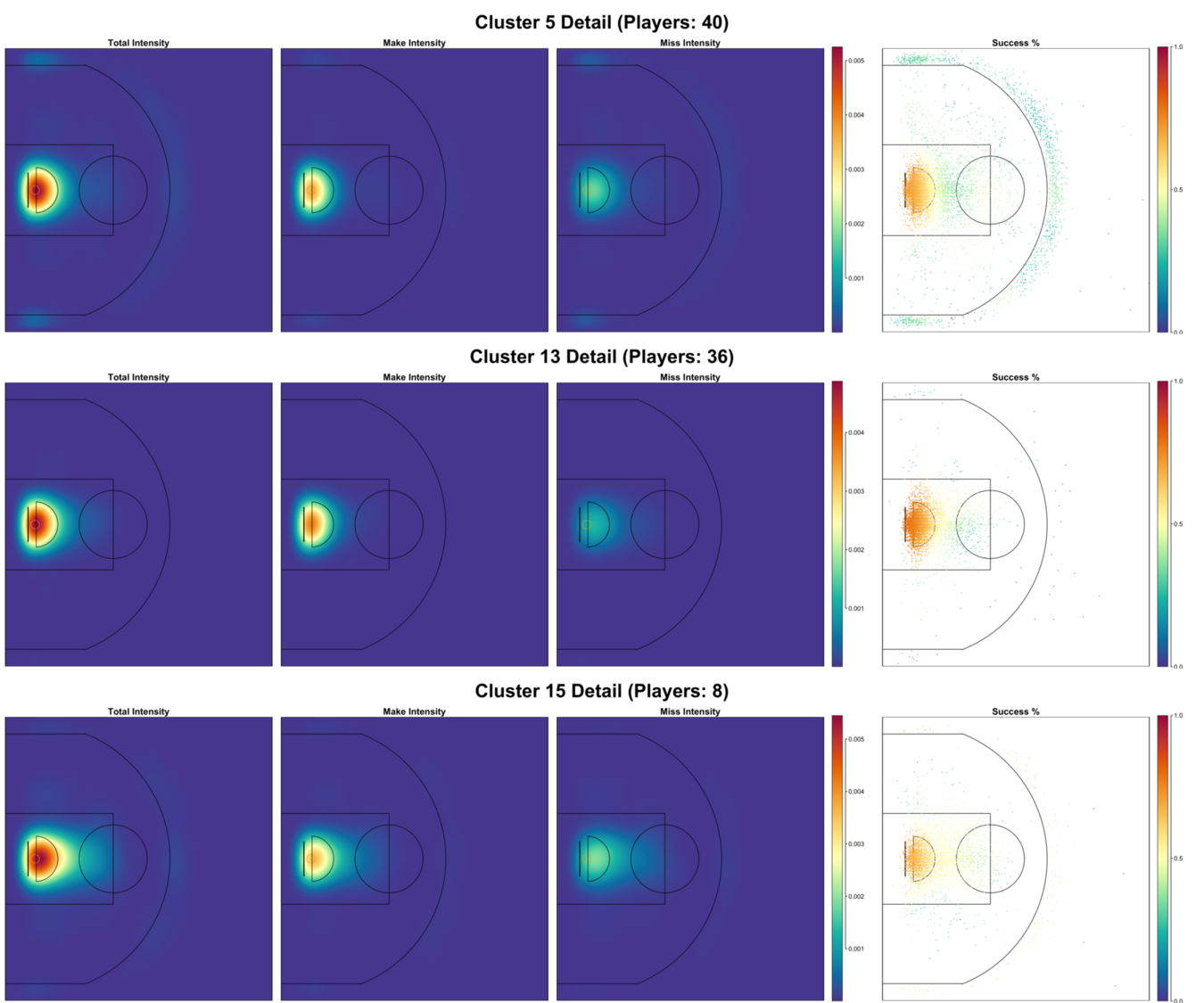}
  \caption{Detailed total, make, miss intensity surfaces, and success-probability maps for the three big-man clusters, Clusters 5, 13, and 15. Cluster 13 is almost entirely concentrated near the basket, Cluster 5 is primarily rim-oriented with occasional corner-three activity, and Cluster 15 includes more short-middle attempts together with occasional above-the-break perimeter activity.}
  \label{fig:app-bigs}
\end{figure*}

\begin{figure*}[ht]
  \centering
  \includegraphics[width=0.8\textwidth]{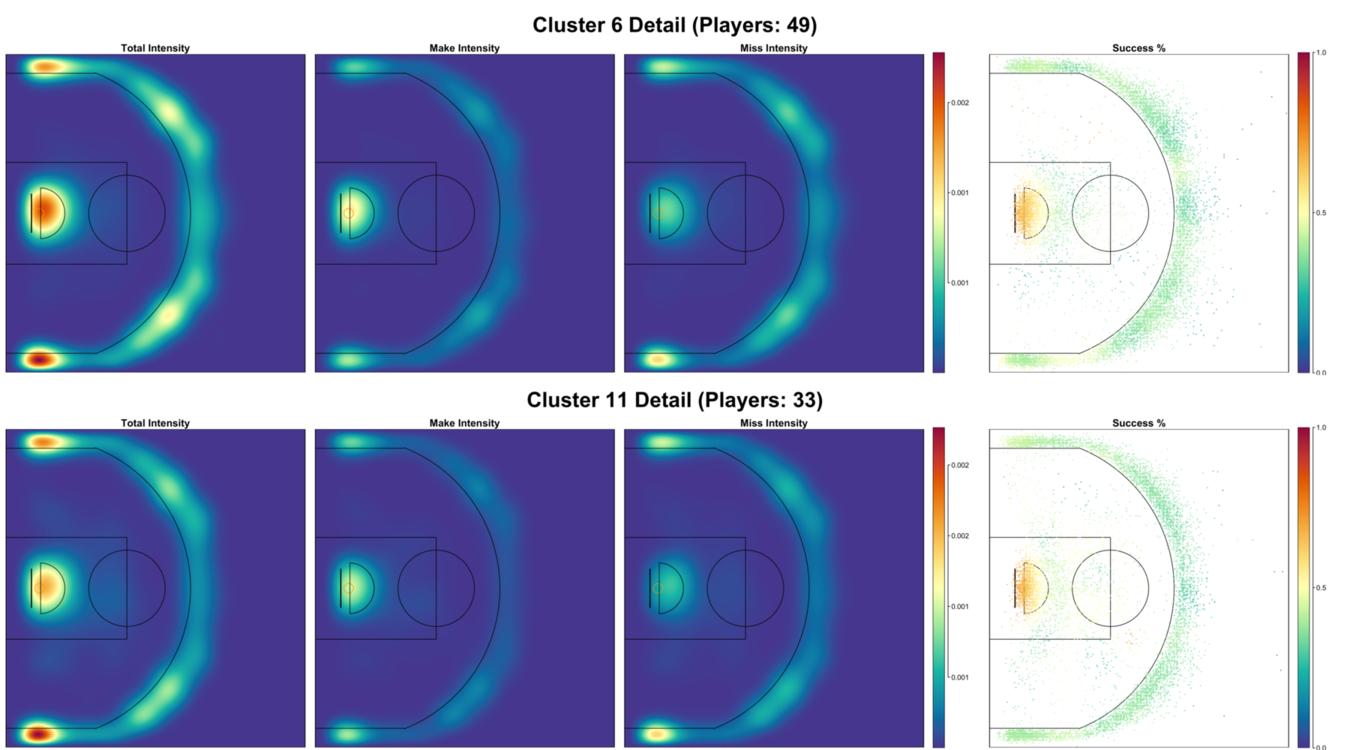}
  \caption{Detailed total, make, miss intensity surfaces, and success-probability maps for the two shooter clusters, Clusters 6 and 11. The two clusters have similar perimeter-oriented profiles, while Cluster 11 exhibits slightly stronger corner and middle-area intensity.}
  \label{fig:app-shooters}
\end{figure*}

\begin{table*}[t]
\centering
\caption{Representative players for the 15 major clusters. For each cluster, the five players with the largest shot counts are listed.}
\label{tab:app-representative-players}
\small
\begin{tabular}{c p{0.82\textwidth}}
\hline
Cluster & Top-5 representative players by the number of field goal attempts \\
\hline
1 & RJ Barrett, Julius Randle, Dyson Daniels, Evan Mobley, Josh Giddey \\
2 & Dillon Brooks, Devin Vassell, Klay Thompson, Quentin Grimes, Keegan Murray \\
3 & Shai Gilgeous-Alexander, Cade Cunningham, Devin Booker, DeMar DeRozan, LeBron James \\
4 & Mikal Bridges, Jaden McDaniels, Tobias Harris, Andrew Nembhard, Bruce Brown \\
5 & Domantas Sabonis, Amen Thompson, Onyeka Okongwu, Jimmy Butler III, Nic Claxton \\
6 & Julian Champagnie, Duncan Robinson, Jalen Wilson, Donte DiVincenzo, Royce O'Neale \\
7 & Malik Beasley, Tyrese Haliburton, Derrick White, Payton Pritchard, Brook Lopez \\
8 & Giannis Antetokounmpo, Bam Adebayo, Anthony Davis, Jonas Valan\v{c}i\={u}nas, T.J. McConnell \\
9 & Toumani Camara, Harrison Barnes, Derrick Jones Jr., Ochai Agbaji, Ziaire Williams \\
10 & OG Anunoby, Russell Westbrook, Christian Braun, Keon Johnson, Zaccharie Risacher \\
11 & Buddy Hield, Jaylen Wells, Georges Niang, Nickeil Alexander-Walker, Kevin Huerter \\
12 & Miles Bridges, Michael Porter Jr., Desmond Bane, Stephon Castle, Norman Powell \\
13 & Jarrett Allen, Jalen Duren, Rudy Gobert, Yves Missi, Daniel Gafford \\
14 & Anthony Edwards, Jayson Tatum, Jalen Green, Tyler Herro, Donovan Mitchell \\
15 & Alperen Sengun, Ivica Zubac, Jakob Poeltl, Zion Williamson, Isaiah Hartenstein \\
\hline
\end{tabular}
\end{table*}


\end{document}